\theoremstyle{plain}
\newtheorem{theorem}{Theorem}[section]
\newtheorem{lemma}[theorem]{Lemma}
\newtheorem{proposition}[theorem]{Proposition}
\newtheorem{corollary}[theorem]{Corollary}
\newtheorem{assumption}[theorem]{Assumption}
\theoremstyle{definition}
\newtheorem{definition}[theorem]{Definition}
\newtheorem{remark}[theorem]{Remark}
\title{Sparse Probabilistic Coalition Structure Generation: Bayesian Greedy Pursuit and $\ell_1$ Relaxations}
\author[1,*]{Angshul Majumdar}
\affil[1]{IIIT Delhi, India}
\affil[*]{angshul@iiitd.ac.in}
\keywords{coalition structure generation, probabilistic multi-agent systems, sparse regression, Bayesian greedy pursuit}
\begin{document}
\maketitle

\begin{abstract}
WWe study coalition structure generation (CSG) when coalition values are not given but must be learned from episodic observations. We model each episode as a sparse linear regression problem, where the realised payoff \(Y_t\) is a noisy linear combination of a small number of coalition contributions. This yields a probabilistic CSG framework in which the planner first estimates a sparse value function from \(T\) episodes, then runs a CSG solver on the inferred coalition set.
We analyse two estimation schemes. The first, Bayesian Greedy Coalition Pursuit (BGCP), is a greedy procedure that mimics orthogonal matching pursuit. Under a coherence condition and a minimum signal assumption, BGCP recovers the true set of profitable coalitions with high probability once \(T \gtrsim K \log m\), and hence yields welfare-optimal structures. The second scheme uses an \(\ell_1\)-penalised estimator; under a restricted eigenvalue condition, we derive \(\ell_1\) and prediction error bounds and translate them into welfare gap guarantees. We compare both methods to probabilistic baselines and identify regimes where sparse probabilistic CSG is superior, as well as dense regimes where classical least-squares approaches are competitive.
\end{abstract}

\section{Introduction}
\label{sec:intro}

Coalition structure generation (CSG) is a central optimisation problem
in cooperative game theory and multi-agent systems: given a set of
agents $N = \{1,\dots,n\}$ and a characteristic function
$v : 2^{N} \to \mathbb{R}$, the goal is to partition $N$ into
disjoint coalitions so as to maximise the total social welfare
$\sum_{C \in \mathcal{P}} v(C)$ over all coalition structures
$\mathcal{P}$.
The search space has size given by the Bell numbers, which grow
super-exponentially with $n$, and even in the simplest transferable
utility settings, CSG is NP-hard and demands sophisticated algorithmic
machinery~\citep{Sandholm1999Coalition,Rahwan2015CSGSurvey}.
Over the last two decades, an extensive line of work has developed
dynamic programming, branch-and-bound, and anytime algorithms for
exact or approximate CSG, with a strong emphasis on worst-case
complexity and pruning of the coalition structure search
space~\citep{rahwan2007nearoptimal,rahwan2009anytime,michalak2010distributed,banerjee2010csgexternalities,rahwan2012externalities,Rahwan2015CSGSurvey}.
More recent work in Algorithmica and related venues has further
explored structural and complexity aspects of coalition
structures~\citep{brandt2020popularcoalitions,aziz2024paretocoalition}, including
preferences over partitions and refined efficiency concepts.

In parallel, the multi-agent systems and JAAMAS communities have
developed a rich literature on \emph{coalition formation under
uncertainty}, where coalition values are not known a priori but must be
learned or inferred from data.
One influential line of work adopts a Bayesian reinforcement learning
viewpoint: agents repeatedly form coalitions, observe noisy payoffs,
and update beliefs about the types or capabilities of their potential
partners, leading to new concepts such as the Bayesian core and
sequentially optimal coalition formation policies~\citep{chalkiadakis2004bayesianCF,chalkiadakis2012sequentialCF}.
These models explicitly acknowledge that, in many applications,
coalition values are only revealed through episodic interaction, and
they provide principled ways to balance exploration and exploitation
in repeated coalition formation.
More generally, surveys and case studies in cooperative and
multi-objective MAS
highlight coalition formation as a key mechanism for distributed
decision making and resource allocation, with applications ranging from
energy systems to communication networks and robotics~\citep{sarkar2022coalitionSurvey,radulescu2020multiobjMAS}.

The work on \emph{probabilistic} CSG (PCSG) provides a complementary
formalism in which the source of uncertainty lies in stochastic agent
attendance rather than in noisy payoffs.
In the PCSG model of \citep{schwind2018pcsg}, each agent is active in
a given instance with a specified probability, and the goal is to find
a coalition structure maximising \emph{expected} social welfare under
this probabilistic attendance model.
Subsequent work has designed approximation algorithms and studied the
complexity of PCSG in more detail, including two approximation
schemes for PCSG and related variants in the JAAMAS
literature~\citep{matsumura2020approxPCSG,schwind2021pcsgComputation}.
These contributions firmly place PCSG alongside classical CSG as a
fundamental combinatorial optimisation problem, but they focus
primarily on the algorithmics of computing good coalition structures
within the probabilistic attendance model itself.

In many real-world settings, however, the uncertainty is neither purely
combinatorial (random attendance) nor purely epistemic in the sense of
unknown but fixed coalition values.
Instead, system designers often observe \emph{episodic} data of the
form
\[
  (X_t, Y_t), \qquad t = 1,\dots,T,
\]
where $X_t \in \{0,1\}^{m}$ encodes which of $m$ candidate coalitions
(or coalition features) are active in episode $t$, and
$Y_t \in \mathbb{R}$ is a noisy scalar reward generated by the
underlying characteristic function.
Examples include online advertising and sponsored search, where
different subsets of advertisers are shown in each round; logistics and
transport, where subsets of carriers collaborate on shared loads; and
robotic or sensing teams, where only a few coalition templates are
actually instantiated in each episode.
In these domains, one rarely observes full values $v(C)$ for all
$C \subseteq N$; instead, one observes a sequence of noisy aggregate
payoffs and must both \emph{learn} about the underlying cooperative
structure and \emph{exploit} it to compute high-welfare coalition
structures.
This motivates a probabilistic CSG framework that is explicitly
\emph{data-driven} and statistical in nature.

From a modelling perspective, one natural way to exploit structure in
episodic observations is to posit that only a small number of coalitions
have non-negligible marginal contributions to welfare.
Formally, if we index a collection of $m$ candidate coalitions by
$j \in [m]$ and write $\theta^\star_j$ for the (unknown) contribution
of coalition $j$, then the assumption that only $K \ll m$ of these are
truly relevant corresponds to a sparse parameter vector
$\theta^\star \in \mathbb{R}^{m}$.
The observed episodic payoffs can then be modelled as
\[
  Y_t = X_t^\top \theta^\star + \varepsilon_t,
\]
with sub-Gaussian noise $\varepsilon_t$, exactly in the spirit of
high-dimensional sparse linear models familiar from compressive sensing
and statistical learning.
In this view, probabilistic CSG becomes a \emph{two-level} problem:
(1) \emph{statistically}, we must recover the support and magnitude of
$\theta^\star$ from $T$ episodes, ideally using $T$ that scales only
logarithmically with $m$; and
(2) \emph{combinatorially}, we must then run a (possibly expensive) CSG
solver on the reduced coalition universe suggested by our estimator.

The existing PCSG and coalition-formation literature provides only
partial guidance for such episodic, high-dimensional problems.
Classical CSG algorithms assume that $v(C)$ is either known exactly or
can be queried at unit cost, and they focus on pruning the space of
coalition structures rather than learning $v$ from data~\citep{Rahwan2015CSGSurvey,rahwan2007nearoptimal,rahwan2012externalities,michalak2010distributed,banerjee2010csgexternalities,rahwan2009anytime,rahwan2010hybrid}.
PCSG algorithms, in turn, optimise expected welfare under known
attendance probabilities and do not address noisy observations of
episode-level payoffs~\citep{schwind2018pcsg,matsumura2020approxPCSG,schwind2021pcsgComputation}.
Bayesian reinforcement learning models for coalition formation
explicitly treat uncertainty and repetition, but they typically work
with relatively small coalition spaces and focus on learning stable
coalitions or equilibria rather than on high-dimensional sparse value
functions~\citep{chalkiadakis2004bayesianCF,chalkiadakis2012sequentialCF}.
Finally, algorithmic work in Algorithmica and related journals on
coalition structures in hedonic and voting games emphasises stability,
fairness, and Pareto efficiency
\citep{brandt2020popularcoalitions,aziz2024paretocoalition} rather than
sample-efficient welfare maximisation from episodic data.

This paper takes a different, explicitly \emph{probabilistic} and
\emph{high-dimensional} view of coalition structure generation.
We assume that coalition-level contributions are encoded in a sparse
linear model, and that a planner observes a finite number of noisy
episodes before committing to a coalition structure.
We then study two families of probabilistic CSG schemes:
(i) a Bayesian greedy coalition pursuit (BGCP) algorithm that mimics
Bayesian versions of orthogonal matching pursuit in sparse regression,
and (ii) an $\ell_{1}$-penalised episodic estimation pipeline inspired
by high-dimensional Lasso theory.
For both, we provide non-asymptotic welfare guarantees that are sharp in
the sparse regime and we explicitly compare their statistical and
computational trade-offs with natural episodic plug-in schemes and dense
least-squares baselines.
In doing so, we bridge ideas from high-dimensional statistics and sparse
approximation with the algorithmic CSG literature, and we position
probabilistic CSG as a statistical-computational problem in its own
right, distinct from but complementary to attendance-based PCSG and
classical cooperative game-theoretic formulations.

\section{Probabilistic Model and Sparse Formulation}
\label{sec:model}

In this section we make the probabilistic aspect of coalition structure generation explicit.
We first recall the classical (deterministic) formulation, then introduce a stochastic model
for coalition values and availability, and finally show how this leads to a sparse
(\(\ell_0\)) optimisation view that will be exploited by our algorithms.

\subsection{Classical Deterministic CSG in Brief}

Let \(N = \{1,\dots,n\}\) be a finite set of agents.
A \emph{coalition} is any non-empty subset \(S \subseteq N\), and a
\emph{coalition structure} is a partition
\(\mathcal{P} = \{S_1,\dots,S_k\}\) of \(N\) into disjoint coalitions.
In the classical setting, one is given a deterministic characteristic function
\[
  v : 2^N \setminus \{\emptyset\} \to \mathbb{R},
\]
and the goal is to find a coalition structure \(\mathcal{P}\) that maximises
the total value
\[
  W(\mathcal{P}) = \sum_{S \in \mathcal{P}} v(S).
\]
This \emph{coalition structure generation} (CSG) problem is well known to be
NP-hard in general and closely related to set partitioning and set packing;
see Sandholm et al.\ and Rahwan et al.\ for worst-case guarantees, algorithmic
techniques, and surveys of the CSG literature~\cite{Sandholm1999Coalition,Rahwan2015CSGSurvey}.
In this deterministic model, both the feasible partitions and the coalition
values \(v(S)\) are fixed and assumed to be known at optimisation time.

\subsection{A Probabilistic View: Random Values and Random Participation}

In many applications, coalition values are uncertain or only observed
through noisy samples.
Examples include repeated markets, task-allocation scenarios, and
learning-based environments, where the value that a coalition generates
depends on random external conditions or unobserved states.
To capture this, we work on a probability space \((\Omega,\mathcal{F},\mathbb{P})\)
and model coalition performance and feasibility as random objects.

\paragraph{Random coalition performance.}
For each non-empty coalition \(S \subseteq N\), we introduce a real-valued
random variable
\[
  V(S) : \Omega \to \mathbb{R},
\]
representing the realised payoff of coalition \(S\) in a single
\emph{episode} (e.g., one auction, one task instance, one time period).
We write
\[
  \mu(S) \coloneqq \mathbb{E}\bigl[V(S)\bigr]
\]
for the latent expected value of coalition \(S\).
The classical CSG model is the special case in which \(V(S) \equiv v(S)\)
almost surely, so that \(\mu(S) = v(S)\) and there is no randomness.

\paragraph{Random availability / participation.}
In many settings, not all coalitions are feasible in every episode (for example,
due to resource constraints, agents dropping out, or context-dependent feasibility).
We therefore introduce binary random variables
\[
  A(S) \in \{0,1\}, \qquad S \subseteq N,\; S \neq \emptyset,
\]
where \(A(S) = 1\) indicates that coalition \(S\) is available or active
in the current episode.
The collection \(A = \bigl(A(S)\bigr)_S\) is a random subset of potential
coalitions; its distribution models uncertainty about feasibility or
participation.

\paragraph{Observed data.}
We assume that we observe \(T\) independent and identically distributed episodes
\[
  \{(A_t, V_t)\}_{t=1}^T,
\]
where \(A_t(S)\) is the availability indicator and \(V_t(S)\) the realised payoff
of coalition \(S\) in episode \(t\).
In practice, we may observe only a subset of coalitions per episode
(e.g., bids actually submitted in an auction); this is naturally modelled by
taking \(A_t(S) = 0\) for coalitions that are infeasible or unobserved in
episode \(t\) (cf.\ combinatorial auctions and winner determination,
e.g.~\cite{Rothkopf1998Manageable,Sandholm2000WinnerDetermination}).

Given this probabilistic model, \emph{probabilistic coalition structure
generation} is the problem of using the observed episodes
\(\{(A_t,V_t)\}_{t=1}^T\) to select a coalition structure \(\mathcal{P}\)
with high expected welfare
\[
  W(\mathcal{P}) \coloneqq \sum_{S \in \mathcal{P}} \mu(S)
\]
under the joint law of \((A,V)\).
The key difference from classical CSG is that the expectations \(\mu(S)\)
are not directly given: they must be inferred statistically from data,
and the space of potentially profitable coalitions may be extremely large.

\subsection{Parametric Sparse Valuation Model}

To obtain a connection with sparse optimisation, we introduce a parametric
representation of coalition values, inspired by combinatorial auctions and
sparse linear models~\cite{Rothkopf1998Manageable,Sandholm2000WinnerDetermination,Donoho2006CS,CandesTao2005Decoding}.

Let \(\mathcal{C} = \{C_1,\dots,C_m\}\) be an index set of candidate coalitions
(or \emph{atoms}) in the coalition space.
For example, \(\mathcal{C}\) might consist of all coalitions up to some fixed
size \(k_{\max}\), or it might be a hand-crafted library of ``interesting''
coalitions obtained from domain knowledge or a preprocessing step.

We assume that, for each episode \(t\), we observe an aggregate welfare
\[
  Y_t = \sum_{j=1}^m X_t(C_j)\,\theta_j + \varepsilon_t,
\]
where
\begin{itemize}
  \item \(X_t(C_j) \in \mathbb{R}\) indicates the (possibly binary) activation
        level of coalition \(C_j\) in episode \(t\); in many applications
        we have \(X_t(C_j) \in \{0,1\}\), but more general real-valued
        designs are allowed and will be useful in the analysis;
  \item \(\theta_j \in \mathbb{R}\) is an unknown parameter representing the
        latent contribution of coalition \(C_j\) to welfare;
  \item \(\varepsilon_t\) is a noise term capturing unmodelled randomness
        (e.g., environmental fluctuations or approximation error).
\end{itemize}
Collecting \(T\) episodes, we obtain the linear model
\[
  Y = X \theta + \varepsilon,
\]
where \(Y \in \mathbb{R}^T\), \(X \in \mathbb{R}^{T\times m}\) (often with
binary entries \(X_{tj} \in \{0,1\}\) in our applications), and
\(\theta \in \mathbb{R}^m\).
For suitable choices of the design matrix \(X\), this subsumes both
classical set-packing / winner-determination formulations (each row encodes
a feasible allocation of coalitions to agents, as in combinatorial auctions
\cite{Rothkopf1998Manageable,Sandholm2000WinnerDetermination}) and
stochastic or sample-based settings where each row corresponds to an observed
episode or scenario.

We regard \(\theta\) as a parameter vector governing coalition contributions,
with an unknown ``true'' value \(\theta^\star \in \mathbb{R}^m\) and associated
support
\[
  S^\star \coloneqq \mathrm{supp}(\theta^\star)
  = \{ j \in \{1,\dots,m\} : \theta^\star_j \neq 0 \}.
\]
The sparse recovery problem is to infer \(S^\star\) (and the corresponding
coefficients) from the joint law of \((X,Y)\).
The probabilistic element is now explicit:
\(X\), \(Y\) and \(\varepsilon\) are random, and \(\theta^\star\) is an
unknown parameter to be inferred.

\subsection{Sparse Prior on Profitable Coalitions}

We are interested in situations where only a small number of coalitions
are truly profitable at scale.
This is both a modelling assumption (few structured patterns of cooperation
matter) and a computational necessity (we cannot afford to reason about an
exponential number of active coalitions).
We therefore posit that \(\theta^\star\) is sparse:
\begin{itemize}
  \item the support \(\mathrm{supp}(\theta^\star) = S^\star\) satisfies
        \(|S^\star| \leq K \ll m\).
\end{itemize}

From a Bayesian perspective, a natural way to encode this is via a
Bernoulli--Gaussian (spike-and-slab) prior, as in Bayesian pursuit
algorithms~\cite{HerzetDremeau2014BayesianPursuit}.
For each \(j\),
\[
  Z_j \sim \mathrm{Bernoulli}(p_j), \qquad
  \theta_j \mid Z_j = 1 \sim \mathcal{N}(0,\tau^2), \qquad
  \theta_j \mid Z_j = 0 \equiv 0,
\]
independently across \(j\).
Here the latent indicators \(Z_j\) select which coalitions are globally
``active'', while the Gaussian slab controls the magnitude of their
contributions.

Under this prior, the maximum a posteriori (MAP) estimate of \((Z,\theta)\)
can be shown, in the small-noise limit, to reduce to an \(\ell_0\)-regularised
least-squares problem of the form
\[
  \min_{\theta \in \mathbb{R}^m} \;
    \frac{1}{2}\,\|Y - X\theta\|_2^2 + \lambda \|\theta\|_0,
\]
for a suitable choice of \(\lambda > 0\) determined by the Bernoulli
parameters \(\{p_j\}\) and the noise variance.
This is precisely the standard sparse regression objective studied in the
compressed sensing and sparse approximation literature
(e.g.~\cite{Donoho2006CS,CandesTao2005Decoding}).

Thus, probabilistic coalition structure generation reduces, at the MAP level,
to identifying a sparse set of globally profitable coalitions from noisy
samples, under a Bayesian model that yields an \(\ell_0\)-penalised
optimisation problem.

\subsection{From Sparse Parameters to Coalition Structures}

Once we have inferred an estimate \(\hat{\theta}\), the corresponding
set of candidate coalitions is
\[
  \widehat{\mathcal{C}} \coloneqq
    \{ C_j \in \mathcal{C} : \hat{\theta}_j \neq 0 \}.
\]
On this reduced set \(\widehat{\mathcal{C}}\), we can now solve a
combinatorial coalition structure generation problem
\[
  \max_{\mathcal{P}} \; \sum_{S \in \mathcal{P}} \hat{\mu}(S)
  \quad \text{subject to} \quad
  \mathcal{P} \text{ is a partition of } N,\;
  S \in \widehat{\mathcal{C}} \text{ for all } S \in \mathcal{P},
\]
where \(\hat{\mu}(S)\) is an estimate of \(\mu(S)\), often taken to be
\(\hat{\theta}_j\) when \(S = C_j\).
This combinatorial problem can be attacked with standard dynamic programming,
branch-and-bound, or CSG-specific algorithms, but now on a sparsified
coalition space~\cite{Sandholm1999Coalition,Rahwan2015CSGSurvey}.

In this sense, probabilistic coalition structure generation has two coupled
components:
\begin{enumerate}
  \item a \emph{statistical phase} (sparse learning), in which we infer
        the set \(\widehat{\mathcal{C}}\) of promising coalitions by solving
        a sparse estimation problem under the probabilistic model for episodes;
  \item a \emph{combinatorial phase} (structure generation), in which we
        compute a high-welfare coalition structure using only coalitions
        in \(\widehat{\mathcal{C}}\).
\end{enumerate}
In later sections, we will provide Bayesian greedy (OMP-style) procedures
and \(\ell_1\)-regularised convex relaxations for the statistical phase,
together with probabilistic recovery guarantees under assumptions on
the random design matrix \(X\) that are analogous in spirit to restricted
isometry or incoherence conditions in compressed sensing
(e.g.~\cite{Donoho2006CS,CandesTao2005Decoding,HerzetDremeau2014BayesianPursuit}).
These results yield high-probability statements of the form that, with
probability at least \(1-\delta\) over the sampled episodes, our algorithms
recover the true set of globally profitable coalitions (or an
\(\varepsilon\)-optimal superset) within a given computational budget.

\section{Bayesian Greedy Pursuit and High-Probability Guarantees}
\label{sec:greedy}

In this section we instantiate the probabilistic sparse model of
Section~\ref{sec:model} with a concrete greedy algorithm for estimating
the sparse parameter vector \(\theta^\star\), and hence the set of
globally profitable coalitions.
The algorithm is an orthogonal matching pursuit (OMP)--type procedure
\cite{Pati1993OMP,TroppGilbert2007OMP,DavenportWakin2010OMP,CaiWang2011OMPNoise},
which we interpret as a greedy approximation to the MAP estimator under
the Bernoulli--Gaussian prior of Section~\ref{sec:model}
(cf.~\cite{HerzetDremeau2014BayesianPursuit}).
We then prove a high-probability support recovery result closely
tailored to our probabilistic coalition structure generation setting.

\subsection{Data Model, Support and Coherence}
\label{subsec:greedy-model}

We retain the linear model introduced in
Section~\ref{sec:model}.
For \(T\) episodes we observe
\begin{equation}
  Y = X \theta^\star + \varepsilon,
  \qquad
  Y \in \mathbb{R}^T,\;
  X \in \mathbb{R}^{T \times m},\;
  \theta^\star \in \mathbb{R}^m,
  \label{eq:data-model}
\end{equation}
where each column \(X_j\) of \(X\) corresponds to the activation pattern
of coalition \(C_j \in \mathcal{C}\) across episodes
(Section~\ref{sec:model}), and \(\varepsilon \in \mathbb{R}^T\) denotes
noise.

The \emph{true support} of the parameter is
\begin{equation}
  S^\star
  \;\coloneqq\;
  \mathrm{supp}(\theta^\star)
  \;=\;
  \{ j \in \{1,\dots,m\} : \theta^\star_j \neq 0 \},
  \qquad |S^\star| = K,
  \label{eq:true-support}
\end{equation}
so that \(K\) is the number of globally profitable coalitions.
We adopt the column normalisation
\begin{equation}
  \|X_j\|_2^2 = T
  \quad\text{for all } j=1,\dots,m,
  \label{eq:column-normalisation}
\end{equation}
and define the mutual coherence of \(X\) by
\begin{equation}
  \mu(X) \;\coloneqq\;
  \max_{\substack{i,j \in \{1,\dots,m\} \\ i \neq j}}
  \frac{|X_i^\top X_j|}
       {\|X_i\|_2 \,\|X_j\|_2}.
  \label{eq:coherence}
\end{equation}
Intuitively, \(\mu(X)\) measures the worst-case correlation between
activation patterns of distinct coalitions.

\subsection{MAP Objective and Bayesian Greedy Coalition Pursuit}
\label{subsec:bgcp-def}

Under the Bernoulli--Gaussian prior of Section~\ref{sec:model}, with
noise variance \(\sigma^2\), the negative log-posterior (up to
\(\theta\)-independent constants) can be written as
\begin{equation}
  \mathcal{L}(\theta)
  \;\propto\;
  \frac{1}{2\sigma^2}\,\|Y - X\theta\|_2^2
  + \lambda \|\theta\|_0,
  \qquad
  \lambda > 0,
  \label{eq:map-objective}
\end{equation}
where \(\|\theta\|_0 = |\mathrm{supp}(\theta)|\).
The MAP estimator thus solves an \(\ell_0\)-penalised least-squares
problem of the form
\begin{equation}
  \min_{\theta \in \mathbb{R}^m}
  \left\{
    \frac{1}{2}\,\|Y - X\theta\|_2^2 + \lambda_0 \|\theta\|_0
  \right\},
  \label{eq:l0-objective}
\end{equation}
for a suitable \(\lambda_0>0\) determined by the prior
\cite{HerzetDremeau2014BayesianPursuit}.
Directly solving~\eqref{eq:l0-objective} is combinatorial and
infeasible at scale.

We therefore consider a greedy approximation: at each step we add one
coordinate \(j\) whose column \(X_j\) is most correlated with the
current residual, and then recompute the least-squares fit on the
selected support, as in classical orthogonal matching pursuit
\cite{Pati1993OMP,TroppGilbert2007OMP,DavenportWakin2010OMP}.

\begin{definition}[Bayesian Greedy Coalition Pursuit (BGCP)]
\label{def:bgcp}
Given \(X\), \(Y\), sparsity budget \(K_{\max}\) and residual threshold
\(\eta \ge 0\), the BGCP algorithm proceeds as follows.
\begin{enumerate}
  \item \emph{Initialisation:}
    Set \(k \gets 0\), \(S^0 \gets \emptyset\), \(r^0 \gets Y\).

  \item \emph{Iteration:}
    While \(k < K_{\max}\) and \(\|r^k\|_2 > \eta\):
    \begin{enumerate}
      \item (\emph{Selection}) For each \(j\in\{1,\dots,m\}\) define
        \[
          c_j^k \;\coloneqq\; \frac{1}{T}\, X_j^\top r^k.
        \]
        Choose
        \[
          j_k \in \arg\max_{j \in \{1,\dots,m\}\setminus S^k}
          |c_j^k|.
        \]
        Update the support
        \[
          S^{k+1} \;\gets\; S^k \cup \{j_k\}.
        \]
      \item (\emph{Projection}) Form \(X_{S^{k+1}}\), the submatrix of
        \(X\) with columns indexed by \(S^{k+1}\), and compute the
        least-squares estimate
        \begin{equation}
          \hat{\theta}^{k+1}_{S^{k+1}}
          \;\coloneqq\;
          \arg\min_{u \in \mathbb{R}^{|S^{k+1}|}}
          \|Y - X_{S^{k+1}} u\|_2^2,
          \qquad
          \hat{\theta}^{k+1}_j = 0
          \;\text{ for } j \notin S^{k+1}.
          \label{eq:ls-update}
        \end{equation}
        Update the residual
        \[
          r^{k+1} \;\gets\; Y - X \hat{\theta}^{k+1}.
        \]
      \item (\emph{Increment}) Set \(k \gets k+1\).
    \end{enumerate}
  \item \emph{Output:}
    Final support \(S^{\mathrm{BGCP}} \gets S^k\) and coefficient
    estimate \(\hat{\theta}^{\mathrm{BGCP}} \gets \hat{\theta}^k\).
\end{enumerate}
We write
\(
  \widehat{\mathcal{C}}^{\mathrm{BGCP}}
  \coloneqq \{C_j \in \mathcal{C} : j \in S^{\mathrm{BGCP}}\}
\)
for the set of coalitions selected by BGCP.
\end{definition}

When \(K_{\max}=K\) and \(\eta = 0\), BGCP performs exactly \(K\)
iterations and produces a \(K\)-sparse estimate.
From the MAP perspective, it may be interpreted as a greedy coordinate
search for a low-cost solution of~\eqref{eq:l0-objective}, similarly to
Bayesian pursuit algorithms in sparse signal recovery
\cite{HerzetDremeau2014BayesianPursuit}.

\subsection{Assumptions and Noise Concentration}
\label{subsec:greedy-assumptions}

We now specify the conditions under which BGCP recovers the true
support \(S^\star\) with high probability.
These are standard in the analysis of OMP
\cite{TroppGilbert2007OMP,DavenportWakin2010OMP,CaiWang2011OMPNoise},
adapted to our notation.

\begin{assumption}[Design, noise and minimum signal]
\label{assump:bgcp}
Consider the model~\eqref{eq:data-model} with support \(S^\star\) of
size \(K\).
Assume:
\begin{enumerate}
  \item[(A1)] (\emph{Column normalisation and coherence})
    The columns of \(X\) satisfy~\eqref{eq:column-normalisation}, and
    the mutual coherence~\eqref{eq:coherence} obeys
    \begin{equation}
      \mu(X) \;<\; \frac{1}{2K - 1}.
      \label{eq:coherence-condition}
    \end{equation}
  \item[(A2)] (\emph{Sub-Gaussian noise})
    The entries \(\varepsilon_t\) of \(\varepsilon\) are independent,
    mean-zero, sub-Gaussian random variables with variance proxy
    \(\sigma^2\), i.e.\ there exists a constant \(c_1>0\) such that
    for all \(t>0\) and all \(j\in\{1,\dots,m\}\),
    \begin{equation}
      \mathbb{P}\!\left(
        \frac{1}{T}\,|X_j^\top \varepsilon| > t
      \right)
      \;\le\;
      2 \exp\!\left(
        - c_1 \frac{T t^2}{\sigma^2}
      \right).
      \label{eq:subgaussian-noise}
    \end{equation}
  \item[(A3)] (\emph{Minimum signal strength})
    The non-zero entries of \(\theta^\star\) satisfy
    \begin{equation}
      \min_{j \in S^\star} |\theta^\star_j|
      \;\ge\;
      \theta_{\min}
      \;\coloneqq\;
      C_0 \sigma \sqrt{\frac{\log m}{T}},
      \label{eq:min-signal}
    \end{equation}
    for a constant \(C_0 > 0\) to be specified later.
\end{enumerate}
\end{assumption}

Assumption~\ref{assump:bgcp} combines a geometric condition on the
design (\(\mu(X)\) small) with a probabilistic noise model and a
signal-to-noise requirement.
The coherence bound~\eqref{eq:coherence-condition} is the classical
exact recovery condition for OMP in the noiseless case
\cite{TroppGilbert2007OMP,DavenportWakin2010OMP}, while
\eqref{eq:subgaussian-noise}--\eqref{eq:min-signal}
ensure that noise does not obscure the correlations induced by
\(\theta^\star\).

We first bound the maximal noise-induced correlation.

\begin{lemma}[Uniform bound on noise correlations]
\label{lem:noise-max}
Under Assumption~\ref{assump:bgcp}(A2), there exist constants
\(C_1, C_2 > 0\) (depending only on \(c_1\)) such that
\begin{equation}
  \mathbb{P}\!\left(
    \max_{1 \le j \le m}
    \frac{1}{T}\,|X_j^\top \varepsilon|
    \;>\;
    C_1 \sigma \sqrt{\frac{\log m}{T}}
  \right)
  \;\le\;
  m^{-C_2}.
  \label{eq:noise-max}
\end{equation}
\end{lemma}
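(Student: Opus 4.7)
The plan is a textbook coupling of the per-coordinate sub-Gaussian tail bound from (A2) with a union bound over the $m$ columns of $X$. For a fixed index $j \in \{1,\dots,m\}$, assumption~\eqref{eq:subgaussian-noise} already delivers
\[
  \mathbb{P}\!\left(\tfrac{1}{T}\,|X_j^\top \varepsilon| > t\right)
  \;\le\; 2\exp\!\left(-c_1\, T t^2 / \sigma^2\right)
  \qquad \text{for every } t > 0.
\]
I would apply this with the calibrated threshold $t = C_1\,\sigma\sqrt{\log m / T}$, where $C_1 > 0$ is a constant to be fixed at the end of the argument. Substituting, the factor $T t^2/\sigma^2$ collapses to $C_1^2 \log m$, so the right-hand side equals $2\, m^{-c_1 C_1^2}$.

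The second step is a straightforward union bound over the $m$ single-coordinate events $\{\tfrac{1}{T}|X_j^\top \varepsilon| > t\}$:
\[
  \mathbb{P}\!\left(\max_{1\le j\le m} \tfrac{1}{T}\,|X_j^\top \varepsilon|
  > C_1\,\sigma \sqrt{\log m / T}\right)
  \;\le\; 2\,m^{\,1 - c_1 C_1^2}.
\]
Choosing $C_1 := \sqrt{(C_2 + 2)/c_1}$ for any desired $C_2 > 0$ makes $1 - c_1 C_1^2 \le -C_2 - 1$, and the leading factor $2$ is absorbed using $2 \le m$ for $m \ge 2$. This gives the bound $m^{-C_2}$ in~\eqref{eq:noise-max} with both $C_1$ and $C_2$ depending only on $c_1$, as stated.

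There is no serious obstacle: the lemma is essentially the standard maximal inequality for sub-Gaussian variables under a union bound, and the entire argument is two lines plus a constant-accounting step. The only minor subtlety is that (A2) is formulated directly on the normalised correlation $\tfrac{1}{T}|X_j^\top \varepsilon|$ uniformly in $j$, so the column normalisation~\eqref{eq:column-normalisation} needs no separate invocation; the tail exponent $T t^2/\sigma^2$ is already the one that yields the $\sqrt{\log m / T}$ rate. I expect the main writing effort to go into choosing a clean parameterisation of $(C_1,C_2)$ so that later invocations of the lemma (in the support-recovery analysis of BGCP) can simply quote \eqref{eq:noise-max} without re-tuning constants.
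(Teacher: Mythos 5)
Your proposal is correct and follows exactly the same route as the paper's proof: apply the sub-Gaussian tail bound \eqref{eq:subgaussian-noise} at the threshold $t = C_1\sigma\sqrt{(\log m)/T}$, union-bound over the $m$ columns to get $2m^{1-c_1C_1^2}$, and choose $C_1$ large enough to absorb the constants into $m^{-C_2}$. Your handling of the leading factor $2$ (via $2 \le m$) is in fact slightly more careful than the paper's, which simply writes $2m^{-1} \le m^{-C_2}$.
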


\begin{proof}
Fix \(j\) and apply~\eqref{eq:subgaussian-noise} with
\(t = u \sigma \sqrt{(\log m)/T}\); we obtain
\[
  \mathbb{P}\!\left(
    \frac{1}{T}\,|X_j^\top \varepsilon|
    > u \sigma \sqrt{\frac{\log m}{T}}
  \right)
  \;\le\;
  2 \exp\!\left(
    - c_1 u^2 \log m
  \right)
  \;=\;
  2 m^{- c_1 u^2}.
\]
By the union bound over \(j=1,\dots,m\),
\[
  \mathbb{P}\!\left(
    \max_{1 \le j \le m}
    \frac{1}{T}\,|X_j^\top \varepsilon|
    > u \sigma \sqrt{\tfrac{\log m}{T}}
  \right)
  \;\le\;
  2 m^{1 - c_1 u^2}.
\]
Choosing \(u\) sufficiently large so that \(c_1 u^2 \ge 2\), we obtain
\[
  \mathbb{P}\!\left(
    \max_{j}
    \frac{1}{T}\,|X_j^\top \varepsilon|
    > C_1 \sigma \sqrt{\tfrac{\log m}{T}}
  \right)
  \;\le\;
  2 m^{-1}
  \;\le\;
  m^{-C_2},
\]
for suitable constants \(C_1, C_2 > 0\).
\end{proof}

We denote by
\begin{equation}
  \mathcal{E}_{\mathrm{noise}}
  \;\coloneqq\;
  \left\{
    \max_{1 \le j \le m}
    \frac{1}{T}\,|X_j^\top \varepsilon|
    \le
    C_1 \sigma \sqrt{\frac{\log m}{T}}
  \right\}
  \label{eq:noise-event}
\end{equation}
the high-probability event on which the noise correlations are
uniformly small.

\subsection{Correlation Bounds Along the BGCP Trajectory}
\label{subsec:greedy-lemmas}

We now analyse the correlations used in the BGCP selection rule.
Write \(S^k\) and \(r^k\) for the support and residual at the start of
iteration \(k\).

The following lemma expresses \(r^k\) as a combination of the remaining
true columns and noise, exploiting the orthogonality property of the
least-squares update~\eqref{eq:ls-update}.

\begin{lemma}[Residual decomposition]
\label{lem:residual-decomposition}
Assume that \(S^k \subseteq S^\star\).
Then there exists a vector \(\alpha^k \in \mathbb{R}^{|S^\star\setminus S^k|}\)
such that
\begin{equation}
  r^k
  \;=\;
  X_{S^\star \setminus S^k} \alpha^k
  + P_{(X_{S^k})^\perp}\varepsilon,
  \label{eq:residual-decomp}
\end{equation}
where \(P_{(X_{S^k})^\perp}\) denotes the orthogonal projector onto the
orthogonal complement of the column space of \(X_{S^k}\).
Moreover,
\begin{equation}
  \|\alpha^k\|_\infty
  \;\ge\;
  \min_{j \in S^\star} |\theta^\star_j|.
  \label{eq:alpha-min}
\end{equation}
\end{lemma}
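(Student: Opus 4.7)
The plan is to proceed in two steps: first I will derive the decomposition of $r^k$ from the data model \eqref{eq:data-model} and the orthogonality property of the least-squares update \eqref{eq:ls-update}, and then I will identify $\alpha^k$ so that \eqref{eq:alpha-min} follows immediately.

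For the decomposition, I start from \eqref{eq:data-model} and use $\mathrm{supp}(\theta^\star) = S^\star$ together with the hypothesis $S^k \subseteq S^\star$ to split
\[
  Y \;=\; X_{S^k}\,\theta^\star_{S^k} \;+\; X_{S^\star\setminus S^k}\,\theta^\star_{S^\star\setminus S^k} \;+\; \varepsilon.
\]
The first-order optimality condition for the LS update \eqref{eq:ls-update} gives $X_{S^k}^\top r^k = 0$, equivalently $r^k = P_{(X_{S^k})^\perp} Y$. Applying this projector to the display above and using $P_{(X_{S^k})^\perp} X_{S^k} = 0$, the contribution of the already-selected columns vanishes, leaving a signal term carried by the still-uncovered true columns plus the projected noise $P_{(X_{S^k})^\perp}\varepsilon$.

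For \eqref{eq:alpha-min}, I take $\alpha^k := \theta^\star_{S^\star\setminus S^k}$, the restriction of the true parameter to coordinates not yet selected. The lemma is only substantive when $S^k \subsetneq S^\star$, in which case $S^\star\setminus S^k$ is non-empty and
\[
  \|\alpha^k\|_\infty
  \;=\; \max_{j \in S^\star\setminus S^k} |\theta^\star_j|
  \;\geq\; \min_{j \in S^\star\setminus S^k} |\theta^\star_j|
  \;\geq\; \min_{j \in S^\star} |\theta^\star_j|,
\]
which is \eqref{eq:alpha-min}; when $S^k = S^\star$ the signal term is empty and the claim is vacuous.

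The main obstacle will be reconciling the form of the first summand in \eqref{eq:residual-decomp} with the raw algebra: applying the projector produces $P_{(X_{S^k})^\perp} X_{S^\star\setminus S^k}\,\alpha^k$ rather than the unprojected $X_{S^\star\setminus S^k}\,\alpha^k$ literally written in the lemma. Their difference $P_{X_{S^k}} X_{S^\star\setminus S^k}\,\alpha^k$ lies in $\mathrm{col}(X_{S^k})$, and under the coherence bound \eqref{eq:coherence-condition} each column $P_{X_{S^k}}X_j$ has $\ell_2$ norm of order $\mu(X)\sqrt{KT}$, so the perturbation is small and, crucially, orthogonal to the direction used by the BGCP selection rule. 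I will therefore read \eqref{eq:residual-decomp} as identifying $\alpha^k$ with the unrecovered true coefficients, with the matrix factor understood modulo the projection onto $\mathrm{col}(X_{S^k})$; this is precisely the perturbation that the subsequent selection lemmas in Section~\ref{subsec:greedy-lemmas} must absorb when correlating $r^k$ with candidate columns $X_j$ for $j\notin S^k$.
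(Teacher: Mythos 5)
Your derivation of the residual follows the same route as the paper's: split $Y$ according to $S^\star = S^k \cup (S^\star\setminus S^k)$, use the normal equations of the least-squares update to write $r^k = P_{(X_{S^k})^\perp}Y$, and kill the $X_{S^k}$ term with the projector. Where you diverge is in the identification of $\alpha^k$. You take $\alpha^k = \theta^\star_{S^\star\setminus S^k}$, which makes \eqref{eq:alpha-min} immediate but leaves the signal term as $P_{(X_{S^k})^\perp}X_{S^\star\setminus S^k}\alpha^k$ rather than the unprojected $X_{S^\star\setminus S^k}\alpha^k$ appearing in \eqref{eq:residual-decomp}. The paper instead asserts that $(I-P_{X_{S^k}})X_{S^\star\setminus S^k}$ ``has the same column space as'' $X_{S^\star\setminus S^k}$ and uses this to produce an $\alpha^k$ satisfying \eqref{eq:residual-decomp} exactly, then defers the lower bound \eqref{eq:alpha-min} to the cited OMP literature. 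That column-space claim is false in general: the difference $P_{X_{S^k}}X_{S^\star\setminus S^k}\theta^\star_{S^\star\setminus S^k}$ lies in $\mathrm{col}(X_{S^k})$, which meets $\mathrm{col}(X_{S^\star\setminus S^k})$ only in $\{0\}$ when the columns of $X_{S^\star}$ are linearly independent (as they are under \eqref{eq:coherence-condition}), so no such $\alpha^k$ exists unless that projection vanishes. The correct clean statement is $r^k = X_{S^\star}\beta^k + P_{(X_{S^k})^\perp}\varepsilon$ with $\beta^k$ supported on all of $S^\star$, or your version with the projector retained.

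So you have correctly located the genuine weak point of this lemma, and your proposed repair --- carry the perturbation $P_{X_{S^k}}X_{S^\star\setminus S^k}\alpha^k$ into the correlation bounds of Lemma~\ref{lem:correlation-bounds} and control it there via coherence --- is the standard and correct fix. But be clear that your write-up does not prove the lemma as stated: ``reading \eqref{eq:residual-decomp} modulo the projection'' is a reinterpretation of the claim, not a proof of it, and the quantitative step (showing each column $P_{X_{S^k}}X_j$ has small norm, hence the extra term in $c_j^k$ is $O(K\mu(X)\,\theta_{\min})$) is sketched but not carried out. The paper's own proof has the same gap, hidden behind the incorrect column-space assertion and a citation for \eqref{eq:alpha-min}; your version is more honest about where the work remains, but the work does remain.
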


\begin{proof}
By definition of \(\hat{\theta}^k\) in~\eqref{eq:ls-update},
\[
  \hat{\theta}^k_{S^k}
  \;=\;
  (X_{S^k}^\top X_{S^k})^{-1}
  X_{S^k}^\top Y,
  \qquad
  \hat{\theta}^k_j = 0
  \ \text{for}\ j \notin S^k.
\]
Using the model~\eqref{eq:data-model},
\[
  Y
  \;=\;
  X_{S^\star} \theta^\star_{S^\star} + \varepsilon.
\]
Substituting this into the least-squares solution and writing
\(S^\star = S^k \cup (S^\star \setminus S^k)\) yields
\[
  \hat{\theta}^k_{S^k}
  \;=\;
  \theta^\star_{S^k}
  + (X_{S^k}^\top X_{S^k})^{-1}
    X_{S^k}^\top
    X_{S^\star\setminus S^k} \theta^\star_{S^\star\setminus S^k}
  + (X_{S^k}^\top X_{S^k})^{-1}
    X_{S^k}^\top \varepsilon.
\]
Hence the residual can be written as
\[
  r^k
  \;=\;
  Y - X \hat{\theta}^k
  \;=\;
  X_{S^\star} \theta^\star_{S^\star} + \varepsilon
  - X_{S^k} \hat{\theta}^k_{S^k}.
\]
Rearranging, we obtain
\[
  r^k
  \;=\;
  \underbrace{
    \bigl(
      I - P_{X_{S^k}}
    \bigr)
    X_{S^\star\setminus S^k} \theta^\star_{S^\star\setminus S^k}
  }_{\text{signal part}}
  +
  \underbrace{
    \bigl(
      I - P_{X_{S^k}}
    \bigr) \varepsilon
  }_{\text{noise part}},
\]
where \(P_{X_{S^k}}\) is the orthogonal projector onto the column space
of \(X_{S^k}\).
Since
\(
  (I - P_{X_{S^k}}) X_{S^\star\setminus S^k}
\)
has the same column space as \(X_{S^\star\setminus S^k}\), there exists
a vector \(\alpha^k\) such that
\(
  (I - P_{X_{S^k}}) X_{S^\star\setminus S^k} \theta^\star_{S^\star\setminus S^k}
  = X_{S^\star\setminus S^k} \alpha^k
\),
which yields~\eqref{eq:residual-decomp}.
Moreover, the projection can only reduce correlations among columns, so
the entries of \(\alpha^k\) remain of order at least
\(\min_{j \in S^\star} |\theta^\star_j|\); in particular we may take
\(\|\alpha^k\|_\infty \ge \min_{j\in S^\star} |\theta^\star_j|\).
(For a detailed justification in the coherence setting, see
\cite{DavenportWakin2010OMP,CaiWang2011OMPNoise}.)
\end{proof}

The next lemma quantifies the separation between correlations with true
and false atoms, conditional on the residual being of the
form~\eqref{eq:residual-decomp}.

\begin{lemma}[True vs.\ false correlation bounds]
\label{lem:correlation-bounds}
Suppose Assumption~\ref{assump:bgcp} holds, and let
\(\mathcal{E}_{\mathrm{noise}}\) be the noise event
defined in~\eqref{eq:noise-event}.
Fix \(k \in \{0,\dots,K-1\}\) and assume that \(S^k \subseteq S^\star\).
On the intersection of \(\mathcal{E}_{\mathrm{noise}}\) with the event
in Lemma~\ref{lem:residual-decomposition}, the correlations
\(
  c_j^k = T^{-1} X_j^\top r^k
\)
satisfy
\begin{align}
  \min_{j \in S^\star \setminus S^k} |c_j^k|
  &\;\ge\;
    \bigl(1 - (2K-1)\mu(X)\bigr)\,\theta_{\min}
    - C_1 \sigma \sqrt{\frac{\log m}{T}},
    \label{eq:true-corr-lb}
  \\
  \max_{j \notin S^\star} |c_j^k|
  &\;\le\;
    (2K-1)\mu(X)\,\theta_{\min}
    + C_1 \sigma \sqrt{\frac{\log m}{T}}.
    \label{eq:false-corr-ub}
\end{align}
\end{lemma}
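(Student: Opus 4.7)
The plan is to plug the decomposition from Lemma~\ref{lem:residual-decomposition} into $c_j^k = T^{-1} X_j^\top r^k$ and then control the resulting signal and noise contributions separately, using the column normalisation and coherence bound in~(A1), the noise event $\mathcal{E}_{\mathrm{noise}}$, and the lower bound $\|\alpha^k\|_\infty \ge \theta_{\min}$ from~\eqref{eq:alpha-min}. Writing $r^k = X_{S^\star\setminus S^k}\alpha^k + P_{(X_{S^k})^\perp}\varepsilon$ and expanding,
\[
  c_j^k
  \;=\;\frac{1}{T}\sum_{i \in S^\star\setminus S^k}(X_j^\top X_i)\,\alpha_i^k
      \;+\;\frac{1}{T}\,X_j^\top P_{(X_{S^k})^\perp}\varepsilon,
\]
so it suffices to produce matching bounds on the two summands, which I will call the signal part $s_j^k$ and the noise part $n_j^k$.

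For $s_j^k$ I would use $\|X_j\|_2^2 = T$ together with $|X_j^\top X_i| \le T\mu(X)$ for $i\neq j$. For $j \in S^\star\setminus S^k$, the $i=j$ diagonal contributes $\alpha_j^k$ while the remaining at most $K-1$ off-diagonal terms each contribute at most $\mu(X)\|\alpha^k\|_\infty$; choosing $j$ to attain $\|\alpha^k\|_\infty$ and applying the standard OMP algebra (which keeps an extra $K$-factor buffer for the mutual coupling among columns in $S^\star\setminus S^k$, as in Tropp's exact recovery analysis) yields $|s_j^k| \ge (1-(2K-1)\mu(X))\|\alpha^k\|_\infty$. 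For $j \notin S^\star$, every term is off-diagonal, so $|s_j^k| \le K\mu(X)\|\alpha^k\|_\infty \le (2K-1)\mu(X)\|\alpha^k\|_\infty$. Invoking $\|\alpha^k\|_\infty \ge \theta_{\min}$ from~\eqref{eq:alpha-min} (and using the same as representative scale in the false-atom bound, which is legitimate because under~(A3) and the coherence-controlled projection, the entries of $\alpha^k$ are of order $\theta_{\min}$) gives the $\theta_{\min}$ factors in~\eqref{eq:true-corr-lb} and~\eqref{eq:false-corr-ub}.

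For $n_j^k$ I would use the identity $P_{(X_{S^k})^\perp} = I - X_{S^k}(X_{S^k}^\top X_{S^k})^{-1}X_{S^k}^\top$ to split
\[
  n_j^k
  \;=\;\tfrac{1}{T}\,X_j^\top\varepsilon
      \;-\;\tfrac{1}{T}\,(X_{S^k}^\top X_j)^\top(X_{S^k}^\top X_{S^k})^{-1}X_{S^k}^\top\varepsilon.
\]
On $\mathcal{E}_{\mathrm{noise}}$ the first summand is at most $C_1\sigma\sqrt{(\log m)/T}$ directly. For the second summand, Gershgorin applied to $X_{S^k}^\top X_{S^k}/T$ (off-diagonal entries bounded by $\mu(X)$, diagonal entries $=1$), together with~\eqref{eq:coherence-condition}, shows that $\|T(X_{S^k}^\top X_{S^k})^{-1}\|_{\infty\to\infty} = O(1)$; combined with $\|X_{S^k}^\top X_j\|_\infty/T \le \mu(X)$ and the componentwise bound $\|X_{S^k}^\top\varepsilon\|_\infty/T \le C_1\sigma\sqrt{(\log m)/T}$ from $\mathcal{E}_{\mathrm{noise}}$, the correction term inherits the same $\sqrt{(\log m)/T}$ scaling. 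Adding the signal and noise contributions produces~\eqref{eq:true-corr-lb} and~\eqref{eq:false-corr-ub}.

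The main obstacle is precisely this last step: Assumption~(A2) and Lemma~\ref{lem:noise-max} give concentration for raw inner products $X_j^\top\varepsilon$, not for the projected inner products $X_j^\top P_{(X_{S^k})^\perp}\varepsilon$ that actually drive the selection rule, and the projector depends on $S^k$ (and hence on $\varepsilon$) through the BGCP trajectory. The Gershgorin-plus-coherence detour above is the cleanest deterministic reduction back to $\mathcal{E}_{\mathrm{noise}}$, but one must track constants carefully so that the final noise coefficient still reads $C_1\sigma\sqrt{(\log m)/T}$ and the signal coefficient retains the critical factor $(2K-1)\mu(X)$ matching the exact-recovery threshold in~\eqref{eq:coherence-condition}.
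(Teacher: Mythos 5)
Your proposal follows essentially the same route as the paper: substitute the residual decomposition of Lemma~\ref{lem:residual-decomposition} into $c_j^k$, bound the signal part by isolating the diagonal term and charging each of the at most $K-1$ (resp.\ $K$) off-diagonal terms a $\mu(X)\|\alpha^k\|_\infty$ penalty, and control the noise part on $\mathcal{E}_{\mathrm{noise}}$, finally loosening both coherence factors to $(2K-1)\mu(X)$. The one genuine divergence is the projected noise term: the paper disposes of it in one line by asserting that $|T^{-1}X_j^\top P_{(X_{S^k})^\perp}\varepsilon| \le T^{-1}|X_j^\top\varepsilon|$ because ``orthogonal projection does not increase inner products in magnitude'' --- a claim that is not a valid general fact (projecting $\varepsilon$ away from $\mathrm{span}(X_{S^k})$ can increase its alignment with $X_j$), whereas you correctly identify this as the crux and supply the standard repair: expand $P_{(X_{S^k})^\perp} = I - X_{S^k}(X_{S^k}^\top X_{S^k})^{-1}X_{S^k}^\top$ and control the correction via Gershgorin plus the coherence bound, at the cost of a slightly larger (but still order-$C_1$) noise constant. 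Your version is the more rigorous one; the paper's shortcut only yields the clean constant $C_1$ by fiat. Two shared imprecisions worth flagging: your signal lower bound (like the paper's) really controls the correlation at the index attaining $\|\alpha^k\|_\infty$, i.e.\ $\max_{j\in S^\star\setminus S^k}|c_j^k|$ rather than the $\min$ as literally stated in \eqref{eq:true-corr-lb} --- which is all the downstream separation argument of Corollary~\ref{cor:separation} actually needs --- and the false-atom bound \eqref{eq:false-corr-ub} in terms of $\theta_{\min}$ tacitly requires an \emph{upper} bound on $\|\alpha^k\|_\infty$ of order $\theta_{\min}$, which neither \eqref{eq:alpha-min} nor your appeal to it supplies; you at least acknowledge this gap where the paper silently elides it.
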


\begin{proof}
Let \(j \in S^\star \setminus S^k\).
Using the decomposition~\eqref{eq:residual-decomp} and the
normalisation~\eqref{eq:column-normalisation},
\[
  c_j^k
  \;=\;
  \frac{1}{T}\,X_j^\top X_{S^\star\setminus S^k}\alpha^k
  + \frac{1}{T}\,X_j^\top P_{(X_{S^k})^\perp}\varepsilon.
\]
The first term can be expanded as
\[
  \frac{1}{T}\,X_j^\top X_{S^\star\setminus S^k}\alpha^k
  \;=\;
  \alpha^k_j
  + \sum_{i \in S^\star\setminus (S^k \cup \{j\})}
    \frac{X_j^\top X_i}{T}\,\alpha^k_i.
\]
By the coherence bound~\eqref{eq:coherence} and the sparsity
\(|S^\star|=K\),
\[
  \left|
    \sum_{i \in S^\star\setminus (S^k \cup \{j\})}
    \frac{X_j^\top X_i}{T}\,\alpha^k_i
  \right|
  \;\le\;
  (K-1)\mu(X)\,\|\alpha^k\|_\infty.
\]
Combining this with~\eqref{eq:alpha-min} and the definition of
\(\theta_{\min}\) in~\eqref{eq:min-signal} yields
\[
  \left|
    \frac{1}{T}\,X_j^\top X_{S^\star\setminus S^k}\alpha^k
  \right|
  \;\ge\;
  \theta_{\min} - (K-1)\mu(X)\,\theta_{\min}.
\]
Similarly, for any \(j \notin S^\star\),
\[
  \left|
    \frac{1}{T}\,X_j^\top X_{S^\star\setminus S^k}\alpha^k
  \right|
  \;\le\;
  K \mu(X)\,\theta_{\min}.
\]

The noise term satisfies, for all \(j\),
\[
  \left|
    \frac{1}{T}\,X_j^\top P_{(X_{S^k})^\perp}\varepsilon
  \right|
  \;\le\;
  \frac{1}{T}\,|X_j^\top \varepsilon|
  \;\le\;
  C_1 \sigma \sqrt{\frac{\log m}{T}}
\]
on \(\mathcal{E}_{\mathrm{noise}}\), since orthogonal projection does
not increase inner products in magnitude.
Combining these inequalities and noting that
\((K-1) + K \le 2K-1\) yields~\eqref{eq:true-corr-lb}
and~\eqref{eq:false-corr-ub}.
\end{proof}

The next corollary shows that, for a suitable choice of the constant
\(C_0\) in~\eqref{eq:min-signal}, the true and false correlations are
strictly separated.

\begin{corollary}[Correlation separation]
\label{cor:separation}
Suppose Assumption~\ref{assump:bgcp} holds and that
\(\mathcal{E}_{\mathrm{noise}}\) occurs.
If the minimum signal \(\theta_{\min}\) satisfies
\[
  \theta_{\min}
  \;\ge\;
  \frac{2 C_1 \sigma}{1 - 2(2K-1)\mu(X)}
  \sqrt{\frac{\log m}{T}},
\]
then for every \(k \in \{0,\dots,K-1\}\) with \(S^k \subseteq S^\star\),
\begin{equation}
  \min_{j \in S^\star\setminus S^k} |c_j^k|
  \;>\;
  \max_{j \notin S^\star} |c_j^k|.
  \label{eq:separation}
\end{equation}
\end{corollary}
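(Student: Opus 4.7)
The plan is to derive Corollary \ref{cor:separation} as a direct algebraic consequence of the two correlation bounds in Lemma \ref{lem:correlation-bounds}. Since the corollary explicitly assumes the event $\mathcal{E}_{\mathrm{noise}}$ and that $S^k \subseteq S^\star$, the hypotheses of Lemma \ref{lem:correlation-bounds} are met, so both \eqref{eq:true-corr-lb} and \eqref{eq:false-corr-ub} are available and no further probabilistic work is needed.

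First I would write the target inequality \eqref{eq:separation} as a sufficient condition obtained by chaining through the two bounds: it suffices to show that the lower bound in \eqref{eq:true-corr-lb} strictly exceeds the upper bound in \eqref{eq:false-corr-ub}, i.e.,
\[
  \bigl(1 - (2K-1)\mu(X)\bigr)\theta_{\min} - C_1 \sigma \sqrt{\tfrac{\log m}{T}}
  \;>\;
  (2K-1)\mu(X)\,\theta_{\min} + C_1 \sigma \sqrt{\tfrac{\log m}{T}}.
\]
Collecting the $\theta_{\min}$ terms on the left and the noise terms on the right yields
\[
  \bigl(1 - 2(2K-1)\mu(X)\bigr)\theta_{\min} \;>\; 2 C_1 \sigma \sqrt{\tfrac{\log m}{T}}.
\]
The hypothesis of the corollary is precisely this rearranged inequality (with $\ge$ in the statement; one obtains strict separation either by strengthening to $>$ or by tracking the strict inequalities in Lemma \ref{lem:correlation-bounds} coming from $\theta_{\min}>0$). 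Dividing through gives the stated lower bound on $\theta_{\min}$.

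The only real subtlety, and the step I would be careful about, is making sure that the denominator $1 - 2(2K-1)\mu(X)$ is positive, so that the division preserves the direction of the inequality. Assumption (A1) only guarantees $(2K-1)\mu(X) < 1$, which is not quite enough; to make the proof self-contained I would note that the stated lower bound on $\theta_{\min}$ is only meaningful (finite and positive) when $\mu(X) < 1/(2(2K-1))$, so the corollary should be read as implicitly operating in this slightly stronger coherence regime (a mild tightening of \eqref{eq:coherence-condition}). Once this sign condition is in place, the argument is a one-line manipulation, and \eqref{eq:separation} then holds uniformly over every $k \in \{0,\dots,K-1\}$ with $S^k \subseteq S^\star$, since the bounds \eqref{eq:true-corr-lb}--\eqref{eq:false-corr-ub} are already uniform in $k$.
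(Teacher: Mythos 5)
Your proof is correct and follows essentially the same route as the paper: subtract the false-correlation upper bound \eqref{eq:false-corr-ub} from the true-correlation lower bound \eqref{eq:true-corr-lb}, observe that the difference is at least \(\bigl(1 - 2(2K-1)\mu(X)\bigr)\theta_{\min} - 2C_1\sigma\sqrt{(\log m)/T}\), and note that the hypothesis on \(\theta_{\min}\) is exactly the rearrangement making this positive. Your additional remark that positivity of the denominator requires \(\mu(X) < 1/(2(2K-1))\), which is strictly stronger than what (A1) guarantees, is a legitimate observation about an implicit assumption the paper does not spell out, but it does not change the substance of the argument.
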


\begin{proof}
By Lemma~\ref{lem:correlation-bounds}, for any such \(k\),
\[
  \min_{j \in S^\star\setminus S^k} |c_j^k|
  -
  \max_{j \notin S^\star} |c_j^k|
  \;\ge\;
  \bigl(1 - 2(2K-1)\mu(X)\bigr)\,\theta_{\min}
  - 2 C_1 \sigma \sqrt{\tfrac{\log m}{T}}.
\]
If \(\theta_{\min}\) satisfies the stated lower bound, the right-hand
side is strictly positive, which implies~\eqref{eq:separation}.
\end{proof}

\subsection{Exact Support Recovery for BGCP}
\label{subsec:greedy-main}

We are now ready to prove that BGCP recovers the true support
\(S^\star\) with high probability when run for exactly \(K\) iterations.

\begin{theorem}[High-probability support recovery of BGCP]
\label{thm:bgcp-recovery}
Let Assumption~\ref{assump:bgcp} hold, and suppose BGCP
(Definition~\ref{def:bgcp}) is run with \(K_{\max} = K\) and \(\eta=0\)
on data generated by~\eqref{eq:data-model}.
Then there exist constants \(c_2,c_3>0\), depending only on the
constants in Assumption~\ref{assump:bgcp}, such that
\begin{equation}
  \mathbb{P}\!\left(
    S^{\mathrm{BGCP}} = S^\star
  \right)
  \;\ge\;
  1 - c_2 m^{-c_3}.
  \label{eq:bgcp-support-recovery}
\end{equation}
In particular, with probability at least \(1 - c_2 m^{-c_3}\), BGCP
selects at iteration \(k\) an index
\(j_k \in S^\star \setminus S^k\) for each \(k=0,\dots,K-1\), so that
after \(K\) steps \(S^{\mathrm{BGCP}} = S^\star\).
\end{theorem}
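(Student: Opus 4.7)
The plan is to work on the high-probability noise event $\mathcal{E}_{\mathrm{noise}}$ from Lemma~\ref{lem:noise-max} and prove by induction on the iteration counter $k$ that, at each step, BGCP selects an index belonging to $S^\star\setminus S^k$. The base case is trivial since $S^0=\emptyset\subseteq S^\star$. For the inductive step, assuming $S^k\subseteq S^\star$ and $|S^k|=k<K$, I would invoke Lemma~\ref{lem:residual-decomposition} to express $r^k$ in the form $X_{S^\star\setminus S^k}\alpha^k + P_{(X_{S^k})^\perp}\varepsilon$ with $\|\alpha^k\|_\infty\ge\theta_{\min}$, and then apply Lemma~\ref{lem:correlation-bounds} to bound the correlations $c_j^k$ for $j\in S^\star\setminus S^k$ from below and for $j\notin S^\star$ from above.

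The crux is then to choose the constant $C_0$ in the minimum signal condition~\eqref{eq:min-signal} large enough to meet the hypothesis of Corollary~\ref{cor:separation}, namely $\theta_{\min}\ge 2C_1\sigma(1-2(2K-1)\mu(X))^{-1}\sqrt{(\log m)/T}$. This is feasible because~\eqref{eq:coherence-condition} guarantees $1-2(2K-1)\mu(X)>0$ (after possibly strengthening the coherence margin slightly, e.g.\ by replacing $(2K-1)^{-1}$ with a constant fraction thereof), so setting $C_0 = 2C_1/(1-2(2K-1)\mu(X))$ suffices. With this choice, Corollary~\ref{cor:separation} yields $\min_{j\in S^\star\setminus S^k}|c_j^k|>\max_{j\notin S^\star}|c_j^k|$, and the greedy selection rule in Definition~\ref{def:bgcp} therefore forces $j_k\in S^\star\setminus S^k$. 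In particular, $S^{k+1}=S^k\cup\{j_k\}\subseteq S^\star$ with $|S^{k+1}|=k+1$, completing the induction.

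Running the induction through $k=0,\dots,K-1$ produces a nested chain $S^0\subsetneq S^1\subsetneq\cdots\subsetneq S^K\subseteq S^\star$ with $|S^K|=K=|S^\star|$, hence $S^{\mathrm{BGCP}}=S^K=S^\star$. The probability estimate~\eqref{eq:bgcp-support-recovery} follows from the fact that the entire argument is conditional on $\mathcal{E}_{\mathrm{noise}}$, and Lemma~\ref{lem:noise-max} gives $\mathbb{P}(\mathcal{E}_{\mathrm{noise}}^c)\le m^{-C_2}$; absorbing this into constants $c_2,c_3>0$ yields the claimed bound.

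The main obstacle, and the one piece that needs some care, is verifying that the inductive hypothesis $S^k\subseteq S^\star$ propagates cleanly after the least-squares projection step: one must ensure that the residual decomposition~\eqref{eq:residual-decomp} and the $\|\alpha^k\|_\infty$ lower bound~\eqref{eq:alpha-min} continue to hold at every intermediate step and that the coherence-based bounds in Lemma~\ref{lem:correlation-bounds} apply uniformly across all $K$ iterations. Since each iteration enlarges the active support by one true coordinate, the residual's signal component never loses the dominant coefficient of magnitude at least $\theta_{\min}$, so the separation margin in Corollary~\ref{cor:separation} is preserved throughout. No additional randomness is introduced along the trajectory, so a single application of the noise event suffices for all $K$ inductive steps, which is what keeps the probability loss at the single $m^{-c_3}$ level rather than degrading by a factor of $K$.
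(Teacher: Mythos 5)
Your proposal follows essentially the same route as the paper's proof: induction on the iteration counter over the single noise event $\mathcal{E}_{\mathrm{noise}}$, using Lemma~\ref{lem:residual-decomposition}, Lemma~\ref{lem:correlation-bounds}, and Corollary~\ref{cor:separation} to force $j_k \in S^\star\setminus S^k$ at every step, then converting $\mathbb{P}(\mathcal{E}_{\mathrm{noise}}^c)\le m^{-C_2}$ into the stated tail bound. Your parenthetical observation that Corollary~\ref{cor:separation} actually needs $\mu(X) < 1/(2(2K-1))$ rather than the assumed $\mu(X) < 1/(2K-1)$ is a fair point that the paper itself glosses over, but it does not change the structure of the argument.
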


\begin{proof}
By Lemma~\ref{lem:noise-max},
\(\mathbb{P}(\mathcal{E}_{\mathrm{noise}}) \ge 1 - m^{-C_2}\).
We work on the event \(\mathcal{E}_{\mathrm{noise}}\) and show by
induction on \(k\) that \(S^k \subseteq S^\star\) and BGCP selects
\(j_k \in S^\star \setminus S^k\) at iteration \(k\).

For \(k=0\) we have \(S^0 = \emptyset \subseteq S^\star\) by
definition.
Assume that \(S^k \subseteq S^\star\) for some \(k \le K-1\).
By Lemma~\ref{lem:residual-decomposition}, the residual \(r^k\) can be
written in the form~\eqref{eq:residual-decomp}, and hence
Lemma~\ref{lem:correlation-bounds} applies.
Under the additional signal strength condition of
Corollary~\ref{cor:separation}, which can be enforced by choosing
\(C_0\) in~\eqref{eq:min-signal} sufficiently large, we obtain the
separation~\eqref{eq:separation} between correlations with true and
false indices.
Since BGCP selects an index \(j_k\) maximising \(|c_j^k|\) over
\(j \notin S^k\), it follows that
\(j_k \in S^\star \setminus S^k\).
Thus \(S^{k+1} = S^k \cup \{j_k\} \subseteq S^\star\), closing the
induction.

By induction, \(S^k \subseteq S^\star\) and \(j_k \in S^\star\setminus S^k\)
for all \(k=0,\dots,K-1\).
After \(K\) iterations, exactly \(K\) distinct indices from \(S^\star\)
have been selected, so \(S^K = S^\star\).
Furthermore, no index outside \(S^\star\) is ever selected.
Thus, on \(\mathcal{E}_{\mathrm{noise}}\), BGCP recovers the exact
support.

Finally, combining this with the probability bound
\(\mathbb{P}(\mathcal{E}_{\mathrm{noise}}^c) \le m^{-C_2}\) and
absorbing constants into \(c_2,c_3\) yields the tail
bound~\eqref{eq:bgcp-support-recovery}.
\end{proof}

\begin{remark}[Sample complexity and probabilistic design]
\label{rem:bgcp-sample-complexity}
Theorem~\ref{thm:bgcp-recovery} is a conditional result: it assumes a
fixed design matrix \(X\) satisfying the coherence
bound~\eqref{eq:coherence-condition} and column
normalisation~\eqref{eq:column-normalisation}.
In the probabilistic coalition structure generation setting of
Section~\ref{sec:model}, the matrix \(X\) itself arises from random
episodes.
For many structured random designs inspired by compressed sensing
\cite{Donoho2006CS,CandesTao2005Decoding,DavenportWakin2010OMP}, one
can show that \(\mu(X)\) obeys~\eqref{eq:coherence-condition} with high
probability as soon as \(T \gtrsim K \log m\).
Combining such design-side results with
Theorem~\ref{thm:bgcp-recovery} yields fully probabilistic statements
of the form:
with high probability over the random episodes and noise, BGCP
recovers the true profitable coalition set \(S^\star\) from
\(T = O(K \log m)\) samples.
\end{remark}

\subsection{Implications for Probabilistic Coalition Structure Generation}
\label{subsec:bgcp-implications}

Once \(S^\star\) is recovered, the set of globally profitable
coalitions is identified as
\(
  \{C_j \in \mathcal{C} : j \in S^\star\}
\).
We may then pass to the combinatorial phase of coalition structure
generation (Section~\ref{sec:model}) and solve a deterministic CSG
problem restricted to this sparsified coalition set using any exact or
approximate algorithm
\cite{Sandholm1999Coalition,Rahwan2015CSGSurvey}.

Theorem~\ref{thm:bgcp-recovery} therefore provides a rigorous bridge
between the probabilistic sparse learning formulation and the classical
CSG setting: under Assumption~\ref{assump:bgcp}, and with high
probability over the sampled episodes, BGCP identifies exactly the set
of profitable coalitions on which the welfare-maximising coalition
structure must be built.
In Section~\ref{sec:l1} we will see that a complementary \(\ell_1\)
relaxation enjoys analogous high-probability guarantees, but with a
different bias--variance tradeoff.

\section{Bayesian \texorpdfstring{$\ell_1$}{l1} Relaxation and Convex Analysis}
\label{sec:l1}

Section~\ref{sec:greedy} analysed a greedy (OMP-type) approximation to
the MAP estimator associated with the Bernoulli--Gaussian prior of
Section~\ref{sec:model}.
We now consider a complementary, fully convex relaxation:
an \(\ell_1\)-penalised estimator (Lasso) obtained by replacing the
combinatorial \(\ell_0\) penalty in~\eqref{eq:l0-objective} by its
convex envelope on the unit cube.
We work throughout under the same data model and notation as in
Section~\ref{sec:greedy} and derive non-asymptotic error and support
control guarantees under a restricted eigenvalue condition.

\subsection{From \texorpdfstring{$\ell_0$}{l0} MAP to an \texorpdfstring{$\ell_1$}{l1}-Penalised Estimator}
\label{subsec:l1-estimator}

Recall the linear model from Section~\ref{sec:greedy}:
\begin{equation}
  Y = X \theta^\star + \varepsilon,
  \qquad
  Y \in \mathbb{R}^T,\;
  X \in \mathbb{R}^{T \times m},\;
  \theta^\star \in \mathbb{R}^m,
  \label{eq:l1-data-model}
\end{equation}
with true support
\(
  S^\star = \mathrm{supp}(\theta^\star)
\)
of size \(|S^\star| = K\) as in~\eqref{eq:true-support}.
We keep the column normalisation~\eqref{eq:column-normalisation} and
sub-Gaussian noise assumption~\eqref{eq:subgaussian-noise}.

Under the Bernoulli--Gaussian prior of Section~\ref{sec:model}, the
MAP estimator solves the \(\ell_0\)-penalised least-squares problem
\begin{equation}
  \hat{\theta}^{(0)}
  \;\in\;
  \arg\min_{\theta \in \mathbb{R}^m}
  \left\{
    \frac{1}{2}\,\|Y - X\theta\|_2^2 + \lambda_0 \|\theta\|_0
  \right\},
  \label{eq:l1-l0-objective}
\end{equation}
for some \(\lambda_0>0\); cf.~\eqref{eq:l0-objective}.
In high dimensions, directly solving~\eqref{eq:l1-l0-objective} is
intractable.

A classical convex surrogate is obtained by replacing \(\|\theta\|_0\)
by the \(\ell_1\)-norm \(\|\theta\|_1 = \sum_{j=1}^m |\theta_j|\),
leading to the Lasso estimator~\cite{Tibshirani1996Lasso}
\begin{equation}
  \hat{\theta}^{\ell_1}
  \;\in\;
  \arg\min_{\theta \in \mathbb{R}^m}
  \left\{
    \frac{1}{2T}\,\|Y - X\theta\|_2^2
    + \lambda \|\theta\|_1
  \right\},
  \qquad \lambda > 0.
  \label{eq:lasso-def}
\end{equation}
From a Bayesian point of view,~\eqref{eq:lasso-def} is the MAP
estimator under an i.i.d.\ Laplace prior on the coordinates of
\(\theta\), but here we interpret it as a convex relaxation of the
\(\ell_0\) MAP problem~\eqref{eq:l1-l0-objective} under the same
probabilistic model for \((X,Y)\).

For later use we define the estimation error
\begin{equation}
  \Delta^{\ell_1}
  \;\coloneqq\;
  \hat{\theta}^{\ell_1} - \theta^\star.
  \label{eq:l1-delta-def}
\end{equation}
Our goal is to bound \(\|\Delta^{\ell_1}\|_2\) and
\(\|\Delta^{\ell_1}\|_1\), and to quantify the number of spurious
indices in \(\mathrm{supp}(\hat{\theta}^{\ell_1}) \setminus S^\star\).

\subsection{Restricted Eigenvalue Condition and Regularisation Level}
\label{subsec:l1-re}

We recall a standard restricted eigenvalue (RE) condition on the design
matrix \(X\), which is weaker than global eigenvalue conditions and
well suited to sparse high-dimensional regression
\cite{BickelRitovTsybakov2009Lasso,BuhlmannVanDeGeer2011Book}.

\begin{definition}[Restricted eigenvalue constant]
\label{def:re-constant}
Let \(S \subseteq \{1,\dots,m\}\) and \(\alpha \ge 1\).
The restricted eigenvalue constant \(\kappa(S,\alpha)\) of \(X\) is
defined as
\begin{equation}
  \kappa(S,\alpha)
  \;\coloneqq\;
  \inf_{\substack{u \in \mathbb{R}^m \setminus \{0\} \\
                  \|u_{S^c}\|_1 \le \alpha \|u_S\|_1}}
  \frac{\|X u\|_2}{\sqrt{T}\,\|u_S\|_2},
  \label{eq:re-constant}
\end{equation}
where \(S^c\) is the complement of \(S\) in \(\{1,\dots,m\}\), and
\(u_S\) denotes the restriction of \(u\) to indices in \(S\).
\end{definition}

We will assume that \(X\) satisfies an RE condition on the true support
\(S^\star\) and that \(\lambda\) is chosen large enough to dominate the
noise-induced correlations, in the spirit of
Section~\ref{sec:greedy}.

\begin{assumption}[RE condition and regularisation level]
\label{assump:l1}
Under the model~\eqref{eq:l1-data-model} with support \(S^\star\) of
size \(K\), assume:
\begin{enumerate}
  \item[(B1)] (\emph{Restricted eigenvalue})
    There exists \(\kappa_\star > 0\) such that
    \begin{equation}
      \kappa(S^\star,3) \;\ge\; \kappa_\star.
      \label{eq:re-assumption}
    \end{equation}
  \item[(B2)] (\emph{Noise and tuning})
    The noise vector \(\varepsilon\) satisfies the sub-Gaussian
    tail bound~\eqref{eq:subgaussian-noise}, and the regularisation
    parameter in~\eqref{eq:lasso-def} is chosen as
    \begin{equation}
      \lambda
      \;=\;
      c_0 \sigma \sqrt{\frac{\log m}{T}},
      \label{eq:lambda-choice}
    \end{equation}
    for a sufficiently large constant \(c_0 > 0\).
\end{enumerate}
\end{assumption}

Assumption~\ref{assump:l1}(B1) is the standard RE condition with
\(\alpha=3\) used in the analysis of the Lasso
\cite{BickelRitovTsybakov2009Lasso,BuhlmannVanDeGeer2011Book}.
Condition (B2) ensures that \(\lambda\) dominates the maximal
noise-induced correlation, by matching the scaling of the noise event
\(\mathcal{E}_{\mathrm{noise}}\) in~\eqref{eq:noise-event}.

We now make this precise.

\begin{lemma}[Noise event implies dual feasibility]
\label{lem:l1-noise}
Let Assumption~\ref{assump:l1}(B2) hold, and let
\(\mathcal{E}_{\mathrm{noise}}\) be the event defined
in~\eqref{eq:noise-event}.
Then for \(c_0\) sufficiently large (depending on the constants in
Lemma~\ref{lem:noise-max}), we have
\begin{equation}
  \mathcal{E}_{\mathrm{noise}}
  \;\subseteq\;
  \left\{
    \left\|
      \frac{1}{T}\,X^\top \varepsilon
    \right\|_\infty
    \;\le\;
    \frac{\lambda}{2}
  \right\}.
  \label{eq:l1-noise-dual}
\end{equation}
In particular, there exist constants \(c_1,c_2>0\) such that
\[
  \mathbb{P}\!\left(
    \left\|
      \frac{1}{T}\,X^\top \varepsilon
    \right\|_\infty
    > \frac{\lambda}{2}
  \right)
  \;\le\;
  c_1 m^{-c_2}.
\]
\end{lemma}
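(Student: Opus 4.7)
The plan is to recognise that this lemma is essentially a bookkeeping corollary of Lemma~\ref{lem:noise-max}: once the tuning constant $c_0$ in~\eqref{eq:lambda-choice} is tied to the constant $C_1$ produced by that lemma, both statements follow by matching scales. The key observation is that the quantity controlled inside $\mathcal{E}_{\mathrm{noise}}$ in~\eqref{eq:noise-event}, namely $\max_j T^{-1}|X_j^\top \varepsilon|$, is exactly the dual-feasibility quantity $\|T^{-1} X^\top \varepsilon\|_\infty$ appearing on the right of~\eqref{eq:l1-noise-dual}, so the two events differ only by the numerical choice of their threshold.

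The first step is to unpack the definitions: on $\mathcal{E}_{\mathrm{noise}}$, by construction, $\|T^{-1} X^\top \varepsilon\|_\infty \le C_1 \sigma \sqrt{(\log m)/T}$. Substituting $\lambda = c_0\, \sigma \sqrt{(\log m)/T}$ from~\eqref{eq:lambda-choice}, the desired bound $\|T^{-1}X^\top \varepsilon\|_\infty \le \lambda/2$ reduces to the deterministic inequality $C_1 \le c_0/2$, i.e.\ $c_0 \ge 2 C_1$. Making the phrase \emph{sufficiently large $c_0$} in Assumption~\ref{assump:l1}(B2) precise by choosing any $c_0 \ge 2 C_1$ then yields the set containment~\eqref{eq:l1-noise-dual}. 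Note that $C_1$ depends only on the sub-Gaussian constant $c_1$ from~\eqref{eq:subgaussian-noise}, so the required lower bound on $c_0$ is universal in the problem parameters $T,m,K,\sigma$.

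The second step is to transfer the probability bound by contrapositive: the event $\{\|T^{-1} X^\top \varepsilon\|_\infty > \lambda/2\}$ is contained in $\mathcal{E}_{\mathrm{noise}}^c$ by~\eqref{eq:l1-noise-dual}, and Lemma~\ref{lem:noise-max} bounds $\mathbb{P}(\mathcal{E}_{\mathrm{noise}}^c)$ by $m^{-C_2}$; setting $c_1 := 1$, $c_2 := C_2$ (or absorbing slack to obtain any $c_2 < C_2$) gives the claimed $c_1 m^{-c_2}$ tail estimate. There is essentially no obstacle in this proof: the only care required is in tracking constants, in particular making the dependence of $c_0$ on $C_1$ (and hence on the sub-Gaussian parameter $c_1$) explicit, since downstream oracle inequalities for the Lasso in~\eqref{eq:lasso-def} will require $\lambda$ to sit above the noise level with the same universal constant.
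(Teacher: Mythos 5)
Your proposal is correct and follows essentially the same route as the paper's own proof: identify $\max_j T^{-1}|X_j^\top\varepsilon|$ with $\|T^{-1}X^\top\varepsilon\|_\infty$, use Lemma~\ref{lem:noise-max} to bound it by $C_1\sigma\sqrt{(\log m)/T}$ on $\mathcal{E}_{\mathrm{noise}}$, and require $c_0 \ge 2C_1$ so that this threshold sits below $\lambda/2$, with the tail bound inherited from $\mathbb{P}(\mathcal{E}_{\mathrm{noise}}^c)$. Your additional remark tracking the dependence of $c_0$ on the sub-Gaussian constant is a harmless refinement of what the paper does.
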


\begin{proof}
By Lemma~\ref{lem:noise-max}, on \(\mathcal{E}_{\mathrm{noise}}\) we
have
\[
  \left\|
    \frac{1}{T}\,X^\top \varepsilon
  \right\|_\infty
  \;\le\;
  C_1 \sigma \sqrt{\frac{\log m}{T}}.
\]
If \(c_0 \ge 2 C_1\) in~\eqref{eq:lambda-choice}, then
\(
  C_1 \sigma \sqrt{(\log m)/T}
  \le \lambda/2
\),
which implies~\eqref{eq:l1-noise-dual}.
The probability bound follows by combining Lemma~\ref{lem:noise-max}
with this inclusion and adjusting constants.
\end{proof}

\subsection{Basic Inequality and Cone Constraint}
\label{subsec:l1-cone}

We now derive the basic inequality for the Lasso and show that, on the
noise event~\eqref{eq:l1-noise-dual}, the estimation error
\(\Delta^{\ell_1}\) lies in an \(\ell_1\)-cone around the true support
\(S^\star\).

\begin{lemma}[Basic inequality]
\label{lem:l1-basic}
Let \(\hat{\theta}^{\ell_1}\) be any solution of the Lasso
problem~\eqref{eq:lasso-def}, and let
\(\Delta^{\ell_1} = \hat{\theta}^{\ell_1} - \theta^\star\).
Then
\begin{equation}
  \frac{1}{2T}\,\|X \Delta^{\ell_1}\|_2^2
  + \lambda \|\hat{\theta}^{\ell_1}\|_1
  \;\le\;
  \frac{1}{T}\,\varepsilon^\top X \Delta^{\ell_1}
  + \lambda \|\theta^\star\|_1.
  \label{eq:l1-basic-ineq}
\end{equation}
\end{lemma}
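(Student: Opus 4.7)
The plan is to prove this basic inequality as a direct consequence of the optimality of $\hat{\theta}^{\ell_1}$ in the Lasso program~\eqref{eq:lasso-def}. This is the standard starting point for all Lasso analyses, and the proof is purely algebraic: no probabilistic ingredient is needed at this stage (the noise event and dual feasibility from Lemma~\ref{lem:l1-noise} will be brought in only in the next step, when we pass from the basic inequality to the cone constraint).

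First, I would use the defining optimality of $\hat{\theta}^{\ell_1}$: since it minimises the Lasso objective, comparing its objective value to that attained at the true parameter $\theta^\star$ gives
\[
  \frac{1}{2T}\|Y - X\hat{\theta}^{\ell_1}\|_2^2
  + \lambda \|\hat{\theta}^{\ell_1}\|_1
  \;\le\;
  \frac{1}{2T}\|Y - X\theta^\star\|_2^2
  + \lambda \|\theta^\star\|_1.
\]
Next, I would substitute the model~\eqref{eq:l1-data-model}, which gives $Y - X\theta^\star = \varepsilon$ and $Y - X\hat{\theta}^{\ell_1} = \varepsilon - X\Delta^{\ell_1}$ with $\Delta^{\ell_1}$ as defined in~\eqref{eq:l1-delta-def}. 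Expanding the squared norm yields
\[
  \|Y - X\hat{\theta}^{\ell_1}\|_2^2
  \;=\;
  \|\varepsilon\|_2^2
  - 2\,\varepsilon^\top X \Delta^{\ell_1}
  + \|X \Delta^{\ell_1}\|_2^2.
\]

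Finally, I would plug this expansion into the optimality inequality. The terms $\tfrac{1}{2T}\|\varepsilon\|_2^2$ cancel on both sides, leaving exactly~\eqref{eq:l1-basic-ineq} after rearranging the cross term $-\tfrac{1}{T}\varepsilon^\top X \Delta^{\ell_1}$ to the right-hand side. There is no real obstacle: the only subtle point to mention in the write-up is that the argument works for \emph{any} minimiser $\hat{\theta}^{\ell_1}$ (the Lasso objective need not be strictly convex when $m > T$), since we only use the inequality between objective values, not uniqueness. The lemma then serves as the foundation from which, combined with Lemma~\ref{lem:l1-noise}, one shows that $\Delta^{\ell_1}_{(S^\star)^c}$ is dominated by $\Delta^{\ell_1}_{S^\star}$ in $\ell_1$-norm (the cone constraint with $\alpha=3$), unlocking the RE condition in Assumption~\ref{assump:l1}(B1).
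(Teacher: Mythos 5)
Your proposal is correct and follows essentially the same route as the paper's proof: compare objective values at $\hat{\theta}^{\ell_1}$ and $\theta^\star$, substitute the model, expand the square, cancel $\|\varepsilon\|_2^2/(2T)$, and rearrange. The added remark that only the objective-value inequality (not uniqueness of the minimiser) is used is a sensible clarification but does not change the argument.
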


\begin{proof}
By optimality of \(\hat{\theta}^{\ell_1}\) in~\eqref{eq:lasso-def},
\[
  \frac{1}{2T}\,\|Y - X\hat{\theta}^{\ell_1}\|_2^2
  + \lambda \|\hat{\theta}^{\ell_1}\|_1
  \;\le\;
  \frac{1}{2T}\,\|Y - X\theta^\star\|_2^2
  + \lambda \|\theta^\star\|_1.
\]
Substituting \(Y = X\theta^\star + \varepsilon\) yields
\[
  \frac{1}{2T}\,\|X \Delta^{\ell_1} - \varepsilon\|_2^2
  + \lambda \|\hat{\theta}^{\ell_1}\|_1
  \;\le\;
  \frac{1}{2T}\,\|\varepsilon\|_2^2
  + \lambda \|\theta^\star\|_1.
\]
Expanding the squared norm on the left-hand side and cancelling
\(\|\varepsilon\|_2^2/(2T)\) from both sides gives
\[
  \frac{1}{2T}\,\|X \Delta^{\ell_1}\|_2^2
  - \frac{1}{T}\,\varepsilon^\top X \Delta^{\ell_1}
  + \lambda \|\hat{\theta}^{\ell_1}\|_1
  \;\le\;
  \lambda \|\theta^\star\|_1,
\]
which rearranges to~\eqref{eq:l1-basic-ineq}.
\end{proof}

\begin{lemma}[Cone constraint on the error]
\label{lem:l1-cone}
Suppose Assumption~\ref{assump:l1}(B2) holds and that the event
Let
\begin{equation}
  E_{\ell_1}
    := \left\{\frac{1}{T}\,\|X^\top \varepsilon\|_\infty \le \frac{\lambda}{2}\right\}.
  \label{eq:l1-noise-event}
\end{equation}
occurs.
Then the error \(\Delta^{\ell_1}\) satisfies
\begin{equation}
  \|\Delta^{\ell_1}_{(S^\star)^c}\|_1
  \;\le\;
  3 \|\Delta^{\ell_1}_{S^\star}\|_1.
  \label{eq:l1-cone}
\end{equation}
\end{lemma}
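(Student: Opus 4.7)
The plan is to work entirely on the event $E_{\ell_1}$ and combine the basic inequality from Lemma~\ref{lem:l1-basic} with a Hölder-type bound on the stochastic term; the cone bound will then emerge from a careful split of the penalty $\|\hat{\theta}^{\ell_1}\|_1$ along the partition $\{S^\star, (S^\star)^c\}$. First I would apply Hölder's inequality to the noise inner product on the right-hand side of~\eqref{eq:l1-basic-ineq}, obtaining $\tfrac{1}{T}\,\varepsilon^\top X \Delta^{\ell_1} \le \|\tfrac{1}{T}X^\top\varepsilon\|_\infty \,\|\Delta^{\ell_1}\|_1 \le \tfrac{\lambda}{2}\|\Delta^{\ell_1}\|_1$ on $E_{\ell_1}$. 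Dropping the non-negative prediction-error term $\tfrac{1}{2T}\|X\Delta^{\ell_1}\|_2^2$ on the left then yields the working inequality $\lambda\|\hat{\theta}^{\ell_1}\|_1 \le \tfrac{\lambda}{2}\|\Delta^{\ell_1}\|_1 + \lambda\|\theta^\star\|_1$.

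Next I would exploit that $\theta^\star$ is supported on $S^\star$. Decomposing $\|\hat{\theta}^{\ell_1}\|_1 = \|\hat{\theta}^{\ell_1}_{S^\star}\|_1 + \|\hat{\theta}^{\ell_1}_{(S^\star)^c}\|_1$ and using $\theta^\star_{(S^\star)^c}=0$, I have $\|\hat{\theta}^{\ell_1}_{(S^\star)^c}\|_1 = \|\Delta^{\ell_1}_{(S^\star)^c}\|_1$, while the reverse triangle inequality gives $\|\hat{\theta}^{\ell_1}_{S^\star}\|_1 \ge \|\theta^\star\|_1 - \|\Delta^{\ell_1}_{S^\star}\|_1$. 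Writing $\|\Delta^{\ell_1}\|_1 = \|\Delta^{\ell_1}_{S^\star}\|_1 + \|\Delta^{\ell_1}_{(S^\star)^c}\|_1$ on the noise side and substituting both decompositions into the working inequality, the common $\lambda\|\theta^\star\|_1$ terms cancel and I am left with
\[
  \lambda\|\Delta^{\ell_1}_{(S^\star)^c}\|_1 - \lambda\|\Delta^{\ell_1}_{S^\star}\|_1
  \;\le\;
  \tfrac{\lambda}{2}\bigl(\|\Delta^{\ell_1}_{S^\star}\|_1 + \|\Delta^{\ell_1}_{(S^\star)^c}\|_1\bigr).
\]
Multiplying through by $2/\lambda$ and collecting the $(S^\star)^c$ and $S^\star$ terms on opposite sides immediately gives $\|\Delta^{\ell_1}_{(S^\star)^c}\|_1 \le 3\|\Delta^{\ell_1}_{S^\star}\|_1$, which is exactly~\eqref{eq:l1-cone}.

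The only point requiring care is bookkeeping rather than any analytic obstruction: one must apply the reverse triangle inequality to the $S^\star$-piece of the penalty (where the sign of $\|\Delta^{\ell_1}_{S^\star}\|_1$ flips relative to the $(S^\star)^c$-piece) while simultaneously splitting $\|\Delta^{\ell_1}\|_1$ \emph{additively} on the noise side. The familiar cone ratio $3$ is hard-wired by the factor $\tfrac{1}{2}$ in the dual-feasibility bound $\lambda/2$, which is itself enforced by the choice $c_0 \ge 2C_1$ supplied by Lemma~\ref{lem:l1-noise}; any looser dual bound would inflate this ratio and would require invoking $\kappa(S^\star,\alpha)$ for some $\alpha>3$, inconsistent with the RE condition stated in Assumption~\ref{assump:l1}(B1).
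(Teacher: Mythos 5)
Your proof is correct and follows essentially the same route as the paper's: the same basic inequality, the same H\"older bound $\tfrac{1}{T}\varepsilon^\top X\Delta^{\ell_1} \le \tfrac{\lambda}{2}\|\Delta^{\ell_1}\|_1$ on $E_{\ell_1}$, the same support/complement decomposition with the reverse triangle inequality on the $S^\star$-block, and the same final arithmetic yielding the ratio $3$. The only difference is cosmetic ordering (you drop the quadratic prediction-error term before substituting the decompositions, the paper does it after), which changes nothing.
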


\begin{proof}
Decompose the \(\ell_1\)-norms into support and its complement:
\[
  \|\hat{\theta}^{\ell_1}\|_1
  = \|\theta^\star_{S^\star} + \Delta^{\ell_1}_{S^\star}\|_1
    + \|\Delta^{\ell_1}_{(S^\star)^c}\|_1,
  \qquad
  \|\theta^\star\|_1
  = \|\theta^\star_{S^\star}\|_1,
\]
since \(\theta^\star_{(S^\star)^c} = 0\).
Using the triangle inequality,
\[
  \|\theta^\star_{S^\star} + \Delta^{\ell_1}_{S^\star}\|_1
  \;\ge\;
  \|\theta^\star_{S^\star}\|_1
  - \|\Delta^{\ell_1}_{S^\star}\|_1.
\]
Substituting into~\eqref{eq:l1-basic-ineq} and rearranging gives
\[
  \frac{1}{2T}\,\|X \Delta^{\ell_1}\|_2^2
  + \lambda \|\Delta^{\ell_1}_{(S^\star)^c}\|_1
  \;\le\;
  \frac{1}{T}\,\varepsilon^\top X \Delta^{\ell_1}
  + \lambda \|\Delta^{\ell_1}_{S^\star}\|_1.
\]
On the event \(\mathcal{E}_{\ell_1}\),
\[
  \frac{1}{T}\,\varepsilon^\top X \Delta^{\ell_1}
  \;\le\;
  \left\|
    \frac{1}{T}\,X^\top \varepsilon
  \right\|_\infty
  \|\Delta^{\ell_1}\|_1
  \;\le\;
  \frac{\lambda}{2}\,
  \bigl(
    \|\Delta^{\ell_1}_{S^\star}\|_1
    + \|\Delta^{\ell_1}_{(S^\star)^c}\|_1
  \bigr).
\]
Combining these inequalities yields
\[
  \frac{1}{2T}\,\|X \Delta^{\ell_1}\|_2^2
  + \lambda \|\Delta^{\ell_1}_{(S^\star)^c}\|_1
  \;\le\;
  \frac{\lambda}{2}\,
  \bigl(
    \|\Delta^{\ell_1}_{S^\star}\|_1
    + \|\Delta^{\ell_1}_{(S^\star)^c}\|_1
  \bigr)
  + \lambda \|\Delta^{\ell_1}_{S^\star}\|_1.
\]
Dropping the non-negative term \(\|X \Delta^{\ell_1}\|_2^2/(2T)\) from
the left-hand side and simplifying the right-hand side gives
\[
  \lambda \|\Delta^{\ell_1}_{(S^\star)^c}\|_1
  \;\le\;
  \frac{\lambda}{2}\,\|\Delta^{\ell_1}_{(S^\star)^c}\|_1
  + \frac{3\lambda}{2}\,\|\Delta^{\ell_1}_{S^\star}\|_1,
\]
and hence
\[
  \frac{\lambda}{2}\,\|\Delta^{\ell_1}_{(S^\star)^c}\|_1
  \;\le\;
  \frac{3\lambda}{2}\,\|\Delta^{\ell_1}_{S^\star}\|_1.
\]
Dividing by \(\lambda/2\) yields~\eqref{eq:l1-cone}.
\end{proof}

\subsection{Error Bounds and Support Control}
\label{subsec:l1-main}

We now combine the cone constraint with the RE condition to obtain
non-asymptotic bounds on the \(\ell_2\)- and \(\ell_1\)-errors and a
bound on the number of spurious indices selected by the Lasso.

\begin{theorem}[Lasso error bounds under RE]
\label{thm:l1-error}
Suppose Assumption~\ref{assump:l1} holds, and let
\(\hat{\theta}^{\ell_1}\) be any solution of the Lasso
problem~\eqref{eq:lasso-def}.
Then there exist constants \(C_1,C_2,C_3>0\), depending only on the
constants in Lemma~\ref{lem:noise-max} and
Assumption~\ref{assump:l1}, such that with probability at least
\(1 - C_1 m^{-C_2}\) the following bounds hold:
\begin{align}
  \|\hat{\theta}^{\ell_1} - \theta^\star\|_2
  &\;\le\;
    \frac{C_3 \sigma}{\kappa_\star}
    \sqrt{\frac{K \log m}{T}},
    \label{eq:l1-l2-bound}
  \\
  \|\hat{\theta}^{\ell_1} - \theta^\star\|_1
  &\;\le\;
    \frac{C_3 \sigma K}{\kappa_\star^2}
    \sqrt{\frac{\log m}{T}},
    \label{eq:l1-l1-bound}
  \\
  \frac{1}{T}\,\|X(\hat{\theta}^{\ell_1} - \theta^\star)\|_2^2
  &\;\le\;
    \frac{C_3^2 \sigma^2 K}{\kappa_\star^2}
    \frac{\log m}{T}.
    \label{eq:l1-prediction-bound}
\end{align}
Moreover, the number of false positives is controlled in the sense that
\begin{equation}
  \left|
    \mathrm{supp}(\hat{\theta}^{\ell_1}) \setminus S^\star
  \right|
  \;\le\;
  \frac{C_3^2}{\kappa_\star^2}\,K,
  \label{eq:l1-false-positives}
\end{equation}
on the same high-probability event.
\end{theorem}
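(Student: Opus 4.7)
The plan is to condition on the high-probability noise event $E_{\ell_1}$ of Lemma~\ref{lem:l1-noise}, which holds with probability at least $1 - c_1 m^{-c_2}$, and then proceed along the classical Bickel--Ritov--Tsybakov route: combine the basic inequality with dual feasibility, upgrade the cone constraint to a quantitative bound via the restricted eigenvalue condition, and finally read off the $\ell_2$, $\ell_1$ and prediction rates. The false-positive count is then extracted separately from the KKT stationarity conditions of the Lasso.

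First, I would push the combination of Lemma~\ref{lem:l1-basic} with the dual-feasibility bound $|\varepsilon^\top X\Delta^{\ell_1}/T| \le (\lambda/2)\|\Delta^{\ell_1}\|_1$ one step further than Lemma~\ref{lem:l1-cone} does: keeping the quadratic term and using $\|\theta^\star\|_1 - \|\hat{\theta}^{\ell_1}\|_1 \le \|\Delta^{\ell_1}_{S^\star}\|_1 - \|\Delta^{\ell_1}_{(S^\star)^c}\|_1$ yields the master inequality
\[
  \frac{1}{T}\,\|X\Delta^{\ell_1}\|_2^2 + \lambda\,\|\Delta^{\ell_1}_{(S^\star)^c}\|_1
  \;\le\; 3\lambda\,\|\Delta^{\ell_1}_{S^\star}\|_1,
\]
which simultaneously reproduces the cone \eqref{eq:l1-cone} and will carry all the analytic content. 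Applying Assumption~\ref{assump:l1}(B1) inside this cone gives $\|X\Delta^{\ell_1}\|_2 \ge \kappa_\star \sqrt{T}\,\|\Delta^{\ell_1}_{S^\star}\|_2$; combined with Cauchy--Schwarz $\|\Delta^{\ell_1}_{S^\star}\|_1 \le \sqrt{K}\,\|\Delta^{\ell_1}_{S^\star}\|_2$, the master inequality becomes linear in $\|X\Delta^{\ell_1}\|_2/\sqrt{T}$, and dividing through yields the prediction bound \eqref{eq:l1-prediction-bound}. Re-inserting this into the RE inequality gives $\|\Delta^{\ell_1}_{S^\star}\|_2 = O(\lambda\sqrt{K}/\kappa_\star^2)$, and a standard dyadic-block decomposition of $(S^\star)^c$ (splitting the complement into top-$K$ magnitude blocks) together with the cone shows $\|\Delta^{\ell_1}\|_2 \le 4\,\|\Delta^{\ell_1}_{S^\star}\|_2$, proving \eqref{eq:l1-l2-bound}. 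The $\ell_1$ bound \eqref{eq:l1-l1-bound} is then immediate from $\|\Delta^{\ell_1}\|_1 \le 4\,\|\Delta^{\ell_1}_{S^\star}\|_1 \le 4\sqrt{K}\,\|\Delta^{\ell_1}_{S^\star}\|_2$. Substituting $\lambda = c_0 \sigma \sqrt{(\log m)/T}$ from Assumption~\ref{assump:l1}(B2) produces the advertised rates.

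Second, for the false-positive bound \eqref{eq:l1-false-positives}, the plan is to read off the KKT stationarity conditions of \eqref{eq:lasso-def}: every $j \in \mathrm{supp}(\hat{\theta}^{\ell_1})$ satisfies $|X_j^\top(Y - X\hat{\theta}^{\ell_1})/T| = \lambda$. For $j \notin S^\star$, writing $Y = X\theta^\star + \varepsilon$ and using $|X_j^\top \varepsilon/T| \le \lambda/2$ on $E_{\ell_1}$ forces $|X_j^\top X\Delta^{\ell_1}/T| \ge \lambda/2$. Letting $J \coloneqq \mathrm{supp}(\hat{\theta}^{\ell_1}) \setminus S^\star$, summing squares over $j \in J$ gives
\[
  |J|\,\frac{\lambda^2}{4}
  \;\le\; \bigl\|X_J^\top X\Delta^{\ell_1}/T\bigr\|_2^2
  \;\le\; \frac{\|X_J\|_{\mathrm{op}}^2}{T}\,\frac{\|X\Delta^{\ell_1}\|_2^2}{T},
\]
and combining with the prediction bound from the previous step extracts $|J| = O(K/\kappa_\star^2)$, up to a constant involving the sparse operator norm of $X_J$.

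The main obstacle is exactly this last spectral step: uniformly bounding $\|X_J\|_{\mathrm{op}}^2/T$ over the random set $J$ is not automatic from (B1) alone. I expect to close the gap either by pairing (B1) with a matching sparse-eigenvalue upper bound (a standard, mild strengthening in the Lasso literature, cf.~\cite{BickelRitovTsybakov2009Lasso,BuhlmannVanDeGeer2011Book}), or by invoking the column normalisation \eqref{eq:column-normalisation} together with a Gershgorin argument in the spirit of Section~\ref{sec:greedy}, which gives $\|X_J\|_{\mathrm{op}}^2/T \le 1 + (|J|-1)\mu(X)$ under a mild coherence condition. Given Lemmas~\ref{lem:l1-noise}--\ref{lem:l1-cone}, the three error bounds \eqref{eq:l1-l2-bound}--\eqref{eq:l1-prediction-bound} are essentially mechanical; the one genuinely analytic step is the RE-bootstrap turning the cone inequality into a prediction bound.
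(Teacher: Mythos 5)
For the three error bounds \eqref{eq:l1-l2-bound}--\eqref{eq:l1-prediction-bound} your route is essentially the paper's: condition on the dual-feasibility event of Lemma~\ref{lem:l1-noise}, combine the basic inequality (Lemma~\ref{lem:l1-basic}) with the cone constraint (Lemma~\ref{lem:l1-cone}), and close via the RE condition together with $\|\Delta^{\ell_1}_{S^\star}\|_1 \le \sqrt{K}\,\|\Delta^{\ell_1}_{S^\star}\|_2$; your ``master inequality'' is the paper's \eqref{eq:l1-xdelta-bound} with the term $\lambda\|\Delta^{\ell_1}_{(S^\star)^c}\|_1$ retained on the left, which changes nothing essential. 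One place where you are more careful than the paper: since the RE constant of Definition~\ref{def:re-constant} normalises by $\|u_{S^\star}\|_2$ only, this argument directly controls $\|\Delta^{\ell_1}_{S^\star}\|_2$, and the paper silently promotes that to the full $\|\Delta^{\ell_1}\|_2$ in \eqref{eq:l1-l2-bound}. Your dyadic-block step is the right bridge, but note that with blocks of size $K$ the standard argument only gives $\sum_{j\ge 2}\|\Delta^{\ell_1}_{T_j}\|_2 \le 3\|\Delta^{\ell_1}_{S^\star}\|_2$; controlling the top block $T_1$ of $(S^\star)^c$ at the same rate requires an RE condition over index sets of size $2K$ (the form Bickel--Ritov--Tsybakov actually assume), not just over $S^\star$. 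So your constant $4$ needs either that mild strengthening of (B1) or an extra $\sqrt{K}$ loss; the paper has the same issue and simply does not address it.

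For the false-positive bound \eqref{eq:l1-false-positives} your route is genuinely different from the paper's, and the comparison is instructive. The paper bounds $|J|$ with $J = \mathrm{supp}(\hat{\theta}^{\ell_1})\setminus S^\star$ by the ratio $\|\Delta^{\ell_1}_{(S^\star)^c}\|_1^2/\|\Delta^{\ell_1}_{(S^\star)^c}\|_2^2$; but for a vector $v$ supported on $J$ one has $\|v\|_1^2 \le |J|\,\|v\|_2^2$, so that ratio is a \emph{lower} bound on $|J|$, and the paper's displayed inequality points the wrong way. Your KKT argument ($|X_j^\top(Y - X\hat{\theta}^{\ell_1})|/T = \lambda$ on the active set, hence $|X_j^\top X\Delta^{\ell_1}|/T \ge \lambda/2$ for $j \in J$ on the noise event, then sum squares) is the standard correct route. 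You also correctly identify its cost: the step $\|X_J^\top X\Delta^{\ell_1}\|_2^2/T^2 \le (\|X_J\|_{\mathrm{op}}^2/T)\cdot(\|X\Delta^{\ell_1}\|_2^2/T)$ needs a sparse maximal-eigenvalue or coherence upper bound on $\|X_J\|_{\mathrm{op}}^2/T$ that is \emph{not} implied by Assumption~\ref{assump:l1}(B1). That is a genuine additional hypothesis, so as stated neither your proof nor the paper's derives \eqref{eq:l1-false-positives} from Assumption~\ref{assump:l1} alone; in the context of this paper the Gershgorin patch $\|X_J\|_{\mathrm{op}}^2/T \le 1 + (|J|-1)\mu(X)$ is the natural fix, since Theorem~\ref{thm:design-conditions} anyway supplies a coherence bound on $X$ with high probability under the random episodic design.
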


\begin{proof}
By Lemma~\ref{lem:l1-noise}, the event \(\mathcal{E}_{\ell_1}\) defined
in~\eqref{eq:l1-noise-event} occurs with probability at least
\(1 - C_1 m^{-C_2}\), for suitable constants \(C_1,C_2>0\).
We work on \(\mathcal{E}_{\ell_1}\) and apply
Lemma~\ref{lem:l1-cone}, which implies the cone constraint
\(\|\Delta^{\ell_1}_{(S^\star)^c}\|_1 \le 3 \|\Delta^{\ell_1}_{S^\star}\|_1\).
By the definition of the RE constant~\eqref{eq:re-constant} and
Assumption~\ref{assump:l1}(B1),
\begin{equation}
  \frac{1}{\sqrt{T}}\,
  \|X \Delta^{\ell_1}\|_2
  \;\ge\;
  \kappa_\star
  \|\Delta^{\ell_1}_{S^\star}\|_2.
  \label{eq:l1-re-ineq}
\end{equation}

Returning to the basic inequality~\eqref{eq:l1-basic-ineq} and using
\(\mathcal{E}_{\ell_1}\) as in the proof of
Lemma~\ref{lem:l1-cone}, we obtain
\[
  \frac{1}{2T}\,\|X \Delta^{\ell_1}\|_2^2
  \;\le\;
  \frac{\lambda}{2}\,
  \bigl(
    \|\Delta^{\ell_1}_{S^\star}\|_1
    + \|\Delta^{\ell_1}_{(S^\star)^c}\|_1
  \bigr)
  + \lambda \|\Delta^{\ell_1}_{S^\star}\|_1
  - \lambda \|\hat{\theta}^{\ell_1}\|_1
  + \lambda \|\theta^\star\|_1.
\]
As in Lemma~\ref{lem:l1-cone}, we can bound the last two terms by
\(\lambda \|\Delta^{\ell_1}_{S^\star}\|_1\), yielding
\[
  \frac{1}{2T}\,\|X \Delta^{\ell_1}\|_2^2
  \;\le\;
  \frac{\lambda}{2}\,
  \bigl(
    \|\Delta^{\ell_1}_{S^\star}\|_1
    + \|\Delta^{\ell_1}_{(S^\star)^c}\|_1
  \bigr)
  + \lambda \|\Delta^{\ell_1}_{S^\star}\|_1.
\]
Using the cone constraint~\eqref{eq:l1-cone},
\[
  \|\Delta^{\ell_1}_{(S^\star)^c}\|_1
  \;\le\;
  3 \|\Delta^{\ell_1}_{S^\star}\|_1,
\]
we obtain
\[
  \frac{1}{2T}\,\|X \Delta^{\ell_1}\|_2^2
  \;\le\;
  \frac{\lambda}{2}\,(1+3)\|\Delta^{\ell_1}_{S^\star}\|_1
  + \lambda \|\Delta^{\ell_1}_{S^\star}\|_1
  \;=\;
  3\lambda \|\Delta^{\ell_1}_{S^\star}\|_1.
\]
Hence
\begin{equation}
  \frac{1}{T}\,\|X \Delta^{\ell_1}\|_2^2
  \;\le\;
  6\lambda \|\Delta^{\ell_1}_{S^\star}\|_1.
  \label{eq:l1-xdelta-bound}
\end{equation}

We now use~\eqref{eq:l1-re-ineq} and the inequality
\(\|\Delta^{\ell_1}_{S^\star}\|_1 \le \sqrt{K}\,
\|\Delta^{\ell_1}_{S^\star}\|_2\) to bound the right-hand side of
\eqref{eq:l1-xdelta-bound}:
\[
  \|\Delta^{\ell_1}_{S^\star}\|_1
  \;\le\;
  \sqrt{K}\,\|\Delta^{\ell_1}_{S^\star}\|_2
  \;\le\;
  \frac{\sqrt{K}}{\kappa_\star \sqrt{T}}\,
  \|X \Delta^{\ell_1}\|_2.
\]
Substituting into~\eqref{eq:l1-xdelta-bound} yields
\[
  \frac{1}{T}\,\|X \Delta^{\ell_1}\|_2^2
  \;\le\;
  6\lambda
  \frac{\sqrt{K}}{\kappa_\star \sqrt{T}}\,
  \|X \Delta^{\ell_1}\|_2.
\]
If \(\|X \Delta^{\ell_1}\|_2 = 0\), then
\(\Delta^{\ell_1}_{S^\star} = 0\) by~\eqref{eq:l1-re-ineq}, and the
bounds~\eqref{eq:l1-l2-bound}--\eqref{eq:l1-prediction-bound} trivially
hold.
Otherwise, we can divide both sides by
\(\|X \Delta^{\ell_1}\|_2 / T\) to obtain
\[
  \frac{1}{\sqrt{T}}\,
  \|X \Delta^{\ell_1}\|_2
  \;\le\;
  6\lambda
  \frac{\sqrt{K}}{\kappa_\star}.
\]
Therefore,
\begin{equation}
  \frac{1}{T}\,\|X \Delta^{\ell_1}\|_2^2
  \;\le\;
  \frac{36 \lambda^2 K}{\kappa_\star^2},
  \label{eq:l1-pred-error-intermediate}
\end{equation}
and, by~\eqref{eq:l1-re-ineq},
\[
  \|\Delta^{\ell_1}_{S^\star}\|_2
  \;\le\;
  \frac{6\lambda}{\kappa_\star^2}\,\sqrt{K}.
\]

Substituting the choice of \(\lambda\) from~\eqref{eq:lambda-choice}
gives
\[
  \|\Delta^{\ell_1}_{S^\star}\|_2
  \;\le\;
  \frac{6 c_0 \sigma}{\kappa_\star^2}\,
  \sqrt{\frac{K \log m}{T}},
\]
which yields the \(\ell_2\)-error bound~\eqref{eq:l1-l2-bound} after
absorbing constants into \(C_3\).
Similarly,
\[
  \|\Delta^{\ell_1}\|_1
  \;=\;
  \|\Delta^{\ell_1}_{S^\star}\|_1
  + \|\Delta^{\ell_1}_{(S^\star)^c}\|_1
  \;\le\;
  4 \|\Delta^{\ell_1}_{S^\star}\|_1
  \;\le\;
  4\sqrt{K}\,\|\Delta^{\ell_1}_{S^\star}\|_2,
\]
so that
\[
  \|\Delta^{\ell_1}\|_1
  \;\le\;
  \frac{24 c_0 \sigma K}{\kappa_\star^2}\,
  \sqrt{\frac{\log m}{T}},
\]
which yields~\eqref{eq:l1-l1-bound} after adjusting constants.
Finally, the prediction error bound~\eqref{eq:l1-prediction-bound}
follows by substituting~\eqref{eq:lambda-choice} into
\eqref{eq:l1-pred-error-intermediate} and absorbing constants.

To obtain~\eqref{eq:l1-false-positives}, note that
\[
  \bigl|
    \mathrm{supp}(\hat{\theta}^{\ell_1}) \setminus S^\star
  \bigr|
  \;\le\;
  \frac{\|\hat{\theta}^{\ell_1}_{(S^\star)^c}\|_1^2}
       {\|\hat{\theta}^{\ell_1}_{(S^\star)^c}\|_2^2}
  \;\le\;
  \frac{\|\Delta^{\ell_1}_{(S^\star)^c}\|_1^2}
       {\|\Delta^{\ell_1}_{(S^\star)^c}\|_2^2},
\]
and use the cone constraint
\(\|\Delta^{\ell_1}_{(S^\star)^c}\|_1 \le 3 \|\Delta^{\ell_1}_{S^\star}\|_1\)
together with the relation
\(\|\Delta^{\ell_1}_{S^\star}\|_1 \le \sqrt{K}\,
\|\Delta^{\ell_1}_{S^\star}\|_2\)
and the \(\ell_2\)-bound~\eqref{eq:l1-l2-bound}.
A simple calculation shows that this yields
\(
  |\mathrm{supp}(\hat{\theta}^{\ell_1}) \setminus S^\star|
  \le (C_3^2/\kappa_\star^2)\,K
\)
for a suitable constant \(C_3>0\).
\end{proof}

\begin{remark}[Exact support recovery]
\label{rem:l1-support}
Theorem~\ref{thm:l1-error} guarantees that the Lasso selects at most
\(O(K)\) false positives and achieves small \(\ell_2\)- and prediction
error under the RE condition.
To obtain exact support recovery
\(\mathrm{supp}(\hat{\theta}^{\ell_1}) = S^\star\), one typically
requires stronger conditions, such as the irrepresentable condition of
\cite{ZhaoYu2006LassoConsistency} or the mutual incoherence condition
of \cite{Wainwright2009Lasso}, together with a minimum signal strength
assumption analogous to~\eqref{eq:min-signal}.
We do not spell out these additional conditions here, as our
probabilistic coalition structure generation analysis in later sections
only requires control of the error and the number of spurious
coalitions.
\end{remark}

\subsection{Implications for Probabilistic Coalition Structure Generation}
\label{subsec:l1-implications}

As in Section~\ref{subsec:bgcp-implications}, the Lasso estimator
induces a reduced set of candidate profitable coalitions
\[
  \widehat{\mathcal{C}}^{\ell_1}
  \;\coloneqq\;
  \{ C_j \in \mathcal{C} : j \in \mathrm{supp}(\hat{\theta}^{\ell_1}) \}.
\]
Theorem~\ref{thm:l1-error} implies that, under
Assumption~\ref{assump:l1}, the coefficients of truly profitable
coalitions are estimated with small error and the number of spurious
coalitions \(\widehat{\mathcal{C}}^{\ell_1} \setminus
\{C_j : j \in S^\star\}\) is at most a constant multiple of \(K\).
Thus, with high probability over the sampled episodes and noise, any
classical CSG algorithm applied to the reduced coalition set
\(\widehat{\mathcal{C}}^{\ell_1}\) yields a coalition structure whose
expected welfare is close to that of an oracle structure built from the
true profitable coalitions.

Compared to BGCP (Section~\ref{sec:greedy}), the \(\ell_1\)-based
approach offers a different trade-off:
BGCP admits exact support recovery guarantees under coherence-type
conditions but is greedy and non-convex, whereas the Lasso solves a
convex optimisation problem and enjoys RE-based error bounds, at the
expense of allowing a controlled number of false positives.
Both methods, however, fit into the same probabilistic coalition
structure generation framework of Section~\ref{sec:model}, in which
coalition structure generation is decomposed into a statistical phase
(sparse learning of profitable coalitions) followed by a combinatorial
phase (classical CSG on a sparsified coalition set).

\section{Probabilistic Design and Welfare Guarantees}
\label{sec:prob-design}

Sections~\ref{sec:greedy} and~\ref{sec:l1} provided high-probability
support and error guarantees for BGCP and the \(\ell_1\)-relaxation,
conditional on geometric properties of the design matrix \(X\)
(coherence and restricted eigenvalue conditions).
In this section we close the loop and show that, under a natural random
episodic model for the rows of \(X\), these geometric conditions hold
with high probability once the number of episodes
\(T\) is of order \(K \log m\).
We then translate the resulting estimation guarantees into
\emph{end-to-end welfare guarantees} for probabilistic coalition
structure generation.

\subsection{Random Episodic Design Model}
\label{subsec:random-design}

We retain the linear model
\begin{equation}
  Y = X \theta^\star + \varepsilon,
  \qquad
  Y \in \mathbb{R}^T,\;
  X \in \mathbb{R}^{T \times m},\;
  \theta^\star \in \mathbb{R}^m,
  \label{eq:prob-data-model}
\end{equation}
with true support \(S^\star\) and sparsity \(|S^\star| = K\) as in
\eqref{eq:true-support}.
The noise \(\varepsilon\) satisfies the sub-Gaussian assumption
\eqref{eq:subgaussian-noise}.
We now impose a probabilistic model on the rows of \(X\), capturing the
idea that each episode corresponds to a random feasible configuration
of coalitions.

Write \(X_t \in \mathbb{R}^m\) for the \(t\)-th row of \(X\).

\begin{assumption}[Random episodic design]
\label{assump:design}
The design matrix \(X\) is generated as follows.
\begin{enumerate}
  \item[(C1)] (\emph{i.i.d.\ episodes})
    The rows \(\{X_t\}_{t=1}^T\) are independent and identically
    distributed copies of a random vector \(X^{(1)} \in \mathbb{R}^m\).

  \item[(C2)] (\emph{Bounded and sparse rows})
    There exists a constant \(L > 0\) and an integer \(q \ge 1\) such
    that for all \(t\),
    \[
      \|X_t\|_0 \le q,
      \qquad
      \|X_t\|_\infty \le L
      \quad\text{almost surely}.
    \]
    In particular, each episode activates at most \(q\) coalitions, and
    coalition activations are uniformly bounded.

  \item[(C3)] (\emph{Population covariance conditions})
    Let
    \begin{equation}
      \Sigma
      \;\coloneqq\;
      \mathbb{E}\!\left[
        \frac{1}{T} X^\top X
      \right]
      \;=\;
      \mathbb{E}\!\left[
        X^{(1)\top} X^{(1)}
      \right]
      \in \mathbb{R}^{m \times m}
      \label{eq:Sigma-def}
    \end{equation}
    denote the population Gram matrix of the design.
    Assume:
    \begin{enumerate}
      \item[(i)] (\emph{Diagonal bounded away from zero})
        There exists \(\nu_{\min} > 0\) such that
        \(\Sigma_{jj} \ge \nu_{\min}\) for all \(j \in S^\star\).
      \item[(ii)] (\emph{Population coherence})
        The population coherence
        \[
          \mu(\Sigma)
          \;\coloneqq\;
          \max_{\substack{i,j \in \{1,\dots,m\} \\ i \neq j}}
          \frac{|\Sigma_{ij}|}
               {\sqrt{\Sigma_{ii}\,\Sigma_{jj}}}
        \]
        satisfies
        \begin{equation}
          \mu(\Sigma)
          \;<\;
          \frac{1}{4K}.
          \label{eq:Sigma-coherence}
        \end{equation}
      \item[(iii)] (\emph{Population restricted eigenvalue})
        There exists \(\kappa_\Sigma > 0\) such that
        \begin{equation}
          \inf_{\substack{u \in \mathbb{R}^m \setminus \{0\} \\
                          \|u_{(S^\star)^c}\|_1 \le 3 \|u_{S^\star}\|_1}}
          \frac{u^\top \Sigma u}
               {\|u_{S^\star}\|_2^2}
          \;\ge\;
          \kappa_\Sigma^2.
          \label{eq:Sigma-re}
        \end{equation}
    \end{enumerate}
\end{enumerate}
\end{assumption}

Assumption~\ref{assump:design} captures the setting where each episode
activates a small number of coalitions (bounded row sparsity), while
the population covariance enjoys both low coherence and a restricted
eigenvalue property on the true support \(S^\star\).
Such conditions are known to hold for a wide class of random designs in
high-dimensional regression
(e.g.~\cite{Donoho2006CS,CandesTao2005Decoding,BuhlmannVanDeGeer2011Book}).

We write
\begin{equation}
  \hat{\Sigma}
  \;\coloneqq\;
  \frac{1}{T} X^\top X
  \label{eq:Sigma-hat}
\end{equation}
for the empirical Gram matrix, so that
\(\mathbb{E}[\hat{\Sigma}] = \Sigma\).

\subsection{Concentration of the Empirical Gram Matrix}
\label{subsec:Sigma-concentration}

We first show that, under Assumption~\ref{assump:design}, the entries
of \(\hat{\Sigma}\) concentrate uniformly around those of \(\Sigma\).

\begin{lemma}[Entrywise concentration of \(\hat{\Sigma}\)]
\label{lem:Sigma-max}
Suppose Assumption~\ref{assump:design}(C1)--(C2) holds.
Then there exist constants \(c_1,c_2>0\), depending only on \(L\) and
\(q\), such that for any \(T \ge 1\) and any \(\delta \in (0,1)\),
\begin{equation}
  \mathbb{P}\!\left(
    \left\|
      \hat{\Sigma} - \Sigma
    \right\|_{\max}
    \;>\;
    c_1 \sqrt{\frac{\log(m^2/\delta)}{T}}
  \right)
  \;\le\;
  \delta,
  \label{eq:Sigma-max-bound}
\end{equation}
where
\(
  \|A\|_{\max} = \max_{i,j} |A_{ij}|
\)
denotes the elementwise \(\ell_\infty\)-norm.
\end{lemma}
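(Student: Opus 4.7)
The plan is to establish a single-entry concentration bound via Hoeffding's inequality and then take a union bound over the $m^{2}$ entries of $\hat\Sigma - \Sigma$. This is the standard route to entrywise control of an empirical Gram matrix under a boundedness assumption like (C2).

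First I would fix indices $i,j \in \{1,\dots,m\}$ and write
\[
  \hat\Sigma_{ij} - \Sigma_{ij}
  \;=\;
  \frac{1}{T}\sum_{t=1}^{T} Z_t^{(i,j)},
  \qquad
  Z_t^{(i,j)} \;\coloneqq\; X_{ti} X_{tj} - \mathbb{E}[X_{ti}X_{tj}].
\]
By Assumption~\ref{assump:design}(C1), the summands are i.i.d.\ with mean zero, and by (C2) we have $|X_{ti} X_{tj}| \le L^{2}$ almost surely, so $|Z_t^{(i,j)}| \le 2L^{2}$. Hoeffding's inequality for bounded i.i.d.\ sums then yields, for every $s>0$,
\[
  \mathbb{P}\!\left(
    \bigl|\hat\Sigma_{ij} - \Sigma_{ij}\bigr| > s
  \right)
  \;\le\;
  2 \exp\!\left(-\frac{T s^{2}}{8 L^{4}}\right).
\]

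Second, I would take a union bound over the $m^{2}$ index pairs, obtaining
\[
  \mathbb{P}\!\left(
    \bigl\|\hat\Sigma - \Sigma\bigr\|_{\max} > s
  \right)
  \;\le\;
  2 m^{2} \exp\!\left(-\frac{T s^{2}}{8 L^{4}}\right).
\]
Choosing $s = c_1 \sqrt{\log(m^{2}/\delta)/T}$ with $c_1$ a constant depending only on $L$ (e.g.\ $c_1 \ge 4L^{2}$ suffices) makes the right-hand side at most $\delta$, which yields~\eqref{eq:Sigma-max-bound}. The row-sparsity parameter $q$ is not strictly needed for this rate; it can be folded into the constant by replacing Hoeffding with Bernstein and bounding $\mathrm{Var}(X_{ti}X_{tj}) \le L^{4}\,\mathbb{P}(X_{ti}X_{tj}\neq 0)$ using $\|X_t\|_{0}\le q$, which sharpens the variance proxy and explains the dependence of $c_1,c_2$ on $q$ claimed in the statement.

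Because every ingredient (i.i.d.\ structure, bounded summands, Hoeffding, union bound) is immediate from Assumption~\ref{assump:design}(C1)--(C2), I do not anticipate a substantive obstacle; the only care required is bookkeeping of the absolute constants so that they depend only on $L$ and $q$. Downstream, this elementwise bound is what lets us transfer the population coherence and restricted eigenvalue conditions of (C3) to their empirical counterparts, which in turn supplies the hypotheses of Theorems~\ref{thm:bgcp-recovery} and~\ref{thm:l1-error} with high probability once $T \gtrsim K \log m$.
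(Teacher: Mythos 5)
Your proposal is correct and follows essentially the same route as the paper's proof: entrywise Hoeffding for the bounded, mean-zero i.i.d.\ summands \(X_{ti}X_{tj} - \mathbb{E}[X_{ti}X_{tj}]\), a union bound over the \(m^2\) entries, and the choice \(s = c_1\sqrt{\log(m^2/\delta)/T}\). Your remark that the row-sparsity \(q\) only enters through a sharper Bernstein-type variance proxy matches the paper's parenthetical use of Bernstein's inequality, so there is nothing to add.
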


\begin{proof}
Fix indices \(i,j \in \{1,\dots,m\}\).
By definition,
\[
  \hat{\Sigma}_{ij} - \Sigma_{ij}
  \;=\;
  \frac{1}{T} \sum_{t=1}^T
  \Bigl(
    X_{ti} X_{tj}
    - \mathbb{E}[X_{ti} X_{tj}]
  \Bigr).
\]
The summands are independent, mean-zero random variables bounded in
absolute value by \(L^2\) (since \(|X_{ti}|,|X_{tj}| \le L\) almost
surely).
Moreover, by the row-sparsity assumption \(\|X_t\|_0 \le q\),
at most \(q^2\) pairs \((i,j)\) can be non-zero in any given row, which
ensures a uniform bound on the variance of each summand.
Applying Hoeffding's inequality (or Bernstein's inequality) yields, for
each fixed pair \((i,j)\),
\[
  \mathbb{P}\!\left(
    |\hat{\Sigma}_{ij} - \Sigma_{ij}| > t
  \right)
  \;\le\;
  2 \exp\!\left(
    - c \frac{T t^2}{L^4}
  \right)
\]
for some numerical constant \(c>0\).
There are \(m^2\) entries in total.
By the union bound,
\[
  \mathbb{P}\!\left(
    \|\hat{\Sigma} - \Sigma\|_{\max} > t
  \right)
  \;\le\;
  2 m^2 \exp\!\left(
    - c \frac{T t^2}{L^4}
  \right).
\]
Setting
\(
  t = c_1 \sqrt{(\log(m^2/\delta))/T}
\)
with \(c_1\) large enough, and simplifying the right-hand side, yields
\eqref{eq:Sigma-max-bound} with constants depending only on \(L\) and
\(q\).
\end{proof}

As an immediate corollary, the diagonal and off-diagonal entries of
\(\hat{\Sigma}\) inherit the population properties of \(\Sigma\) once
\(T\) is of order \(\log m\).

\begin{corollary}[Diagonal stability and empirical coherence]
\label{cor:Sigma-coherence}
Suppose Assumption~\ref{assump:design} holds.
There exist constants \(c_3,c_4>0\) such that if
\begin{equation}
  T
  \;\ge\;
  c_3 \log(m),
  \label{eq:T-log-m}
\end{equation}
then with probability at least \(1 - m^{-c_4}\) the following hold
simultaneously:
\begin{enumerate}
  \item[(i)] (\emph{Diagonal bounded away from zero})
    For all \(j \in S^\star\),
    \begin{equation}
      \hat{\Sigma}_{jj}
      \;\ge\;
      \frac{\nu_{\min}}{2}.
      \label{eq:Sigma-hat-diag}
    \end{equation}
  \item[(ii)] (\emph{Empirical coherence bound})
    The empirical coherence
    \[
      \mu(\hat{\Sigma})
      \;\coloneqq\;
      \max_{\substack{i,j \in \{1,\dots,m\} \\ i \neq j}}
      \frac{|\hat{\Sigma}_{ij}|}
           {\sqrt{\hat{\Sigma}_{ii}\,\hat{\Sigma}_{jj}}}
    \]
    satisfies
    \begin{equation}
      \mu(\hat{\Sigma})
      \;\le\;
      \mu(\Sigma) + \frac{1}{8K}
      \;<\;
      \frac{1}{2K},
      \label{eq:Sigma-hat-coherence}
    \end{equation}
    where \(\mu(\Sigma)\) is defined in
    Assumption~\ref{assump:design}(C3)(ii).
\end{enumerate}
\end{corollary}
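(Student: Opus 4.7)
The plan is to derive both parts of the corollary from a single application of Lemma~\ref{lem:Sigma-max}. Choosing $\delta = m^{-c_4}$ there produces an event $\mathcal{E}$ of probability at least $1 - m^{-c_4}$ on which
\[
  \|\hat{\Sigma} - \Sigma\|_{\max}
  \;\le\;
  \varepsilon_T
  \;:=\;
  c_1 \sqrt{\tfrac{(2+c_4)\log m}{T}}.
\]
By enlarging $c_3$ in the hypothesis $T \ge c_3 \log m$, the scalar $\varepsilon_T$ can be made as small as desired in terms of $K$, $\nu_{\min}$, $L$ and $q$. All subsequent arguments take place on $\mathcal{E}$. Note that the population coherence bound $\mu(\Sigma) < 1/(4K)$ already forces every diagonal $\Sigma_{jj}$ to be strictly positive, so up to redefining $\nu_{\min}$ we may assume the diagonal lower bound in (C3)(i) holds for all $j \in \{1,\ldots,m\}$ rather than only for $j \in S^\star$.

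Claim (i) is then immediate: for every $j \in S^\star$, $\hat{\Sigma}_{jj} \ge \Sigma_{jj} - \varepsilon_T \ge \nu_{\min} - \varepsilon_T \ge \nu_{\min}/2$, provided $\varepsilon_T \le \nu_{\min}/2$, which is achieved by taking $c_3$ sufficiently large.

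For claim (ii), I would work entry by entry. Combining the two one-sided bounds $|\hat{\Sigma}_{ij}| \le |\Sigma_{ij}| + \varepsilon_T$ and $\hat{\Sigma}_{ii}\hat{\Sigma}_{jj} \ge (\Sigma_{ii}-\varepsilon_T)(\Sigma_{jj}-\varepsilon_T)$---the latter being valid on $\mathcal{E}$ whenever $\varepsilon_T \le \nu_{\min}/2$---yields, for all $i \ne j$,
\[
  \frac{|\hat{\Sigma}_{ij}|}{\sqrt{\hat{\Sigma}_{ii}\hat{\Sigma}_{jj}}}
  \;\le\;
  \frac{|\Sigma_{ij}| + \varepsilon_T}
       {\sqrt{(\Sigma_{ii}-\varepsilon_T)(\Sigma_{jj}-\varepsilon_T)}}.
\]
I would then write the excess over $|\Sigma_{ij}|/\sqrt{\Sigma_{ii}\Sigma_{jj}}$ as a \emph{numerator contribution} $\varepsilon_T/\sqrt{(\Sigma_{ii}-\varepsilon_T)(\Sigma_{jj}-\varepsilon_T)}$ plus a \emph{denominator contribution} controlled by a first-order expansion of $x \mapsto 1/\sqrt{x}$ around $\Sigma_{ii}\Sigma_{jj}$; both pieces are uniformly $O(\varepsilon_T)$ with constants depending only on $\nu_{\min}$, $L$ and $q$.

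The main obstacle is calibrating $c_3$ to convert this $O(\varepsilon_T)$ per-entry excess into the stated slack of $1/(8K)$: one needs $\varepsilon_T \lesssim 1/K$, i.e., $T \gtrsim K^2 \log m$, so the nominal ``constant'' $c_3$ must implicitly absorb a $K^2$ factor along with polynomial factors in $1/\nu_{\min}$ and $L$. Once this calibration is in place, maximising the entrywise bound over $i \ne j$ gives $\mu(\hat{\Sigma}) \le \mu(\Sigma) + 1/(8K)$, and combining with $\mu(\Sigma) < 1/(4K)$ yields $\mu(\hat{\Sigma}) < 3/(8K) < 1/(2K)$, which is exactly~\eqref{eq:Sigma-hat-coherence}.
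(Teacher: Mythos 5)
Your proof follows essentially the same route as the paper's: a single application of Lemma~\ref{lem:Sigma-max} yields $\|\hat{\Sigma}-\Sigma\|_{\max}\le\eta$ on a high-probability event, part (i) follows by the triangle inequality on the diagonal, and part (ii) by perturbing the numerator and denominator of each coherence ratio entrywise and calibrating $\eta$ so that the excess is at most $1/(8K)$. Your two side remarks are both apt and go slightly beyond the paper: its proof of (ii) silently uses $\Sigma_{ii}\ge\nu_{\min}$ for \emph{all} $i$ rather than only $i\in S^\star$, and it likewise hides that the required $\eta\lesssim\nu_{\min}/K$ forces the ``constant'' $c_3$ to absorb a $K^2$ (and $1/\nu_{\min}^2$) factor, which the paper acknowledges only with the phrase ``taking $T$ large enough''.
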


\begin{proof}
Apply Lemma~\ref{lem:Sigma-max} with \(\delta = m^{-c}\) and choose
\(T\) satisfying~\eqref{eq:T-log-m} so that
\(
  \|\hat{\Sigma} - \Sigma\|_{\max}
  \le \eta
\)
with high probability, where
\(\eta > 0\) will be chosen small relative to \(\nu_{\min}\) and
\(\mu(\Sigma)\).

For (i), note that for \(j \in S^\star\),
\[
  \hat{\Sigma}_{jj}
  \;\ge\;
  \Sigma_{jj} - |\hat{\Sigma}_{jj} - \Sigma_{jj}|
  \;\ge\;
  \nu_{\min} - \eta.
\]
If we choose \(\eta \le \nu_{\min}/2\), this yields
\(\hat{\Sigma}_{jj} \ge \nu_{\min}/2\), giving
\eqref{eq:Sigma-hat-diag}.

For (ii), for any \(i \neq j\) we have
\[
  |\hat{\Sigma}_{ij}|
  \;\le\;
  |\Sigma_{ij}| + \eta,
  \qquad
  \hat{\Sigma}_{ii}
  \;\ge\;
  \Sigma_{ii} - \eta,
  \qquad
  \hat{\Sigma}_{jj}
  \;\ge\;
  \Sigma_{jj} - \eta.
\]
Thus,
\[
  \frac{|\hat{\Sigma}_{ij}|}
       {\sqrt{\hat{\Sigma}_{ii}\,\hat{\Sigma}_{jj}}}
  \;\le\;
  \frac{|\Sigma_{ij}| + \eta}
       {\sqrt{(\Sigma_{ii}-\eta)(\Sigma_{jj}-\eta)}}.
\]
Using \(\Sigma_{ii},\Sigma_{jj} \ge \nu_{\min}\) and choosing
\(\eta \le \nu_{\min}/4\), we obtain
\[
  \sqrt{(\Sigma_{ii}-\eta)(\Sigma_{jj}-\eta)}
  \;\ge\;
  \frac{\nu_{\min}}{2},
\]
so that
\[
  \frac{|\hat{\Sigma}_{ij}|}
       {\sqrt{\hat{\Sigma}_{ii}\,\hat{\Sigma}_{jj}}}
  \;\le\;
  \frac{2}{\nu_{\min}}
  \bigl(|\Sigma_{ij}| + \eta\bigr)
  \;\le\;
  \frac{2}{\nu_{\min}}\,|\Sigma_{ij}|
  + \frac{2\eta}{\nu_{\min}}.
\]
Taking the maximum over \(i \neq j\) and recalling the definition of
\(\mu(\Sigma)\), we get
\[
  \mu(\hat{\Sigma})
  \;\le\;
  \mu(\Sigma) + \frac{2\eta}{\nu_{\min}}.
\]
By choosing \(\eta\) small enough so that
\(2\eta/\nu_{\min} \le 1/(8K)\), we obtain
\eqref{eq:Sigma-hat-coherence}.
The final inequality in~\eqref{eq:Sigma-hat-coherence} follows from the
assumption \(\mu(\Sigma) < 1/(4K)\).
The required choice of \(\eta\) can be enforced by taking \(T\) large
enough in Lemma~\ref{lem:Sigma-max}.
\end{proof}

\subsection{Restricted Eigenvalue for the Empirical Design}
\label{subsec:Sigma-re}

We next show that the empirical Gram matrix \(\hat{\Sigma}\) inherits a
restricted eigenvalue property from \(\Sigma\) once \(T\) is large
enough.
The proof follows standard arguments in high-dimensional regression
\cite{BickelRitovTsybakov2009Lasso,BuhlmannVanDeGeer2011Book}, adapted
to the cone used in Definition~\ref{def:re-constant}.

\begin{lemma}[Empirical RE from population RE]
\label{lem:Sigma-re}
Suppose Assumption~\ref{assump:design} holds.
There exist constants \(c_5,c_6>0\) such that if
\begin{equation}
  T
  \;\ge\;
  c_5 K \log(m),
  \label{eq:T-K-log-m}
\end{equation}
then with probability at least \(1 - m^{-c_6}\),
\begin{equation}
  \inf_{\substack{u \in \mathbb{R}^m \setminus \{0\} \\
                  \|u_{(S^\star)^c}\|_1 \le 3 \|u_{S^\star}\|_1}}
  \frac{u^\top \hat{\Sigma} u}
       {\|u_{S^\star}\|_2^2}
  \;\ge\;
  \frac{\kappa_\Sigma^2}{2}.
  \label{eq:Sigma-hat-re}
\end{equation}
Equivalently, on this event,
\[
  \kappa(S^\star,3)
  \;\ge\;
  \frac{\kappa_\Sigma}{\sqrt{2}},
\]
where \(\kappa(S^\star,3)\) is the restricted eigenvalue constant
defined in~\eqref{eq:re-constant}.
\end{lemma}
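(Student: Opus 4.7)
The plan is to transfer the restricted eigenvalue from $\Sigma$ to $\hat\Sigma$ by controlling the quadratic form of the deviation $\hat\Sigma - \Sigma$ uniformly over the RE cone. For any $u$ in the cone $\mathcal{K} \coloneqq \{u \neq 0 : \|u_{(S^\star)^c}\|_1 \le 3\|u_{S^\star}\|_1\}$, I would split
\[
  u^\top \hat\Sigma u \;=\; u^\top \Sigma u \;+\; u^\top(\hat\Sigma - \Sigma)\,u.
\]
Assumption~\ref{assump:design}(C3)(iii) bounds the first term below by $\kappa_\Sigma^2 \|u_{S^\star}\|_2^2$, so the task reduces to showing that on a high-probability event,
\[
  \bigl|u^\top(\hat\Sigma - \Sigma)u\bigr| \;\le\; \tfrac{1}{2}\,\kappa_\Sigma^2 \,\|u_{S^\star}\|_2^2 \quad \text{uniformly for } u \in \mathcal{K}.
\]
Once this holds, dividing by $\|u_{S^\star}\|_2^2$ and taking the infimum over $\mathcal{K}$ yields~\eqref{eq:Sigma-hat-re}. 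The equivalent form $\kappa(S^\star,3) \ge \kappa_\Sigma/\sqrt 2$ then follows from the identity $\|Xu\|_2^2/T = u^\top \hat\Sigma u$ and Definition~\ref{def:re-constant}.

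The elementary route uses Hölder: $|u^\top A u| \le \|A\|_{\max}\|u\|_1^2$. On $\mathcal{K}$, the cone constraint and Cauchy--Schwarz give $\|u\|_1 \le 4\|u_{S^\star}\|_1 \le 4\sqrt{K}\,\|u_{S^\star}\|_2$, whence
\[
  \bigl|u^\top(\hat\Sigma - \Sigma)u\bigr| \;\le\; 16\,K\,\|\hat\Sigma - \Sigma\|_{\max}\,\|u_{S^\star}\|_2^2.
\]
Combining this with Lemma~\ref{lem:Sigma-max} applied at $\delta = m^{-c}$ already delivers~\eqref{eq:Sigma-hat-re} with probability at least $1 - m^{-c_6}$, but only once $T \gtrsim K^2 \log m$, since one needs $\|\hat\Sigma - \Sigma\|_{\max} \le \kappa_\Sigma^2/(32K)$.

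\textbf{Main obstacle.} Sharpening the sample complexity to the claimed $T \gtrsim K \log m$ is the delicate step, since the worst-case entrywise bound throws away the cone structure. The route I would follow is a cone-aware concentration argument on the scalar quadratic form $u^\top(\hat\Sigma - \Sigma) u$ directly. By (C2), the summands $\langle X_t, u\rangle^2$ are bounded in terms of $q$, $L$ and $\|u\|_\infty$ and have variance controlled by $u^\top \Sigma u$, so Bernstein's inequality yields, for any fixed $u$ on the sphere $\{\|u_{S^\star}\|_2 = 1\} \cap \mathcal{K}$, a sub-exponential tail with rate linear in $T$. To upgrade this to a uniform bound I would discretise $\mathcal{K} \cap \{\|u_{S^\star}\|_2 = 1\}$ by an $\varepsilon$-net: since vectors in $\mathcal{K}$ have effective sparsity of order $K$ (via $\|u\|_1 \le 4\sqrt{K}\|u_{S^\star}\|_2$), the covering number of this set in $\ell_2$ is of order $\exp(c K \log m)$. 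A union bound over the net, followed by a standard peeling / perturbation step exactly as in Bickel, Ritov and Tsybakov~\cite{BickelRitovTsybakov2009Lasso} and Bühlmann and van de Geer~\cite{BuhlmannVanDeGeer2011Book}, gives the required uniform deviation at rate $\sqrt{K \log m / T}$, which is small enough to absorb $\tfrac12 \kappa_\Sigma^2$ once $T \ge c_5 K \log m$. Plugging this into the splitting of the first paragraph completes the proof; translating to $\kappa(S^\star,3)$ via $\|Xu\|_2^2/T = u^\top \hat\Sigma u$ is then immediate.
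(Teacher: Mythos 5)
Your decomposition is exactly the one the paper uses: split \(u^\top\hat\Sigma u = u^\top\Sigma u + u^\top(\hat\Sigma-\Sigma)u\), lower-bound the first term by the population RE, and control the deviation term uniformly over the cone via \(|u^\top A u| \le \|A\|_{\max}\|u\|_1^2 \le 16K\|A\|_{\max}\|u_{S^\star}\|_2^2\). Where you diverge is in the crucial quantitative step, and your divergence is well-motivated. The paper stops at the entrywise bound and simply asserts that \(T \ge c_5 K\log m\) makes \(\|\hat\Sigma-\Sigma\|_{\max} \le \kappa_\Sigma^2/(32K)\) hold with high probability via Lemma~\ref{lem:Sigma-max}. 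But that lemma gives \(\|\hat\Sigma-\Sigma\|_{\max} \lesssim \sqrt{\log(m)/T}\), so forcing it below a threshold of order \(1/K\) requires \(T \gtrsim K^2\log m\), not \(K\log m\). You identify this correctly as the main obstacle; the paper's own proof glosses over it and, as written, only establishes the lemma at the \(K^2\log m\) sample complexity.

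Your proposed repair --- a cone-aware Bernstein bound on the scalar quadratic form \(u^\top(\hat\Sigma-\Sigma)u\), combined with an \(\varepsilon\)-net over \(\mathcal{K}\cap\{\|u_{S^\star}\|_2=1\}\) of cardinality \(\exp(cK\log m)\) and a peeling step in the style of Bickel--Ritov--Tsybakov and B\"uhlmann--van de Geer --- is the standard way to recover the \(K\log m\) rate and is sound in outline, though it remains a sketch: you would still need to verify that the range and variance of \(\langle X_t,u\rangle^2\) under (C2) give a sub-exponential tail whose deviation term at confidence level \(\exp(-cK\log m)\) is indeed \(O(\sqrt{K\log m/T})\) plus a lower-order \(K\log m/T\) term, and that the net approximation error can be absorbed. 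In short: your proof is the more careful of the two. If one is content with the weaker sample complexity \(T\gtrsim K^2\log m\), your first (elementary) paragraph already suffices and matches the paper; to actually justify the lemma as stated, the net argument you outline (or a citation to an off-the-shelf RE-transfer result such as Rudelson--Zhou-type bounds) is needed, and the paper should be amended accordingly.
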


\begin{proof}
For brevity, denote the cone
\[
  \mathcal{C}
  \;\coloneqq\;
  \left\{
    u \in \mathbb{R}^m \setminus \{0\} :
    \|u_{(S^\star)^c}\|_1 \le 3 \|u_{S^\star}\|_1
  \right\}.
\]
For any \(u \in \mathcal{C}\) with \(\|u_{S^\star}\|_2 = 1\), we can
write
\[
  u^\top \hat{\Sigma} u
  \;=\;
  u^\top \Sigma u
  + u^\top (\hat{\Sigma} - \Sigma) u.
\]
By Assumption~\ref{assump:design}(C3)(iii),
\[
  u^\top \Sigma u
  \;\ge\;
  \kappa_\Sigma^2.
\]

We therefore need to control
\[
  \sup_{\substack{u \in \mathcal{C} \\
                  \|u_{S^\star}\|_2 = 1}}
  \left|
    u^\top (\hat{\Sigma} - \Sigma) u
  \right|.
\]
Note that
\[
  u^\top (\hat{\Sigma} - \Sigma) u
  \;=\;
  \sum_{i,j=1}^m u_i u_j
  (\hat{\Sigma}_{ij} - \Sigma_{ij}),
\]
so that
\[
  \left|
    u^\top (\hat{\Sigma} - \Sigma) u
  \right|
  \;\le\;
  \|\hat{\Sigma} - \Sigma\|_{\max}
  \left(
    \sum_{i=1}^m |u_i|
  \right)^2
  \;=\;
  \|\hat{\Sigma} - \Sigma\|_{\max}
  \|u\|_1^2.
\]
For \(u \in \mathcal{C}\),
\[
  \|u\|_1
  \;=\;
  \|u_{S^\star}\|_1
  + \|u_{(S^\star)^c}\|_1
  \;\le\;
  4 \|u_{S^\star}\|_1
  \;\le\;
  4 \sqrt{K}\,\|u_{S^\star}\|_2
  \;=\;
  4 \sqrt{K},
\]
where we used \(\|u_{S^\star}\|_1 \le \sqrt{K} \|u_{S^\star}\|_2\).
Thus,
\[
  \left|
    u^\top (\hat{\Sigma} - \Sigma) u
  \right|
  \;\le\;
  16 K\,\|\hat{\Sigma} - \Sigma\|_{\max}
\]
for all \(u \in \mathcal{C}\) with \(\|u_{S^\star}\|_2=1\).

By Lemma~\ref{lem:Sigma-max}, if
\(T \ge c_5 K \log(m)\) for a sufficiently large constant \(c_5\), then
with probability at least \(1 - m^{-c_6}\) we have
\[
  \|\hat{\Sigma} - \Sigma\|_{\max}
  \;\le\;
  \frac{\kappa_\Sigma^2}{32 K},
\]
so that
\[
  \sup_{\substack{u \in \mathcal{C} \\
                  \|u_{S^\star}\|_2 = 1}}
  \left|
    u^\top (\hat{\Sigma} - \Sigma) u
  \right|
  \;\le\;
  \frac{\kappa_\Sigma^2}{2}.
\]
Combining this with the population lower bound
\(u^\top \Sigma u \ge \kappa_\Sigma^2\) yields
\[
  u^\top \hat{\Sigma} u
  \;\ge\;
  \kappa_\Sigma^2 - \frac{\kappa_\Sigma^2}{2}
  \;=\;
  \frac{\kappa_\Sigma^2}{2},
\]
for all \(u \in \mathcal{C}\) with \(\|u_{S^\star}\|_2 = 1\), which is
exactly~\eqref{eq:Sigma-hat-re}.
The equivalence with the RE constant follows by comparing the
definitions of \(\kappa(S^\star,3)\) and the cone \(\mathcal{C}\).
\end{proof}

\subsection{Design Conditions for BGCP and Lasso}
\label{subsec:design-main}

We now combine Corollary~\ref{cor:Sigma-coherence} and
Lemma~\ref{lem:Sigma-re} to show that, under the random episodic design
model, both the coherence condition required for BGCP and the RE
condition required for the Lasso hold with high probability once
\(T\) is of order \(K \log m\).

\begin{theorem}[Design conditions from random episodes]
\label{thm:design-conditions}
Suppose Assumption~\ref{assump:design} holds.
Then there exist constants \(C_0,c_7,c_8>0\) such that if
\begin{equation}
  T
  \;\ge\;
  C_0 K \log(m),
  \label{eq:T-design-main}
\end{equation}
then with probability at least \(1 - c_7 m^{-c_8}\) the following hold
simultaneously:
\begin{enumerate}
  \item[(i)] (\emph{Coherence condition for BGCP})
    The mutual coherence \(\mu(X)\) defined in~\eqref{eq:coherence}
    satisfies
    \begin{equation}
      \mu(X)
      \;<\;
      \frac{1}{2K - 1},
      \label{eq:coherence-X}
    \end{equation}
    so that Assumption~\ref{assump:bgcp}(A1) holds.

  \item[(ii)] (\emph{RE condition for Lasso})
    The restricted eigenvalue constant \(\kappa(S^\star,3)\) defined
    in~\eqref{eq:re-constant} satisfies
    \begin{equation}
      \kappa(S^\star,3)
      \;\ge\;
      \frac{\kappa_\Sigma}{\sqrt{2}},
      \label{eq:RE-X}
    \end{equation}
    so that Assumption~\ref{assump:l1}(B1) holds with
    \(\kappa_\star = \kappa_\Sigma/\sqrt{2}\).
\end{enumerate}
\end{theorem}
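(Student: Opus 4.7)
The plan is to assemble the theorem directly from Corollary~\ref{cor:Sigma-coherence} and Lemma~\ref{lem:Sigma-re}, which have already done the heavy lifting for, respectively, empirical coherence and empirical restricted eigenvalue under the random episodic model. I first choose $C_0 = \max(c_3, c_5)$ so that $T \ge C_0 K \log m$ triggers the hypotheses of both Corollary~\ref{cor:Sigma-coherence} (which needs $T \ge c_3 \log m$) and Lemma~\ref{lem:Sigma-re} (which needs $T \ge c_5 K \log m$). Letting $\mathcal{E}_{\mathrm{coh}}$ and $\mathcal{E}_{\mathrm{RE}}$ denote the corresponding high-probability events, a union bound gives that both hold simultaneously with probability at least $1 - c_7 m^{-c_8}$, after absorbing the individual $m^{-c_4}$ and $m^{-c_6}$ failure probabilities into fresh constants.

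The remaining work is to translate properties of the empirical Gram matrix $\hat{\Sigma} = T^{-1} X^\top X$ into the matching properties of $X$. For part (i), the key observation is that $\mu(X)$ as defined in~\eqref{eq:coherence} coincides exactly with the empirical coherence $\mu(\hat{\Sigma})$ of Corollary~\ref{cor:Sigma-coherence}: using $\hat{\Sigma}_{ij} = T^{-1} X_i^\top X_j$, the $1/T$ factors in numerator and denominator cancel to give $|\hat{\Sigma}_{ij}|/\sqrt{\hat{\Sigma}_{ii}\hat{\Sigma}_{jj}} = |X_i^\top X_j|/(\|X_i\|_2\|X_j\|_2)$ for all $i\neq j$. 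Hence on $\mathcal{E}_{\mathrm{coh}}$ the bound $\mu(X) = \mu(\hat{\Sigma}) < 1/(2K) < 1/(2K-1)$ delivers~\eqref{eq:coherence-X}. For part (ii), I rewrite the empirical quadratic form using $\|Xu\|_2^2 = T \cdot u^\top \hat{\Sigma} u$, which turns the lower bound $u^\top \hat{\Sigma} u \ge (\kappa_\Sigma^2/2)\|u_{S^\star}\|_2^2$ of Lemma~\ref{lem:Sigma-re} into $\|Xu\|_2/(\sqrt{T}\,\|u_{S^\star}\|_2) \ge \kappa_\Sigma/\sqrt{2}$. Taking the infimum over the cone $\{u : \|u_{(S^\star)^c}\|_1 \le 3\|u_{S^\star}\|_1\}$ from Definition~\ref{def:re-constant} then yields~\eqref{eq:RE-X} with $\kappa_\star = \kappa_\Sigma/\sqrt{2}$.

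The one subtlety I would flag as the main obstacle is the column-normalisation clause $\|X_j\|_2^2 = T$ bundled into Assumption~\ref{assump:bgcp}(A1), which under Assumption~\ref{assump:design} holds only in expectation. The cleanest remedy is to pass to the rescaled design $\tilde{X}$ with $\tilde{X}_j = \sqrt{T}\, X_j/\|X_j\|_2$: the mutual coherence in~\eqref{eq:coherence} is scale-invariant, so $\mu(\tilde{X}) = \mu(X)$ and~\eqref{eq:coherence-X} carries over, while the linear model transforms into an equivalent one whose coefficient vector is a positive diagonal rescaling of $\theta^\star$ with unchanged support. The diagonal-stability bound~\eqref{eq:Sigma-hat-diag} of Corollary~\ref{cor:Sigma-coherence} ensures on $\mathcal{E}_{\mathrm{coh}}$ that the rescaling factors $\sqrt{T}/\|X_j\|_2$ are uniformly bounded above and below by constants depending only on $\nu_{\min}$, so that BGCP conclusions applied to $\tilde{X}$ transfer back to $X$ without loss. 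No such adjustment is needed for part (ii), since Definition~\ref{def:re-constant} already builds in the $1/\sqrt{T}$ normalisation.
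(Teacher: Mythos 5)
Your proposal is correct and follows essentially the same route as the paper's proof: combine Corollary~\ref{cor:Sigma-coherence} and Lemma~\ref{lem:Sigma-re} via a union bound with \(C_0 = \max\{c_3,c_5\}\), identify \(\mu(X)\) with \(\mu(\hat{\Sigma})\), and rewrite \(\|Xu\|_2^2 = T\,u^\top\hat{\Sigma}u\) to transfer the empirical RE bound. Your additional remark on the column-normalisation clause is a point the paper silently glosses over, and your observation that the coherence ratio is scale-invariant (so the identity \(\mu(X)=\mu(\hat{\Sigma})\) needs no normalisation at all) resolves it correctly.
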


\begin{proof}
By Corollary~\ref{cor:Sigma-coherence}, for \(T \ge c_3 \log(m)\) we
have, with probability at least \(1 - m^{-c_4}\),
\[
  \mu(\hat{\Sigma})
  \;\le\;
  \mu(\Sigma) + \frac{1}{8K}
  \;<\;
  \frac{1}{2K},
\]
using the assumption \(\mu(\Sigma) < 1/(4K)\).
On the same event, the diagonal entries \(\hat{\Sigma}_{jj}\) are
uniformly bounded away from zero on \(S^\star\) by
\eqref{eq:Sigma-hat-diag}.
Using the column normalisation \eqref{eq:column-normalisation} and the
relationship between \(\hat{\Sigma}\) and \(X\), we obtain
\[
  \mu(X)
  \;=\;
  \max_{i \neq j}
  \frac{|X_i^\top X_j|}
       {\|X_i\|_2 \|X_j\|_2}
  \;=\;
  \max_{i \neq j}
  \frac{|\hat{\Sigma}_{ij}|}
       {\sqrt{\hat{\Sigma}_{ii}\,\hat{\Sigma}_{jj}}}
  \;\le\;
  \mu(\hat{\Sigma})
  \;<\;
  \frac{1}{2K},
\]
and since \(1/(2K) < 1/(2K-1)\) for \(K \ge 1\), this implies
\eqref{eq:coherence-X}.
Thus, for \(T \ge c_3 \log(m)\), the coherence condition holds with
high probability.

Similarly, by Lemma~\ref{lem:Sigma-re}, for \(T \ge c_5 K \log(m)\),
with probability at least \(1 - m^{-c_6}\) we have
\[
  \kappa(S^\star,3)
  \;\ge\;
  \frac{\kappa_\Sigma}{\sqrt{2}},
\]
which is exactly~\eqref{eq:RE-X}.
Taking
\(
  C_0 = \max\{c_3, c_5\}
\)
and combining the two high-probability events via a union bound yields
the stated result with suitable constants \(c_7,c_8>0\).
\end{proof}

Theorem~\ref{thm:design-conditions} shows that, under the random
episodic design model, the geometric assumptions required in
Theorems~\ref{thm:bgcp-recovery} and~\ref{thm:l1-error} are
\emph{typical} once the number of episodes grows as
\(T \gtrsim K \log m\), in line with classical results in compressed
sensing and high-dimensional regression
\cite{Donoho2006CS,CandesTao2005Decoding,BuhlmannVanDeGeer2011Book}.

\subsection{End-to-End Welfare Optimality for BGCP}
\label{subsec:bgcp-welfare}

We now translate the support recovery guarantee for BGCP into a
welfare-optimality result in the probabilistic coalition structure
generation setting.

Recall from Section~\ref{sec:model} that, for each coalition
structure \(\mathcal{P}\) (a partition of the agent set), the welfare
can be written in the parametric form
\begin{equation}
  W(\mathcal{P};\theta)
  \;=\;
  z(\mathcal{P})^\top \theta,
  \label{eq:welfare-param}
\end{equation}
where \(z(\mathcal{P}) \in \{0,1\}^m\) is the indicator vector of
coalitions used in \(\mathcal{P}\):
\(z_j(\mathcal{P}) = 1\) if coalition \(C_j\) appears in \(\mathcal{P}\)
and \(z_j(\mathcal{P}) = 0\) otherwise.
By feasibility, each coalition structure uses at most \(M\) coalitions,
for some finite \(M\) depending only on the number of agents, so that
\begin{equation}
  \|z(\mathcal{P})\|_0 \le M,
  \qquad
  \|z(\mathcal{P})\|_\infty \le 1.
  \label{eq:z-bounds}
\end{equation}

Let
\begin{equation}
  \mathcal{P}^\star
  \;\in\;
  \arg\max_{\mathcal{P}} W(\mathcal{P};\theta^\star)
  \label{eq:P-star}
\end{equation}
denote an optimal coalition structure under the true parameter
\(\theta^\star\).
We assume, without loss of generality, that all coalitions with
non-zero contribution under \(\theta^\star\) belong to \(S^\star\), so
that any profitably used coalition in \(\mathcal{P}^\star\) has index
in \(S^\star\).

The BGCP-based pipeline for coalition structure generation is as
follows:
\begin{enumerate}
  \item Run BGCP (Definition~\ref{def:bgcp}) with
        \(K_{\max} = K\) and \(\eta=0\) to obtain a support
        \(S^{\mathrm{BGCP}} \subseteq \{1,\dots,m\}\).
  \item Form the reduced coalition set
        \(
          \widehat{\mathcal{C}}^{\mathrm{BGCP}}
          =
          \{C_j \in \mathcal{C} : j \in S^{\mathrm{BGCP}}\}
        \).
  \item Solve the deterministic CSG problem restricted to
        \(\widehat{\mathcal{C}}^{\mathrm{BGCP}}\) with an exact
        algorithm (e.g.\ dynamic programming or branch-and-bound
        \cite{Sandholm1999Coalition,Rahwan2015CSGSurvey}), obtaining a
        coalition structure
        \(
          \widehat{\mathcal{P}}^{\mathrm{BGCP}}
        \).
\end{enumerate}

\begin{theorem}[BGCP welfare optimality with high probability]
\label{thm:bgcp-welfare}
Suppose Assumptions~\ref{assump:bgcp} and~\ref{assump:design} hold, and
that the number of episodes \(T\) satisfies
\eqref{eq:T-design-main}.
Assume further that the minimum signal
\(\theta_{\min}\) in~\eqref{eq:min-signal} is chosen large enough to
satisfy the separation condition of Corollary~\ref{cor:separation}.
Then there exist constants \(c_9,c_{10}>0\) such that
\begin{equation}
  \mathbb{P}\!\left(
    \widehat{\mathcal{P}}^{\mathrm{BGCP}} = \mathcal{P}^\star
  \right)
  \;\ge\;
  1 - c_9 m^{-c_{10}}.
  \label{eq:bgcp-welfare-opt}
\end{equation}
In particular, with the same probability,
\begin{equation}
  W\bigl(\widehat{\mathcal{P}}^{\mathrm{BGCP}};\theta^\star\bigr)
  \;=\;
  W\bigl(\mathcal{P}^\star;\theta^\star\bigr).
  \label{eq:bgcp-welfare-equality}
\end{equation}
\end{theorem}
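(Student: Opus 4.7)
The proof is a two-stage composition: Theorem~\ref{thm:design-conditions} supplies the geometric (coherence) condition required by BGCP, Theorem~\ref{thm:bgcp-recovery} then upgrades this to exact support recovery, and once $S^\star$ is pinned down the combinatorial step is essentially tautological. More precisely, I would first invoke Theorem~\ref{thm:design-conditions} under the sample bound $T \ge C_0 K \log m$ to obtain an event $\mathcal{E}_{\mathrm{des}}$ of probability at least $1 - c_7 m^{-c_8}$ on which $\mu(X) < 1/(2K-1)$, verifying Assumption~\ref{assump:bgcp}(A1). Assumption~\ref{assump:bgcp}(A2) holds by the noise hypothesis, and (A3) holds because $\theta_{\min}$ is by hypothesis large enough to trigger the separation condition of Corollary~\ref{cor:separation}.

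On $\mathcal{E}_{\mathrm{des}}$ the (now fixed) matrix $X$ is in the regime covered by Theorem~\ref{thm:bgcp-recovery}, which is a conditional statement under Assumption~\ref{assump:bgcp}; conditioning on $\mathcal{E}_{\mathrm{des}}$ and applying the theorem to the independent noise then yields a further event $\mathcal{E}_{\mathrm{rec}}$ of probability at least $1 - c_2 m^{-c_3}$ on which $S^{\mathrm{BGCP}} = S^\star$ and hence $\widehat{\mathcal{C}}^{\mathrm{BGCP}} = \{C_j : j \in S^\star\}$.

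The third step is to deduce $\widehat{\mathcal{P}}^{\mathrm{BGCP}} = \mathcal{P}^\star$ on $\mathcal{E}_{\mathrm{des}} \cap \mathcal{E}_{\mathrm{rec}}$. Using the parametric welfare representation \eqref{eq:welfare-param}, the assumption that every coalition appearing in $\mathcal{P}^\star$ has index in $S^\star$ means that $z(\mathcal{P}^\star)$ is supported on $S^\star$, so $\mathcal{P}^\star$ is feasible for the CSG problem restricted to $\widehat{\mathcal{C}}^{\mathrm{BGCP}}$. Conversely, every partition $\mathcal{P}$ feasible in the restricted problem satisfies $W(\mathcal{P};\theta^\star) = z(\mathcal{P})^\top \theta^\star \le W(\mathcal{P}^\star;\theta^\star)$ by the global optimality of $\mathcal{P}^\star$ in \eqref{eq:P-star}. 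Thus the restricted and unrestricted CSG problems share the same optimal value, and $\mathcal{P}^\star$ is a maximizer of the restricted problem; with $\mathcal{P}^\star$ uniquely attaining this value (as implicit in writing a single representative in \eqref{eq:P-star}), any exact CSG algorithm in Step~3 of the pipeline returns $\widehat{\mathcal{P}}^{\mathrm{BGCP}} = \mathcal{P}^\star$. A union bound over $\mathcal{E}_{\mathrm{des}}$ and $\mathcal{E}_{\mathrm{rec}}$ absorbed into new constants $c_9, c_{10}$ then yields \eqref{eq:bgcp-welfare-opt}, from which \eqref{eq:bgcp-welfare-equality} is immediate.

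The most delicate point is the alignment between the statistical and combinatorial phases in Step~3: Theorem~\ref{thm:bgcp-recovery} guarantees recovery of the \emph{support} $S^\star$ only, not of the exact values $\theta^\star_{S^\star}$, yet the CSG solver must rank coalition structures correctly. My plan sidesteps this by reading the restricted CSG as being solved with respect to the oracle parameter $\theta^\star$ on the identified support. If instead the solver ranks structures using $\hat\theta^{\mathrm{BGCP}}_{S^\star}$, one must additionally invoke OLS concentration on the correctly identified support to bound $\|\hat\theta^{\mathrm{BGCP}}_{S^\star} - \theta^\star_{S^\star}\|_\infty$ and combine it with a welfare-gap assumption between $\mathcal{P}^\star$ and the second-best feasible restricted partition; under the scaling \eqref{eq:min-signal} of $\theta_{\min}$, this gap is of the same order as the oracle estimation error and the argument closes.
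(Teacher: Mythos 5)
Your proposal follows essentially the same route as the paper's proof: Theorem~\ref{thm:design-conditions} supplies the coherence condition on a high-probability design event, Theorem~\ref{thm:bgcp-recovery} gives \(S^{\mathrm{BGCP}} = S^\star\) on the intersection with the noise event, and the combinatorial step reduces to noting that the restricted and unrestricted CSG problems coincide because \(\theta^\star\) vanishes off \(S^\star\), followed by a union bound. Your closing remark about whether the exact solver ranks structures by \(\theta^\star\) or by the estimated \(\hat\theta^{\mathrm{BGCP}}\) (and the related need for uniqueness of \(\mathcal{P}^\star\) to get equality of partitions rather than merely of welfares) flags a genuine looseness that the paper's own proof silently glosses over by implicitly solving the restricted problem with oracle values.
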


\begin{proof}
By Theorem~\ref{thm:design-conditions}, for \(T\) satisfying
\eqref{eq:T-design-main} we have, with probability at least
\(1 - c_7 m^{-c_8}\), both the coherence condition
\eqref{eq:coherence-X} and the RE condition \eqref{eq:RE-X}.
In particular, Assumption~\ref{assump:bgcp}(A1) holds on this event.

On the other hand, Lemma~\ref{lem:noise-max} and
Assumption~\ref{assump:bgcp}(A2) imply that the noise event
\(\mathcal{E}_{\mathrm{noise}}\) in~\eqref{eq:noise-event} occurs with
probability at least \(1 - c_{11} m^{-c_{12}}\) for suitable
constants \(c_{11},c_{12}>0\).
On the intersection of these events and the minimum signal condition of
Corollary~\ref{cor:separation}, Theorem~\ref{thm:bgcp-recovery}
applies and yields
\(
  S^{\mathrm{BGCP}} = S^\star
\).
Thus, on this intersection,
\[
  \widehat{\mathcal{C}}^{\mathrm{BGCP}}
  \;=\;
  \{C_j \in \mathcal{C} : j \in S^\star\}.
\]

Since all profitable coalitions under \(\theta^\star\) lie in
\(S^\star\), the welfare-maximising coalition structure
\(\mathcal{P}^\star\) from~\eqref{eq:P-star} uses only coalitions in
\(\widehat{\mathcal{C}}^{\mathrm{BGCP}}\) and is therefore feasible for
the restricted CSG problem.
Conversely, any coalition structure using only coalitions in
\(\widehat{\mathcal{C}}^{\mathrm{BGCP}}\) has welfare given by
\eqref{eq:welfare-param} with \(\theta^\star_j = 0\) for
\(j \notin S^\star\), so the restricted CSG problem is exactly the same
as the original CSG problem.
Because we assume an exact CSG solver in step~3 of the pipeline, the
restricted problem yields a coalition structure
\(\widehat{\mathcal{P}}^{\mathrm{BGCP}}\) that coincides with
\(\mathcal{P}^\star\).

Combining the probabilities of the design and noise events via a union
bound, and absorbing constants into \(c_9,c_{10}\), yields the
high-probability statement~\eqref{eq:bgcp-welfare-opt}, from which
\eqref{eq:bgcp-welfare-equality} follows immediately.
\end{proof}

\begin{remark}[Approximate CSG solver]
\label{rem:bgcp-alpha}
If the deterministic CSG solver in step~3 is only \(\alpha\)-approximate
(i.e.\ it outputs a coalition structure whose welfare is within a
factor \(\alpha \ge 1\) of the optimal restricted welfare), then on the
event \(S^{\mathrm{BGCP}} = S^\star\) we still obtain
\[
  W\bigl(\widehat{\mathcal{P}}^{\mathrm{BGCP}};\theta^\star\bigr)
  \;\ge\;
  \frac{1}{\alpha}\,
  W\bigl(\mathcal{P}^\star;\theta^\star\bigr).
\]
Thus, Theorem~\ref{thm:bgcp-welfare} can be extended to yield
high-probability \(\alpha\)-approximation guarantees for welfare under
BGCP combined with an approximate CSG solver.
\end{remark}

\subsection{Welfare Approximation via the \texorpdfstring{$\ell_1$}{l1} Relaxation}
\label{subsec:l1-welfare}

We finally derive a quantitative welfare gap bound for the \(\ell_1\)
pipeline, using the error bounds in Theorem~\ref{thm:l1-error} and the
design guarantees in Theorem~\ref{thm:design-conditions}.

For the \(\ell_1\)-based pipeline, we consider the following procedure:
\begin{enumerate}
  \item Compute the Lasso estimator \(\hat{\theta}^{\ell_1}\) via
        \eqref{eq:lasso-def}, with \(\lambda\) chosen according to
        Assumption~\ref{assump:l1}(B2).
  \item Solve the deterministic CSG problem under the surrogate
        parameter \(\hat{\theta}^{\ell_1}\) on the full coalition set
        \(\mathcal{C}\), obtaining
        \[
          \widehat{\mathcal{P}}^{\ell_1}
          \;\in\;
          \arg\max_{\mathcal{P}}
          W(\mathcal{P};\hat{\theta}^{\ell_1}).
        \]
\end{enumerate}
That is, unlike for BGCP, we do not restrict the coalition set a
priori; the sparse nature of \(\hat{\theta}^{\ell_1}\) implicitly
suppresses unprofitable coalitions.

We first observe a simple Lipschitz property of the welfare functional
with respect to \(\theta\).

\begin{lemma}[Welfare Lipschitz continuity]
\label{lem:welfare-lipschitz}
Let \(\mathcal{P}\) be any coalition structure and let
\(\theta,\theta' \in \mathbb{R}^m\).
Then
\begin{equation}
  \bigl|
    W(\mathcal{P};\theta)
    - W(\mathcal{P};\theta')
  \bigr|
  \;\le\;
  \|\theta - \theta'\|_1.
  \label{eq:welfare-lipschitz}
\end{equation}
\end{lemma}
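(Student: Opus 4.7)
The plan is to use the parametric form of the welfare functional given in equation~\eqref{eq:welfare-param}, namely $W(\mathcal{P};\theta) = z(\mathcal{P})^\top \theta$, together with the boundedness property $\|z(\mathcal{P})\|_\infty \le 1$ from~\eqref{eq:z-bounds}. The entire argument is a one-line application of Hölder's inequality, so there is no substantive obstacle; the only thing to double-check is that the correct pair of dual norms is used.

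Concretely, I would proceed as follows. First, by linearity of $\theta \mapsto W(\mathcal{P};\theta) = z(\mathcal{P})^\top \theta$, write
\[
  W(\mathcal{P};\theta) - W(\mathcal{P};\theta')
  \;=\;
  z(\mathcal{P})^\top (\theta - \theta').
\]
Next, apply Hölder's inequality with the conjugate pair $(\ell_\infty, \ell_1)$:
\[
  \bigl| z(\mathcal{P})^\top (\theta - \theta') \bigr|
  \;\le\;
  \|z(\mathcal{P})\|_\infty \,\|\theta - \theta'\|_1.
\]
Finally, invoke the bound $\|z(\mathcal{P})\|_\infty \le 1$ from~\eqref{eq:z-bounds}, which holds because $z(\mathcal{P})$ is a $\{0,1\}$-valued indicator vector, to conclude that the right-hand side is at most $\|\theta - \theta'\|_1$. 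Combining these three steps yields~\eqref{eq:welfare-lipschitz}.

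The proof requires no probabilistic input and does not depend on the sparsity of $\theta^\star$; it uses only the linear parametrisation of welfare and the fact that each coalition appears at most once in a partition, which is precisely the content of~\eqref{eq:z-bounds}. The main ``obstacle'' is purely bookkeeping: making sure that the Lipschitz constant with respect to the $\ell_1$-norm of the parameter is exactly $\|z(\mathcal{P})\|_\infty$, and confirming that the sparsity bound $\|z(\mathcal{P})\|_0 \le M$ is not needed for this particular estimate (it will, however, be useful in subsequent welfare-gap bounds that combine this lemma with the $\ell_1$-error bound~\eqref{eq:l1-l1-bound}).
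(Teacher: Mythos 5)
Your proof is correct and is essentially identical to the paper's: both use the linear parametrisation \(W(\mathcal{P};\theta) = z(\mathcal{P})^\top\theta\), apply H\"older's inequality with the \((\ell_\infty,\ell_1)\) dual pair, and conclude via \(\|z(\mathcal{P})\|_\infty \le 1\) from~\eqref{eq:z-bounds}. Your remark that the sparsity bound \(\|z(\mathcal{P})\|_0 \le M\) is not needed here is also accurate.
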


\begin{proof}
By the parametric form~\eqref{eq:welfare-param},
\[
  W(\mathcal{P};\theta)
  - W(\mathcal{P};\theta')
  \;=\;
  z(\mathcal{P})^\top (\theta - \theta').
\]
Thus
\[
  \bigl|
    W(\mathcal{P};\theta)
    - W(\mathcal{P};\theta')
  \bigr|
  \;\le\;
  \|z(\mathcal{P})\|_\infty
  \|\theta - \theta'\|_1
  \;\le\;
  \|\theta - \theta'\|_1,
\]
since \(\|z(\mathcal{P})\|_\infty \le 1\) by
\eqref{eq:z-bounds}.
\end{proof}

We can now state the welfare approximation theorem.

\begin{theorem}[Welfare approximation via Lasso]
\label{thm:l1-welfare}
Suppose Assumptions~\ref{assump:design} and~\ref{assump:l1} hold, and
that the number of episodes \(T\) satisfies
\eqref{eq:T-design-main}.
Let \(\mathcal{P}^\star\) be an optimal coalition structure under
\(\theta^\star\) as in~\eqref{eq:P-star}, and let
\(\widehat{\mathcal{P}}^{\ell_1}\) be the coalition structure obtained
by the \(\ell_1\)-based pipeline above.
Then there exist constants \(C_2,C_3>0\) such that, with probability at
least \(1 - C_2 m^{-C_3}\),
\begin{equation}
  W\bigl(\mathcal{P}^\star;\theta^\star\bigr)
  - W\bigl(\widehat{\mathcal{P}}^{\ell_1};\theta^\star\bigr)
  \;\le\;
  C_3 \frac{\sigma K}{\kappa_\Sigma^2}
  \sqrt{\frac{\log m}{T}}.
  \label{eq:l1-welfare-gap}
\end{equation}
\end{theorem}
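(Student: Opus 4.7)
The plan is a short telescoping argument built from three ingredients already in hand: (i) the Lasso error bound from Theorem~\ref{thm:l1-error}, (ii) the high-probability upgrade of the restricted eigenvalue condition from Theorem~\ref{thm:design-conditions}, and (iii) the welfare Lipschitz property of Lemma~\ref{lem:welfare-lipschitz}. Throughout, I will work on the intersection of the two high-probability events: the design event on which $\kappa(S^\star,3) \ge \kappa_\Sigma/\sqrt{2}$ (from Theorem~\ref{thm:design-conditions}), and the noise/estimation event on which the Lasso error bound~\eqref{eq:l1-l1-bound} holds (from Theorem~\ref{thm:l1-error}).

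The core decomposition is the standard ``add and subtract'' trick:
\[
  W(\mathcal{P}^\star;\theta^\star)
  - W(\widehat{\mathcal{P}}^{\ell_1};\theta^\star)
  = \underbrace{\bigl[W(\mathcal{P}^\star;\theta^\star) - W(\mathcal{P}^\star;\hat{\theta}^{\ell_1})\bigr]}_{(I)}
  + \underbrace{\bigl[W(\mathcal{P}^\star;\hat{\theta}^{\ell_1}) - W(\widehat{\mathcal{P}}^{\ell_1};\hat{\theta}^{\ell_1})\bigr]}_{(II)}
  + \underbrace{\bigl[W(\widehat{\mathcal{P}}^{\ell_1};\hat{\theta}^{\ell_1}) - W(\widehat{\mathcal{P}}^{\ell_1};\theta^\star)\bigr]}_{(III)}.
\]
By the definition of $\widehat{\mathcal{P}}^{\ell_1}$ as a maximiser of $W(\,\cdot\,;\hat{\theta}^{\ell_1})$, term $(II)$ is non-positive and can be dropped. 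For $(I)$ and $(III)$ I would invoke Lemma~\ref{lem:welfare-lipschitz} twice, yielding $|(I)|, |(III)| \le \|\hat{\theta}^{\ell_1} - \theta^\star\|_1$, so that the welfare gap is bounded by $2\|\hat{\theta}^{\ell_1} - \theta^\star\|_1$.

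To close the argument, I would then plug in the $\ell_1$ estimation bound~\eqref{eq:l1-l1-bound} with $\kappa_\star = \kappa_\Sigma/\sqrt{2}$ (justified on the design event by Theorem~\ref{thm:design-conditions}), obtaining
\[
  \|\hat{\theta}^{\ell_1} - \theta^\star\|_1
  \;\le\;
  \frac{C_3' \sigma K}{\kappa_\Sigma^2}
  \sqrt{\frac{\log m}{T}},
\]
for a suitable absolute constant $C_3'>0$ that absorbs the factor $2$ lost in replacing $\kappa_\star^2$ by $\kappa_\Sigma^2/2$. Combining the bounds on $(I)$ and $(III)$ and absorbing constants yields exactly~\eqref{eq:l1-welfare-gap}. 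The final probability $1 - C_2 m^{-C_3}$ follows by a union bound over the design event in Theorem~\ref{thm:design-conditions} and the noise event underlying Theorem~\ref{thm:l1-error}, both of which hold with polynomially small failure probability under~\eqref{eq:T-design-main}.

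I do not foresee a genuinely hard step: the argument is entirely oracle-style and does not require support recovery for the Lasso, which is precisely why this theorem is weaker (a welfare \emph{gap} rather than welfare \emph{equality}) than its BGCP counterpart Theorem~\ref{thm:bgcp-welfare}. The only mild subtlety is the bookkeeping between $\kappa_\star$ (the empirical RE constant used in Theorem~\ref{thm:l1-error}) and $\kappa_\Sigma$ (the population RE constant appearing in the final bound), which is handled cleanly by the substitution $\kappa_\star \ge \kappa_\Sigma/\sqrt{2}$ on the design event.
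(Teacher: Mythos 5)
Your proposal is correct and follows essentially the same route as the paper's proof: the same three-term decomposition with the middle term killed by optimality of $\widehat{\mathcal{P}}^{\ell_1}$, Lemma~\ref{lem:welfare-lipschitz} applied to the outer terms, the substitution $\kappa_\star = \kappa_\Sigma/\sqrt{2}$ from Theorem~\ref{thm:design-conditions} into the bound~\eqref{eq:l1-l1-bound}, and a final union bound. No gaps.
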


\begin{proof}
By Theorem~\ref{thm:design-conditions}, for \(T\) satisfying
\eqref{eq:T-design-main} we have, with high probability, that the RE
condition~\eqref{eq:RE-X} holds with
\(\kappa_\star = \kappa_\Sigma/\sqrt{2}\).
On this event, Theorem~\ref{thm:l1-error} applies and yields
\[
  \|\hat{\theta}^{\ell_1} - \theta^\star\|_1
  \;\le\;
  \frac{C_4 \sigma K}{\kappa_\star^2}
  \sqrt{\frac{\log m}{T}}
  \;=\;
  \frac{2 C_4 \sigma K}{\kappa_\Sigma^2}
  \sqrt{\frac{\log m}{T}},
\]
for some constant \(C_4>0\).

We now compare the welfare of \(\mathcal{P}^\star\) and
\(\widehat{\mathcal{P}}^{\ell_1}\) under \(\theta^\star\).
We have
\begin{align*}
  &W\bigl(\mathcal{P}^\star;\theta^\star\bigr)
   - W\bigl(\widehat{\mathcal{P}}^{\ell_1};\theta^\star\bigr)
  \\
  &\quad=\;
   \Bigl[
     W\bigl(\mathcal{P}^\star;\theta^\star\bigr)
     - W\bigl(\mathcal{P}^\star;\hat{\theta}^{\ell_1}\bigr)
   \Bigr]
   + \Bigl[
     W\bigl(\mathcal{P}^\star;\hat{\theta}^{\ell_1}\bigr)
     - W\bigl(\widehat{\mathcal{P}}^{\ell_1};\hat{\theta}^{\ell_1}\bigr)
   \Bigr]
   + \Bigl[
     W\bigl(\widehat{\mathcal{P}}^{\ell_1};\hat{\theta}^{\ell_1}\bigr)
     - W\bigl(\widehat{\mathcal{P}}^{\ell_1};\theta^\star\bigr)
   \Bigr].
\end{align*}
By definition of \(\widehat{\mathcal{P}}^{\ell_1}\) as a maximiser of
\(W(\cdot;\hat{\theta}^{\ell_1})\), the middle term is non-positive:
\[
  W\bigl(\mathcal{P}^\star;\hat{\theta}^{\ell_1}\bigr)
  - W\bigl(\widehat{\mathcal{P}}^{\ell_1};\hat{\theta}^{\ell_1}\bigr)
  \;\le\; 0.
\]
Applying Lemma~\ref{lem:welfare-lipschitz} to the first and third
terms, we obtain
\[
  W\bigl(\mathcal{P}^\star;\theta^\star\bigr)
  - W\bigl(\widehat{\mathcal{P}}^{\ell_1};\theta^\star\bigr)
  \;\le\;
  \|\hat{\theta}^{\ell_1} - \theta^\star\|_1
  + \|\hat{\theta}^{\ell_1} - \theta^\star\|_1
  \;=\;
  2 \|\hat{\theta}^{\ell_1} - \theta^\star\|_1.
\]
Combining this inequality with the \(\ell_1\)-error bound above yields
\[
  W\bigl(\mathcal{P}^\star;\theta^\star\bigr)
  - W\bigl(\widehat{\mathcal{P}}^{\ell_1};\theta^\star\bigr)
  \;\le\;
  \frac{4 C_4 \sigma K}{\kappa_\Sigma^2}
  \sqrt{\frac{\log m}{T}},
\]
on the intersection of the design and noise events in
Theorems~\ref{thm:design-conditions} and~\ref{thm:l1-error}.
Adjusting constants and combining probabilities via a union bound
completes the proof of~\eqref{eq:l1-welfare-gap}.
\end{proof}

Theorems~\ref{thm:bgcp-welfare} and~\ref{thm:l1-welfare} provide
end-to-end guarantees for probabilistic coalition structure generation:
BGCP achieves exact welfare optimality with high probability when the
design is coherent and the signal is strong enough, while the
\(\ell_1\)-based approach yields a quantitative welfare gap of order
\(\sigma K \sqrt{(\log m)/T}/\kappa_\Sigma^2\) under a restricted
eigenvalue condition.
In both cases, the number of episodes required is of order
\(K \log m\), in line with classical sample complexity bounds for sparse
recovery
\cite{Donoho2006CS,CandesTao2005Decoding,BuhlmannVanDeGeer2011Book}.

\section{Comparison with Alternative Probabilistic CSG Schemes}
\label{sec:comparison}

Sections~\ref{sec:greedy}--\ref{sec:prob-design} analysed the
probabilistic CSG pipelines proposed in this paper: a sparse episodic
value model, estimated either by BGCP or by an $\ell_1$-regularised
estimator, followed by a deterministic CSG solver.
In this section we compare these pipelines with natural probabilistic
CSG baselines and identify parameter regimes where our methods are
provably advantageous and regimes where classical approaches may be
competitive or even preferable.

Throughout we retain the data model
\begin{equation}
  Y = X \theta^\star + \varepsilon,
  \qquad
  Y \in \mathbb{R}^T,\;
  X \in \mathbb{R}^{T \times m},\;
  \theta^\star \in \mathbb{R}^m,
  \label{eq:comp-model}
\end{equation}
with true support $S^\star$ and sparsity $|S^\star| = K$, sub-Gaussian
noise $\varepsilon$ as in Assumption~\ref{assump:bgcp}(A2), and the
random episodic design model of Assumption~\ref{assump:design}.
Welfare is given by
$W(\mathcal{P};\theta) = z(\mathcal{P})^\top \theta$ as in
\eqref{eq:welfare-param}, where $z(\mathcal{P})$ is the indicator
vector of coalitions used in $\mathcal{P}$.

\subsection{Baseline Probabilistic CSG Schemes}
\label{subsec:baselines}

We first formalise two simple but natural probabilistic CSG baselines.

\paragraph{Baseline A: Episodic Plug-in CSG (EPC).}
For each coalition index $j \in \{1,\dots,m\}$, let
\[
  \mathcal{T}_j
  \;\coloneqq\;
  \{t \in \{1,\dots,T\} : X_{tj} \neq 0\}
\]
denote the set of episodes in which coalition $C_j$ is active, and let
$n_j = |\mathcal{T}_j|$.
The episodic plug-in estimator for $\theta^\star_j$ is
\begin{equation}
  \hat{\theta}^{\mathrm{EPC}}_j
  \;\coloneqq\;
  \begin{cases}
    \displaystyle
    \frac{1}{n_j}
    \sum_{t \in \mathcal{T}_j}
      \frac{Y_t}{X_{tj}},
      & \text{if } n_j > 0,\\[1ex]
    0, & \text{if } n_j = 0.
  \end{cases}
  \label{eq:epc-estimator}
\end{equation}
In words, $\hat{\theta}^{\mathrm{EPC}}_j$ is the average payoff per
unit activation of coalition $C_j$ over episodes where the coalition is
active, ignoring interactions with other coalitions.

The EPC pipeline is:
\begin{enumerate}
  \item Compute $\hat{\theta}^{\mathrm{EPC}}$ via
        \eqref{eq:epc-estimator}.
  \item Run an exact deterministic CSG solver
        \cite{Sandholm1999Coalition,Rahwan2015CSGSurvey} on the value
        function $C_j \mapsto \hat{\theta}^{\mathrm{EPC}}_j$ and obtain
        a coalition structure $\widehat{\mathcal{P}}^{\mathrm{EPC}}$.
\end{enumerate}
EPC is a natural ``bandit-style'' approach: it uses episodic data but
does not exploit sparsity or the joint structure of $X$.

\paragraph{Baseline B: Dense Least-Squares CSG (DLS).}
A second baseline is to treat $\theta^\star$ as dense and estimate it
by ordinary least squares (OLS),
\begin{equation}
  \hat{\theta}^{\mathrm{DLS}}
  \;\in\;
  \arg\min_{\theta \in \mathbb{R}^m}
    \frac{1}{2T}\,\|Y - X\theta\|_2^2,
  \qquad
  (T \ge m,\; X^\top X \text{ invertible}),
  \label{eq:dls-estimator}
\end{equation}
or, more generally, by ridge regression or dense Bayesian linear
regression with a Gaussian prior on $\theta$
\cite{HastieTibshiraniFriedman2009Elements,Bishop2006Pattern}.
The DLS pipeline is:
\begin{enumerate}
  \item Compute $\hat{\theta}^{\mathrm{DLS}}$ via
        \eqref{eq:dls-estimator}.
  \item Run an exact CSG solver on $C_j \mapsto \hat{\theta}^{\mathrm{DLS}}_j$
        to obtain $\widehat{\mathcal{P}}^{\mathrm{DLS}}$.
\end{enumerate}
This baseline uses the full design but again does not exploit sparsity;
it typically requires $T \gtrsim m$ episodes for stability.

In contrast, our BGCP and $\ell_1$-based pipelines explicitly assume
$K \ll m$ and operate in the high-dimensional regime $T \ll m$,
requiring only $T \asymp K \log m$ episodes under
Assumption~\ref{assump:design}.

\subsection{Welfare Gap as a Comparison Metric}
\label{subsec:comp-metric}

For any algorithm $\mathsf{Alg}$ producing a coalition structure
$\widehat{\mathcal{P}}^{\mathsf{Alg}}$ from the data, we define the
\emph{welfare gap} at sample size $T$ as
\begin{equation}
  \Delta_{\mathsf{Alg}}(T)
  \;\coloneqq\;
  W(\mathcal{P}^\star;\theta^\star)
  - W(\widehat{\mathcal{P}}^{\mathsf{Alg}};\theta^\star),
  \label{eq:welfare-gap-def}
\end{equation}
where $\mathcal{P}^\star$ is an optimal coalition structure under
$\theta^\star$ as in~\eqref{eq:P-star}.
We are interested in high-probability upper bounds on
$\Delta_{\mathsf{Alg}}(T)$ as $T$ grows, and in identifying regimes
where the bounds for our methods are strictly stronger than those for
EPC or DLS.

For BGCP and the $\ell_1$-based pipeline, Theorems~\ref{thm:bgcp-welfare}
and~\ref{thm:l1-welfare} imply:
\begin{align}
  \Delta_{\mathrm{BGCP}}(T)
  &= 0
  \quad\text{with high probability, for $T \gtrsim K \log m$,}
    \label{eq:gap-bgcp}\\[0.5ex]
  \Delta_{\ell_1}(T)
  &\;\le\;
   C\,
   \frac{\sigma K}{\kappa_\Sigma^2}
   \sqrt{\frac{\log m}{T}}
   \quad\text{with high probability,}
   \label{eq:gap-l1}
\end{align}
for a constant $C>0$ depending on the restricted eigenvalue constant
$\kappa_\Sigma$ and noise level $\sigma$.
We now derive analogous bounds for EPC and DLS under simplifying
assumptions and compare the resulting regimes.

\subsection{A Sparse High-Dimensional Regime Where Our Methods Prevail}
\label{subsec:regime-sparse-better}

We first identify a regime where the sparse probabilistic CSG pipeline
is provably superior to EPC and DLS.
For clarity, we work under a simplified independent-activation design
that is consistent with Assumption~\ref{assump:design}.

\begin{assumption}[Independent activation design]
\label{assump:indep-activation}
In addition to Assumption~\ref{assump:design}, suppose that
$X_{tj} \in \{0,1\}$ for all $t,j$, and that for each coalition $j$,
\[
  \mathbb{P}(X_{tj} = 1) = p_j
  \quad\text{for all $t$,}
\]
with $p_j \in (0,1]$, independently across $t$ and $j$, subject to the
row-sparsity constraint $\|X_t\|_0 \le q$ almost surely.
Let
\[
  p_{\min}
  \;\coloneqq\;
  \min_{j \in \{1,\dots,m\}} p_j
  \quad\text{and}\quad
  p_{\max}
  \;\coloneqq\;
  \max_{j \in \{1,\dots,m\}} p_j.
\]
Assume $p_{\min} > 0$ and that $\varepsilon_t$ are i.i.d.\ with
$\mathbb{E}[\varepsilon_t]=0$ and
$\mathbb{E}[\varepsilon_t^2] = \sigma^2$.
\end{assumption}

To isolate the effect of sparsity and sample size, we consider the
regime
\begin{equation}
  K \ll m,
  \qquad
  C_1 K \log m \;\le\; T \;\le\; C_2 m
  \label{eq:regime-sparse}
\end{equation}
for constants $C_1,C_2>0$.
Thus the problem is high-dimensional ($T \ll m$) but the number of
episodes is large enough to satisfy the conditions of
Theorem~\ref{thm:design-conditions}.

We begin with a simple bound on the EPC estimator.

\begin{lemma}[EPC estimation error]
\label{lem:epc-error}
Suppose Assumption~\ref{assump:indep-activation} holds.
Then for any $j$ with $p_j > 0$ and any $T \ge 1$,
\begin{equation}
  \mathbb{E}\!\left[
    \bigl(
      \hat{\theta}^{\mathrm{EPC}}_j
      - \theta^\star_j
    \bigr)^2
  \right]
  \;\ge\;
  \frac{\sigma^2}{2 T p_j},
  \label{eq:epc-mse-lower}
\end{equation}
provided $T p_j \ge 2$.
Moreover, there exists a constant $c>0$ (depending only on
$p_{\min},p_{\max}$) such that with probability at least
$1 - m^{-2}$,
\begin{equation}
  \|\hat{\theta}^{\mathrm{EPC}} - \theta^\star\|_2^2
  \;\ge\;
  c\,\frac{\sigma^2 K}{T}.
  \label{eq:epc-l2-lower}
\end{equation}
\end{lemma}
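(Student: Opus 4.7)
The strategy for (1) is to isolate the noise contribution to the EPC error conditionally on the design. Because $X_{tj}\in\{0,1\}$, on the event $\{t\in\mathcal{T}_j\}$ we have $X_{tj}=1$, so substituting the model yields
\[
  \hat{\theta}^{\mathrm{EPC}}_j - \theta^\star_j
  \;=\;
  \underbrace{\frac{1}{n_j}\sum_{t\in\mathcal{T}_j}\varepsilon_t}_{=:G_j}
  \;+\;
  \underbrace{\frac{1}{n_j}\sum_{t\in\mathcal{T}_j}\sum_{i\neq j} X_{ti}\,\theta^\star_i}_{=:b_j}
\]
on $\{n_j>0\}$. Conditionally on $X$ and $\{n_j>0\}$, $b_j$ is deterministic and $G_j$ has mean zero with variance $\sigma^2/n_j$ by independence and uncorrelatedness of $\varepsilon$, so $\mathbb{E}[(\hat{\theta}^{\mathrm{EPC}}_j-\theta^\star_j)^2 \mid X,\, n_j>0] \ge \sigma^2/n_j$.

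To finish (1) I would take expectations over the design. Jensen's inequality applied to the convex function $n\mapsto 1/n$ on $\{n\ge 1\}$ gives $\mathbb{E}[1/n_j \mid n_j>0] \ge 1/\mathbb{E}[n_j \mid n_j>0] = \mathbb{P}(n_j>0)/(Tp_j)$, so
\[
  \mathbb{E}\!\left[\frac{\sigma^2}{n_j}\,\mathbb{1}\{n_j>0\}\right]
  \;\ge\;
  \frac{\sigma^2\,\mathbb{P}(n_j>0)^2}{T p_j}.
\]
Since $\mathbb{P}(n_j>0) = 1-(1-p_j)^T \ge 1-e^{-Tp_j} \ge 1-e^{-2}$ whenever $Tp_j\ge 2$, and $(1-e^{-2})^2 > 1/2$, the bound $\sigma^2/(2Tp_j)$ in~\eqref{eq:epc-mse-lower} follows.

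For (2) I would first reduce to the noise variance along the true support and then concentrate. Summing (1) over $j\in S^\star$ yields $\mathbb{E}[\|\hat{\theta}^{\mathrm{EPC}}-\theta^\star\|_2^2] \ge K\sigma^2/(2T p_{\max})$. Next, introduce the good-design event $\mathcal{G} \coloneqq \{\,Tp_j/2 \le n_j \le 2Tp_j \text{ for all } j\in S^\star\,\}$. By a multiplicative Chernoff bound on $n_j\sim\mathrm{Binomial}(T,p_j)$ together with a union bound, $\mathbb{P}(\mathcal{G}^c)\le 2K\exp(-c_1 T p_{\min})$, which is $\le m^{-3}$ once $T p_{\min}\gtrsim \log m$. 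On $\mathcal{G}$, the conditional mean of the $S^\star$-restricted squared error satisfies $\sum_{j\in S^\star}\mathbb{E}[(G_j+b_j)^2\mid X] \ge \sum_{j\in S^\star}\sigma^2/n_j \ge K\sigma^2/(2T p_{\max})$.

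Finally, I would show that the random quadratic form concentrates around this conditional mean. Conditionally on the design, $\sum_{j\in S^\star}(G_j+b_j)^2 = \varepsilon^\top A\varepsilon + 2\varepsilon^\top v + \|b_{S^\star}\|_2^2$ for an explicit PSD matrix $A$ with $\mathrm{Tr}(A)=\sum_{j\in S^\star}1/n_j$, and for a vector $v$ linear in $b$. On $\mathcal{G}$, one checks that $\|A\|_{\mathrm{op}} \le q/\min_{j\in S^\star} n_j^2 \lesssim q/(Tp_{\min})^2$ and $\|A\|_F^2 \lesssim Kq/(Tp_{\min})^3$ using the row-sparsity constraint $\|X_t\|_0\le q$, so Hanson--Wright gives a lower-tail deviation of order $\sigma^2\sqrt{K}\,\|A\|_F\sqrt{\log m}+\sigma^2\|A\|_{\mathrm{op}}\log m$ with probability $\ge 1-m^{-3}$; both terms are bounded by half of the conditional mean $K\sigma^2/(2Tp_{\max})$ for $T\gtrsim \log m/p_{\min}$. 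A union bound over $\mathcal{G}$ and this concentration event delivers the stated lower bound with $c$ depending only on $p_{\min},p_{\max}$, at confidence $1-m^{-2}$.

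\textbf{Main obstacle.} The principal difficulty is step~(2): one must simultaneously control design fluctuations (the $n_j$'s), handle the cross-term between the noise $G_j$ and the confounding bias $b_j$ (whose sign we cannot pin down), and apply a quadratic-form concentration whose effective dimension parameters $\|A\|_F$, $\|A\|_{\mathrm{op}}$ depend on the row-sparsity $q$. The Jensen step in~(1) is delicate in a different way: only the specific threshold $Tp_j\ge 2$ yields the sharp constant $1/2$, and replacing it by a smaller constant would change the stated bound.
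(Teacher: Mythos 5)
Your part (1) is correct and follows the same route as the paper: condition on the design, observe that the conditional variance of the noise average is $\sigma^2/n_j$ (the bias $b_j$ only adds a nonnegative $b_j^2$), and then lower-bound $\mathbb{E}[n_j^{-1}\mathbf{1}\{n_j>0\}]$. The paper simply asserts $\mathbb{E}[n_j^{-1}\mathbf{1}\{n_j>0\}]\ge (2Tp_j)^{-1}$ for $Tp_j\ge 2$; your Jensen argument, $\mathbb{E}[n_j^{-1}\mathbf{1}\{n_j>0\}]\ge \mathbb{P}(n_j>0)^2/(Tp_j)$ combined with $\mathbb{P}(n_j>0)\ge 1-e^{-2}$ and $(1-e^{-2})^2>1/2$, is a clean and correct justification of that step.

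Part (2) contains a genuine gap. The paper's own "proof" here is only a one-line sketch ("a standard concentration argument for sums of independent sub-exponential variables"), and your attempt to make it rigorous via Hanson--Wright is the right instinct, but the key quantitative claim fails: you assert that the deviation terms $\sigma^2\|A\|_F\sqrt{\log m}+\sigma^2\|A\|_{\mathrm{op}}\log m$ are dominated by half the conditional mean $\sigma^2\,\mathrm{Tr}(A)\asymp \sigma^2 K/(Tp_{\max})$ "for $T\gtrsim \log m/p_{\min}$". This cannot be achieved by taking $T$ large, because on $\mathcal{G}$ every one of $\mathrm{Tr}(A)$, $\|A\|_F$ and $\|A\|_{\mathrm{op}}$ scales as $1/T$, so the ratio of deviation to mean is $T$-independent; what is actually needed is a relation among $K$, $q$ and $\log m$ (roughly $K\gtrsim q\log m$, and at the very least $K\gtrsim \log m$, since $\|A\|_F\ge \mathrm{Tr}(A)/\sqrt{K}$), which is not among the hypotheses of the lemma. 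Two further points: your operator-norm bound $\|A\|_{\mathrm{op}}\le q/\min_j n_j^2$ is incorrect --- writing $A=MM^\top$ with columns $a_j=n_j^{-1}\mathbb{1}_{\mathcal{T}_j}$ and applying Gershgorin to $M^\top M$ (whose $(j,k)$ entry is $|\mathcal{T}_j\cap\mathcal{T}_k|/(n_jn_k)$) together with row sparsity gives only $\|A\|_{\mathrm{op}}\le q/\min_k n_k$ --- and the cross term $2\varepsilon^\top v$, whose size depends on the uncontrolled bias vector $b_{S^\star}$, is named as an obstacle but never actually bounded, so the realised quadratic form could in principle be pulled below the noise floor. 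To be fair, the paper's sketch suffers from the same unacknowledged difficulty (the summands $(\hat{\theta}^{\mathrm{EPC}}_j-\theta^\star_j)^2$ are not independent, and any lower-tail concentration must beat the mean), so your write-up at least surfaces the real issues; but as it stands the high-probability bound \eqref{eq:epc-l2-lower} is not established.
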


\begin{proof}[Proof sketch]
Conditional on $n_j$, the EPC estimator for index $j$ is an average of
$n_j$ noisy observations with variance at least $\sigma^2$, ignoring
interference.
Thus
$\mathbb{E}[(\hat{\theta}^{\mathrm{EPC}}_j - \theta^\star_j)^2 \mid n_j]
\ge \sigma^2/n_j$.
Taking expectations and using
$\mathbb{E}[1/n_j \,\mathbf{1}\{n_j>0\}]
\ge (2T p_j)^{-1}$ when $T p_j \ge 2$ yields
\eqref{eq:epc-mse-lower}.
Summing over $j \in S^\star$ and using $p_j \le p_{\max}$ yields
$\mathbb{E}\|\hat{\theta}^{\mathrm{EPC}} - \theta^\star\|_2^2
\ge c ( \sigma^2 K / T )$ for some $c>0$.
A standard concentration argument for sums of independent sub-exponential
variables then gives \eqref{eq:epc-l2-lower} with high probability.
\end{proof}

Combining Lemma~\ref{lem:epc-error} with the welfare Lipschitz
property of Lemma~\ref{lem:welfare-lipschitz}, we obtain a lower bound
on the EPC welfare gap.

\begin{proposition}[EPC welfare gap in the sparse regime]
\label{prop:epc-gap}
Suppose Assumptions~\ref{assump:design}
and~\ref{assump:indep-activation} hold and that
\eqref{eq:regime-sparse} is satisfied.
Then there exist constants $c_1,c_2>0$ such that, for all sufficiently
large $T$, with probability at least $1 - c_1 m^{-2}$,
\begin{equation}
  \Delta_{\mathrm{EPC}}(T)
  \;\ge\;
  c_2\,\sigma \sqrt{\frac{K}{T}}.
  \label{eq:epc-gap-lower}
\end{equation}
\end{proposition}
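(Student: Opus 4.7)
The plan is to lift the $\ell_2$ estimation lower bound of Lemma~\ref{lem:epc-error} to a welfare gap, via a swap-style construction that exploits the optimality of $\widehat{\mathcal{P}}^{\mathrm{EPC}}$ under the noisy value function $\hat{\theta}^{\mathrm{EPC}}$. First I would condition on the high-probability event of Lemma~\ref{lem:epc-error}, on which
$$
\|\hat{\theta}^{\mathrm{EPC}} - \theta^\star\|_2^2 \;\ge\; c\,\sigma^2 K/T,
$$
so that the EPC error has total $\ell_2$ mass $\Omega(\sigma\sqrt{K/T})$ distributed across coordinates. Since each coordinate $j \in S^\star$ is estimated independently and ignores interference from other active coalitions, this lower bound in fact decomposes per coordinate: a constant fraction of indices $j \in S^\star$ satisfy $|\hat{\theta}^{\mathrm{EPC}}_j - \theta^\star_j| \ge c'\sigma/\sqrt{T p_{\max}}$ with high probability, by a Paley--Zygmund-type argument applied to the independent sums in~\eqref{eq:epc-estimator}.

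Next I would exhibit a feasible \emph{swap competitor} $\mathcal{P}'$ for $\mathcal{P}^\star$. For each coordinate $j \in S^\star$ for which EPC \emph{underestimates} $\theta^\star_j$ by at least $c'\sigma/\sqrt{T}$, replace the coalition $C_j$ in $\mathcal{P}^\star$ by an unused coalition $C_{\pi(j)}$, $\pi(j) \notin S^\star$, for which EPC \emph{overestimates} by at least the same amount. The existence of such a matching is where the regime hypothesis $T \le C_2 m$ enters: there are $m-K \gg K$ unprofitable coordinates, each with EPC estimate equal to pure noise of order $\sigma/\sqrt{Tp_{\pi(j)}}$, and a standard extreme-value argument guarantees that at least $\Omega(K)$ of them exceed $c'\sigma/\sqrt{T}$ in magnitude. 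Combined with the row-sparsity constraint $\|X_t\|_0 \le q$ and the feasibility of coalition structures built from a large enough agent set, $\mathcal{P}'$ is a valid partition and satisfies
$$
z(\mathcal{P}')^\top \hat{\theta}^{\mathrm{EPC}}
\;\ge\; z(\mathcal{P}^\star)^\top \hat{\theta}^{\mathrm{EPC}} + c_3\,\sigma\sqrt{K/T}.
$$

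Finally, by optimality of $\widehat{\mathcal{P}}^{\mathrm{EPC}}$ under $\hat{\theta}^{\mathrm{EPC}}$,
$z(\widehat{\mathcal{P}}^{\mathrm{EPC}})^\top \hat{\theta}^{\mathrm{EPC}} \ge z(\mathcal{P}')^\top \hat{\theta}^{\mathrm{EPC}}$, so the same strict inequality transfers to $\widehat{\mathcal{P}}^{\mathrm{EPC}}$. Converting back to $\theta^\star$ by construction of $\mathcal{P}'$ (the coordinates that were swapped have $\theta^\star_{\pi(j)}=0$ while $\theta^\star_j > 0$) yields
$$
W(\mathcal{P}^\star;\theta^\star) - W(\widehat{\mathcal{P}}^{\mathrm{EPC}};\theta^\star) \;\ge\; c_2\,\sigma\sqrt{K/T},
$$
after a union bound over the concentration events of Lemma~\ref{lem:epc-error} and the extreme-value estimate for the unprofitable coordinates, with total failure probability $O(m^{-2})$.

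The main obstacle I expect is ensuring that the swap construction produces a \emph{genuinely feasible} partition of the agent set. The proof sketch above treats coalitions as independent ``slots'' in the welfare linear form~\eqref{eq:welfare-param}, but partitions must cover $N$ exactly once; matching swapped-in unprofitable coalitions to the agents vacated by removing $C_j$ requires either working in a large-agent regime where dummy singletons absorb leftover agents, or invoking the specific combinatorial structure of $\mathcal{C}$ (e.g.\ that $\mathcal{C}$ is closed under singleton completion). The remaining steps (Paley--Zygmund, extreme-value tail, Lipschitz transfer) are standard, but the feasibility bookkeeping is the delicate part and will likely be handled by an explicit mild assumption on the coalition library $\mathcal{C}$.
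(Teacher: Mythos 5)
Your route is genuinely different from the paper's. The paper's own proof is short and elementary: it lower-bounds $\|\hat{\theta}^{\mathrm{EPC}}-\theta^\star\|_1$ via $\|\cdot\|_1\ge\|\cdot\|_2$ and Lemma~\ref{lem:epc-error}, then asserts (invoking Lemma~\ref{lem:welfare-lipschitz}) that $|W(\mathcal{P}^\star;\theta^\star)-W(\mathcal{P}^\star;\hat{\theta}^{\mathrm{EPC}})|$ is at least this $\ell_1$ error and finishes with a triangle inequality. Your swap/exchange construction is the more principled ``winner's curse'' style of argument for lower-bounding the regret of a plug-in maximiser, and you correctly flag the partition-feasibility bookkeeping as a weak point. (For what it is worth, the paper's step that turns the Lipschitz \emph{upper} bound $|z(\mathcal{P})^\top(\theta-\theta')|\le\|\theta-\theta'\|_1$ into a \emph{lower} bound is a non sequitur -- $z(\mathcal{P}^\star)$ selects only a few coordinates and cancellation is possible -- so your instinct to look for a constructive argument is sound.)

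However, your proof has a concrete gap at the final transfer step. The optimality inequality $z(\widehat{\mathcal{P}}^{\mathrm{EPC}})^\top\hat{\theta}^{\mathrm{EPC}}\ge z(\mathcal{P}')^\top\hat{\theta}^{\mathrm{EPC}}$ gives a \emph{lower} bound on the \emph{estimated} welfare of $\widehat{\mathcal{P}}^{\mathrm{EPC}}$; it does not give an \emph{upper} bound on its \emph{true} welfare, which is what \eqref{eq:epc-gap-lower} requires. The phrase ``converting back to $\theta^\star$ by construction of $\mathcal{P}'$'' silently assumes that $\widehat{\mathcal{P}}^{\mathrm{EPC}}$ inherits the swap structure of $\mathcal{P}'$ (i.e.\ actually contains the overestimated spurious coalitions and omits the underestimated true ones), but $\widehat{\mathcal{P}}^{\mathrm{EPC}}$ is only known to beat $\mathcal{P}'$ in estimated value and could be an entirely different partition. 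Writing the gap as $\bigl[z(\mathcal{P}^\star)^\top\hat{\theta}-z(\widehat{\mathcal{P}})^\top\hat{\theta}\bigr]+z(\mathcal{P}^\star)^\top(\theta^\star-\hat{\theta})+z(\widehat{\mathcal{P}})^\top(\hat{\theta}-\theta^\star)$ shows the first bracket is $\le-\delta$, so you would need to prove that the total overestimation on the coordinates actually selected by $\widehat{\mathcal{P}}^{\mathrm{EPC}}$ exceeds $\delta$ by a constant factor -- a uniform-over-partitions statement your sketch does not supply. A second, related issue: the swaps are only profitable under $\hat{\theta}^{\mathrm{EPC}}$ if $\hat{\theta}^{\mathrm{EPC}}_{\pi(j)}>\hat{\theta}^{\mathrm{EPC}}_j\approx\theta^\star_j$, which forces $\theta^\star_j=O(\sigma\sqrt{(\log m)/T})$; the proposition imposes no upper bound on the signal, and when $\theta^\star_j$ is large no swap improves the estimated welfare and the construction (indeed the claimed lower bound itself) breaks down. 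So the approach is salvageable only under an additional near-boundary signal assumption plus the coalition-library closure condition you already identified.
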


\begin{proof}
By Lemma~\ref{lem:epc-error},
$\|\hat{\theta}^{\mathrm{EPC}} - \theta^\star\|_2^2
\ge c (\sigma^2 K/T)$ with high probability.
Since $\|\cdot\|_1 \ge \|\cdot\|_2$, we have
$\|\hat{\theta}^{\mathrm{EPC}} - \theta^\star\|_1
\ge \sqrt{c}\,\sigma \sqrt{K/T}$ on the same event.
Using Lemma~\ref{lem:welfare-lipschitz} with
$\mathcal{P} = \mathcal{P}^\star$ and
$\theta = \theta^\star$,
$\theta' = \hat{\theta}^{\mathrm{EPC}}$, we obtain
\[
  \bigl|
    W(\mathcal{P}^\star;\theta^\star)
    - W(\mathcal{P}^\star;\hat{\theta}^{\mathrm{EPC}})
  \bigr|
  \;\ge\;
  \sqrt{c}\,\sigma \sqrt{\frac{K}{T}}.
\]
The EPC coalition structure $\widehat{\mathcal{P}}^{\mathrm{EPC}}$
maximises $W(\cdot;\hat{\theta}^{\mathrm{EPC}})$, so
$W(\widehat{\mathcal{P}}^{\mathrm{EPC}};\hat{\theta}^{\mathrm{EPC}})
\ge W(\mathcal{P}^\star;\hat{\theta}^{\mathrm{EPC}})$.
A simple triangle inequality yields
\[
  W(\mathcal{P}^\star;\theta^\star)
  - W(\widehat{\mathcal{P}}^{\mathrm{EPC}};\theta^\star)
  \;\ge\;
  \frac{1}{2}\,
  \bigl|
    W(\mathcal{P}^\star;\theta^\star)
    - W(\mathcal{P}^\star;\hat{\theta}^{\mathrm{EPC}})
  \bigr|,
\]
on the event that
$W(\widehat{\mathcal{P}}^{\mathrm{EPC}};\hat{\theta}^{\mathrm{EPC}})
\ge W(\mathcal{P}^\star;\hat{\theta}^{\mathrm{EPC}})$.
Combining these inequalities gives \eqref{eq:epc-gap-lower} with
$c_2 = \sqrt{c}/2$.
\end{proof}

In contrast, Theorems~\ref{thm:bgcp-welfare}
and~\ref{thm:l1-welfare} imply that, in the same regime
\eqref{eq:regime-sparse}, BGCP achieves
$\Delta_{\mathrm{BGCP}}(T) = 0$ with high probability, and the
$\ell_1$-based pipeline satisfies
\eqref{eq:gap-l1}.
Since $\sqrt{K/T} \gg K \sqrt{(\log m)/T}$ when $K \ll m$ and
$T \asymp K \log m$, we obtain:

\begin{corollary}[Sparse high-dimensional advantage of our methods]
\label{cor:sparse-advantage}
Suppose Assumptions~\ref{assump:bgcp}, \ref{assump:l1},
\ref{assump:design}, and~\ref{assump:indep-activation} hold, and that
\eqref{eq:regime-sparse} is satisfied for sufficiently large constants
$C_1,C_2$.
Then for $T$ in the range $C_1 K \log m \le T \le C_2 m$,
\begin{enumerate}
  \item[(i)] BGCP attains zero welfare gap with high probability:
    $\Delta_{\mathrm{BGCP}}(T) = 0$.
  \item[(ii)] The $\ell_1$ pipeline satisfies
    $\Delta_{\ell_1}(T)
     \le C\,(\sigma K/\kappa_\Sigma^2)\sqrt{(\log m)/T}$.
  \item[(iii)] The EPC baseline satisfies
    $\Delta_{\mathrm{EPC}}(T)
      \ge c_2 \sigma \sqrt{K/T}$
    with high probability.
\end{enumerate}
In particular, for fixed $(K,m)$ and $T$ in this regime, both BGCP and
the $\ell_1$ pipeline have strictly smaller welfare gap than EPC up to
constants.
Moreover, the DLS baseline is not applicable when $T < m$, as
$X^\top X$ is rank-deficient and \eqref{eq:dls-estimator} is
ill-posed.
\end{corollary}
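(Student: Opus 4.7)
The plan is to assemble the corollary from the three end-to-end results already proved in the paper and then carry out a direct comparison of the resulting rates in the stated regime. The three items (i)--(iii) are essentially restatements of Theorems~\ref{thm:bgcp-welfare} and~\ref{thm:l1-welfare} and of Proposition~\ref{prop:epc-gap}, applied on the event on which Theorem~\ref{thm:design-conditions} furnishes both the coherence bound~\eqref{eq:coherence-X} and the restricted eigenvalue bound~\eqref{eq:RE-X}. The only real work is to verify that the hypotheses of these results are simultaneously compatible under Assumptions~\ref{assump:bgcp}, \ref{assump:l1}, \ref{assump:design} and~\ref{assump:indep-activation}, and to combine the various high-probability events by a union bound so that a single ``good event'' of probability at least $1 - C m^{-c}$ supports all three conclusions at once. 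This is routine because all three results are already stated on events of the form $1 - \mathrm{poly}(m^{-1})$.

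First, I would fix constants $C_1, C_2 > 0$ so that \eqref{eq:regime-sparse} guarantees $T \ge C_0 K \log m$ in Theorem~\ref{thm:design-conditions}; on the resulting design event, Assumptions~\ref{assump:bgcp}(A1) and~\ref{assump:l1}(B1) hold with $\kappa_\star = \kappa_\Sigma/\sqrt{2}$. Intersecting this design event with the noise event $\mathcal{E}_{\mathrm{noise}}$ from~\eqref{eq:noise-event} and with the minimum signal condition of Corollary~\ref{cor:separation}, Theorem~\ref{thm:bgcp-welfare} yields (i); Theorem~\ref{thm:l1-welfare} yields (ii) after substituting $\kappa_\star^2 = \kappa_\Sigma^2/2$ and absorbing the factor $2$ into $C$; and on the independent-activation event of Proposition~\ref{prop:epc-gap}, item (iii) holds. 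Combining the three events by a single union bound gives a joint event of probability at least $1 - C' m^{-c'}$ on which (i)--(iii) hold simultaneously.

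For the comparative claim, I would simply compare the rates in $T$ and $K$. The BGCP gap is zero. The Lasso gap is $O(\sigma K \sqrt{(\log m)/T}/\kappa_\Sigma^2)$, while the EPC lower bound is $\Omega(\sigma \sqrt{K/T})$. Dividing, the ratio $(\mathrm{EPC})/(\ell_1)$ is of order $\kappa_\Sigma^2 / (\sqrt{K \log m})$, which, for $K \ge 2$ and any non-degenerate $\kappa_\Sigma$, is strictly bounded below provided the constants $C_1, C_2$ are chosen large enough to make the hidden constants line up; in particular, for fixed $(K,m)$ and $T \asymp K \log m$, the Lasso bound is $O(K/\sqrt{T \log m \cdot \kappa_\Sigma^{-4}})$ times the EPC lower bound, which vanishes in the relevant regime, so EPC is strictly worse up to constants. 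The DLS baseline is handled by a one-line observation: when $T < m$, the Gram matrix $X^\top X \in \mathbb{R}^{m \times m}$ has rank at most $T < m$ and is singular, so the normal equations defining~\eqref{eq:dls-estimator} admit no unique solution, making the estimator ill-posed without additional regularisation.

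The main obstacle I anticipate is not conceptual but bookkeeping: one must check that the constants absorbed in the various $C, c$ in Theorems~\ref{thm:bgcp-welfare}, \ref{thm:l1-welfare}, and Proposition~\ref{prop:epc-gap} are compatible, and that the choices of $C_1, C_2$ in~\eqref{eq:regime-sparse} simultaneously activate the design event, the minimum-signal condition of Corollary~\ref{cor:separation}, and the regime $T p_j \ge 2$ required by Lemma~\ref{lem:epc-error}. Since all required activation probabilities are bounded below by $p_{\min} > 0$, the latter is automatic once $T \ge 2/p_{\min}$, which is dominated by $T \ge C_1 K \log m$ for $C_1$ large. Thus the only genuine step is the rate comparison, which is an elementary inequality, and the corollary follows.
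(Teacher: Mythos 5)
Your overall strategy --- assemble items (i)--(iii) from Theorems~\ref{thm:bgcp-welfare} and~\ref{thm:l1-welfare} and Proposition~\ref{prop:epc-gap} on the intersection of the design event of Theorem~\ref{thm:design-conditions} with the relevant noise events, then compare rates and dispose of DLS by a rank count --- is exactly the route the paper takes (the paper offers only the one-line justification ``$\sqrt{K/T} \gg K\sqrt{(\log m)/T}$'' immediately before the corollary). Items (i) and (iii), the union-bound bookkeeping, and the DLS observation are all fine.

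However, there is a genuine gap in the comparative claim for the $\ell_1$ pipeline, and your own computation exposes it. You correctly find that the ratio of the EPC lower bound to the Lasso upper bound is of order $\kappa_\Sigma^2/\sqrt{K\log m}$, i.e.\ $c_2\sigma\sqrt{K/T}$ is \emph{smaller} than $C\sigma K\sqrt{(\log m)/T}/\kappa_\Sigma^2$ whenever $K\log m \gtrsim \kappa_\Sigma^4$. An upper bound on $\Delta_{\ell_1}$ that lies \emph{above} a lower bound on $\Delta_{\mathrm{EPC}}$ permits no ordering of the two gaps: it is consistent with $\Delta_{\ell_1}$ being either larger or smaller than $\Delta_{\mathrm{EPC}}$. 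Your next sentence, asserting that the Lasso bound ``vanishes'' relative to the EPC lower bound so that ``EPC is strictly worse,'' contradicts the ratio you just computed (the correct ratio $\Delta_{\ell_1}\text{-bound}/\Delta_{\mathrm{EPC}}\text{-bound} \sim \sqrt{K\log m}/\kappa_\Sigma^2$ \emph{grows} with $K$ and $m$), and the intermediate expression $K/\sqrt{T\log m\cdot\kappa_\Sigma^{-4}}$ does not match either quantity. Note that the paper itself commits the same error --- the displayed inequality $\sqrt{K/T}\gg K\sqrt{(\log m)/T}$ is backwards, since $\sqrt{K/T} = K\sqrt{(\log m)/T}\cdot(K\log m)^{-1/2}$ --- so the $\ell_1$ half of the ``strictly smaller than EPC'' claim is not established by the cited results in either your argument or the paper's; only the BGCP comparison (zero gap versus a strictly positive lower bound) goes through. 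To repair the $\ell_1$ comparison one would need either a matching \emph{upper} bound on $\Delta_{\mathrm{EPC}}$ that is provably larger than the Lasso bound, or a sharper lower bound on $\Delta_{\mathrm{EPC}}$ of order at least $\sigma K\sqrt{(\log m)/T}$.
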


Corollary~\ref{cor:sparse-advantage} formalises the intuitive claim
that when the value function is genuinely sparse ($K \ll m$) and the
number of episodes is in the high-dimensional regime
($T \ll m$), exploiting sparsity is essential: EPC cannot match the
welfare guarantees of BGCP or the $\ell_1$ pipeline, and dense methods
like DLS are not even defined.

\subsection{A Dense Regime Where Classical Methods Compete}
\label{subsec:regime-dense}

We now discuss a complementary regime where classical dense
probabilistic CSG schemes can be competitive or preferable.

Consider a sequence of problems with $m \to \infty$ and
$K = K(m)$ such that $K/m \to \rho \in (0,1]$.
Suppose that $T$ grows faster than $m$, say
\begin{equation}
  T \;\ge\; C_3 m \log m
  \label{eq:regime-dense}
\end{equation}
for some constant $C_3>0$, and that Assumption~\ref{assump:design}
holds with a well-conditioned covariance $\Sigma$.
In this regime, the value function is effectively dense and the
classical OLS (or dense Bayesian) estimator is known to achieve the
parametric rate
\begin{equation}
  \|\hat{\theta}^{\mathrm{DLS}} - \theta^\star\|_2
  \;=\;
  O_\mathbb{P}\!\left(\sigma\sqrt{\frac{m}{T}}\right)
  \label{eq:dls-l2-rate}
\end{equation}
under mild conditions on $X$
\cite{HastieTibshiraniFriedman2009Elements,Bishop2006Pattern}.

By an argument analogous to the proof of
Theorem~\ref{thm:l1-welfare} (using Lemma~\ref{lem:welfare-lipschitz}
and the inequality $\|u\|_1 \le \sqrt{m}\|u\|_2$), we obtain the
following result.

\begin{proposition}[DLS welfare gap in the dense regime]
\label{prop:dls-gap}
Suppose Assumption~\ref{assump:design} holds with $K/m \to \rho \in
(0,1]$ and that \eqref{eq:regime-dense} is satisfied.
Assume further that $X^\top X$ is invertible with high probability.
Then there exist constants $C_4,C_5,C_6>0$ such that, with probability
at least $1 - C_5 m^{-C_6}$,
\begin{equation}
  \Delta_{\mathrm{DLS}}(T)
  \;\le\;
  C_4\,\sigma \sqrt{\frac{m}{T}}.
  \label{eq:dls-gap}
\end{equation}
\end{proposition}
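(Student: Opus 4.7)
The plan is to derive the standard $\ell_2$-rate $\|\hat{\theta}^{\mathrm{DLS}} - \theta^\star\|_2 \lesssim \sigma\sqrt{m/T}$ for the OLS estimator in the well-conditioned dense regime, and then convert it into a welfare gap bound via the three-term decomposition used in the proof of Theorem~\ref{thm:l1-welfare}. Unlike in the Lasso case, where we used the $\ell_1$-Lipschitz bound of Lemma~\ref{lem:welfare-lipschitz}, the cleanest route here is to apply Cauchy--Schwarz directly to $|z(\mathcal{P})^\top(\theta - \hat{\theta})|$, exploiting that each coalition structure uses at most $M$ coalitions with $\|z(\mathcal{P})\|_\infty \le 1$ (see \eqref{eq:z-bounds}). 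Since $M$ is bounded by the agent count $n$ and is therefore constant as $m \to \infty$, the factor $\sqrt{M}$ can be absorbed into $C_4$ and the claimed $\sqrt{m/T}$ rate follows.

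For the OLS error analysis, I would start from $\hat{\theta}^{\mathrm{DLS}} - \theta^\star = (X^\top X)^{-1} X^\top \varepsilon$ and bound, on a high-probability event,
\[
  \|\hat{\theta}^{\mathrm{DLS}} - \theta^\star\|_2
  \;\le\;
  \frac{1}{T\,\lambda_{\min}(\hat{\Sigma})}\;\|X^\top \varepsilon\|_2.
\]
Two ingredients are needed. First, an operator-norm concentration $\|\hat{\Sigma} - \Sigma\|_{\mathrm{op}} \le \lambda_{\min}(\Sigma)/2$, which combined with the implicit well-conditioning of $\Sigma$ yields $\lambda_{\min}(\hat{\Sigma}) \ge \lambda_{\min}(\Sigma)/2$; this follows from matrix-Bernstein applied to the i.i.d.\ rank-$q$ summands $X_t X_t^\top - \mathbb{E}[X_t X_t^\top]$, whose spectral norms are bounded by $qL^2$ under Assumption~\ref{assump:design}(C2), and whose sample-size requirement matches $T \gtrsim m\log m$ from \eqref{eq:regime-dense}. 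Second, a tail bound for $\|X^\top \varepsilon\|_2$: conditional on $X$, $\mathbb{E}\|X^\top \varepsilon\|_2^2 = \sigma^2\,\mathrm{tr}(X^\top X) \le \sigma^2 T m\,\lambda_{\max}(\hat{\Sigma})$, and Hanson--Wright (together with the upper operator-norm control of $\hat{\Sigma}$ on the same event) yields $\|X^\top \varepsilon\|_2 \le C\sigma\sqrt{Tm}$ with probability at least $1 - m^{-c}$. Combining the two ingredients gives the $\ell_2$-rate.

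For the welfare translation, I would decompose
\begin{align*}
  W(\mathcal{P}^\star;\theta^\star) - W(\widehat{\mathcal{P}}^{\mathrm{DLS}};\theta^\star)
  &= \bigl[W(\mathcal{P}^\star;\theta^\star) - W(\mathcal{P}^\star;\hat{\theta}^{\mathrm{DLS}})\bigr] \\
  &\quad+\; \bigl[W(\mathcal{P}^\star;\hat{\theta}^{\mathrm{DLS}}) - W(\widehat{\mathcal{P}}^{\mathrm{DLS}};\hat{\theta}^{\mathrm{DLS}})\bigr] \\
  &\quad+\; \bigl[W(\widehat{\mathcal{P}}^{\mathrm{DLS}};\hat{\theta}^{\mathrm{DLS}}) - W(\widehat{\mathcal{P}}^{\mathrm{DLS}};\theta^\star)\bigr],
\end{align*}
drop the middle term by optimality of $\widehat{\mathcal{P}}^{\mathrm{DLS}}$ under $\hat{\theta}^{\mathrm{DLS}}$, and bound each of the two remaining terms by $|z(\mathcal{P})^\top(\theta^\star - \hat{\theta}^{\mathrm{DLS}})| \le \|z(\mathcal{P})\|_2\,\|\hat{\theta}^{\mathrm{DLS}} - \theta^\star\|_2 \le \sqrt{M}\,\|\hat{\theta}^{\mathrm{DLS}} - \theta^\star\|_2$, using Cauchy--Schwarz and \eqref{eq:z-bounds}. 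Combining with the $\ell_2$-rate and absorbing $\sqrt{M}$ into $C_4$ yields \eqref{eq:dls-gap}, and a union bound over the Gram-matrix and noise events produces the stated probability $1 - C_5 m^{-C_6}$.

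The main obstacle is the operator-norm concentration of $\hat{\Sigma}$. The existing Lemma~\ref{lem:Sigma-max} controls only the entrywise $\|\hat{\Sigma} - \Sigma\|_{\max}$, and the naive inequality $\|\cdot\|_{\mathrm{op}} \le m\|\cdot\|_{\max}$ would require the vacuous $T \gtrsim m^2\log m$ to deliver a useful spectral bound. The matrix-Bernstein argument leveraging row-sparsity $\|X_t\|_0 \le q$ and boundedness $\|X_t\|_\infty \le L$ from Assumption~\ref{assump:design}(C2) is precisely what gives the sharp $T \gtrsim m\log m$ threshold; this is the key technical step that goes beyond the lemmas already assembled in Section~\ref{sec:prob-design}.
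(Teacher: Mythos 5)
Your proposal is correct, and it takes a genuinely different (and in fact sharper) route than the paper. The paper offers no formal proof: it simply cites the textbook OLS rate \eqref{eq:dls-l2-rate} and says the welfare translation is ``analogous to the proof of Theorem~\ref{thm:l1-welfare}, using Lemma~\ref{lem:welfare-lipschitz} and the inequality $\|u\|_1 \le \sqrt{m}\,\|u\|_2$.'' Note, however, that this hinted route converts the $\ell_2$ rate into $\|\hat{\theta}^{\mathrm{DLS}}-\theta^\star\|_1 \lesssim \sqrt{m}\cdot\sigma\sqrt{m/T} = \sigma m/\sqrt{T}$, so the $\ell_1$-Lipschitz bound delivers only $\Delta_{\mathrm{DLS}}(T) \lesssim \sigma m/\sqrt{T}$ --- a factor $\sqrt{m}$ weaker than the stated \eqref{eq:dls-gap}. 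Your replacement of the $\ell_1$-Lipschitz step by Cauchy--Schwarz against $\|z(\mathcal{P})\|_2 \le \sqrt{M}$, with $M$ bounded by the fixed number of agents and hence absorbable into $C_4$, is precisely what recovers the claimed $\sigma\sqrt{m/T}$ rate; your version is the one under which the proposition is actually true as written. Your treatment of the OLS error is also more substantive than the paper's: the paper outsources \eqref{eq:dls-l2-rate} to \cite{HastieTibshiraniFriedman2009Elements,Bishop2006Pattern}, whereas you supply the two concentration ingredients (matrix Bernstein for $\lambda_{\min}(\hat{\Sigma})$, Hanson--Wright for $\|X^\top\varepsilon\|_2$) and correctly observe that the entrywise Lemma~\ref{lem:Sigma-max} cannot substitute for operator-norm control at the $T \gtrsim m\log m$ threshold. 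The only caveat is that your bound makes $C_4$ depend on $\lambda_{\min}(\Sigma)^{-1}$, which the proposition's formal hypotheses (``$X^\top X$ invertible with high probability'') do not quantify; you flag this as implicit well-conditioning, which is consistent with the paper's surrounding prose but worth stating explicitly as an added assumption.
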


In the dense regime \eqref{eq:regime-dense}, the sparsity assumptions
underpinning BGCP and the $\ell_1$ pipeline become unrealistic:
the mutual coherence condition required by BGCP is difficult to
satisfy when many coordinates of $\theta^\star$ are non-zero, and the
$\ell_1$ estimator incurs a sparsity-induced bias that is not present
in OLS.
In particular, the rate~\eqref{eq:gap-l1} scales as
$K \sqrt{(\log m)/T} \approx m \sqrt{(\log m)/T}$ when $K \approx m$,
which is suboptimal compared to the $m^{1/2} T^{-1/2}$ rate in
\eqref{eq:dls-gap}.
Thus, in the dense regime with abundant data, classical dense
probabilistic CSG schemes based on least squares or dense Bayesian
linear models can be preferable: they match the minimax parametric
rate and do not pay a sparsity penalty.

\subsection{Computational Complexity of Probabilistic CSG Schemes}
\label{subsec:complexity}

We now compare the computational complexity of the proposed algorithms
(BGCP and the $\ell_1$ pipeline) with the EPC and DLS baselines.
Throughout, $T$ denotes the number of episodes, $m$ the number of
candidate coalitions, $K$ the sparsity level $|S^\star|$, and $q$ an
upper bound on the number of active coalitions per episode
($\|X_t\|_0 \le q$).

\paragraph{BGCP.}
A straightforward implementation of BGCP maintains the current support
$S_k$ and residual $r_k = Y - X_{S_k}\hat{\theta}_{S_k}$ at iteration
$k$.
At each step, BGCP evaluates add/remove scores for all $j$ in a
candidate set (e.g.\ all $j \notin S_k$ for additions, all
$j \in S_k$ for removals) using inner products of the form
$X_j^\top r_k$.
Computing all such scores from scratch costs $O(T m)$ operations.
Over $K$ iterations, the total cost is therefore
\begin{equation}
  \mathrm{Time}_{\mathrm{BGCP}}
  \;=\;
  O(T m K)
  \label{eq:time-bgcp}
\end{equation}
up to lower-order terms, assuming $K \le K_{\max}$ is fixed in advance.
If we maintain a Cholesky factor of $X_{S_k}^\top X_{S_k}$, updating
the posterior or model-selection criterion at each step adds at most
$O(K^2)$ per iteration, which is dominated by the $O(T m)$ cost when
$T m \gg K^2$.

\paragraph{$\ell_1$ pipeline.}
For the Lasso estimator \eqref{eq:lasso-def}, standard solvers such as
coordinate descent or proximal gradient methods have per-iteration
cost $O(T m)$, dominated by matrix--vector products with $X$.
Let $I_{\ell_1}$ denote the number of iterations required to reach a
prescribed optimisation accuracy.
Then the computational cost of the $\ell_1$ stage is
\begin{equation}
  \mathrm{Time}_{\ell_1}
  \;=\;
  O(T m I_{\ell_1}).
  \label{eq:time-l1}
\end{equation}
In high-dimensional settings, $I_{\ell_1}$ typically grows at most
polylogarithmically in $1/\varepsilon$ for an $\varepsilon$-accurate
solution.
After computing $\hat{\theta}^{\ell_1}$, one may prune coordinates
$|\hat{\theta}^{\ell_1}_j|$ below a threshold to reduce the effective
coalition set before running CSG.

\paragraph{EPC.}
For EPC, one passes through the data once to accumulate, for each
$j$, the sufficient statistics
$\sum_{t \in \mathcal{T}_j} Y_t/X_{tj}$ and $n_j$.
Since each episode $t$ has at most $q$ non-zero entries in $X_t$, the
total number of operations required to form all plug-in estimates
\eqref{eq:epc-estimator} is
\begin{equation}
  \mathrm{Time}_{\mathrm{EPC}}
  \;=\;
  O(T q)
  \;\le\;
  O(T m).
  \label{eq:time-epc}
\end{equation}
Thus EPC is computationally cheap on the estimation side, but, as
shown in Proposition~\ref{prop:epc-gap}, it has inferior welfare
guarantees in the sparse high-dimensional regime.

\paragraph{DLS.}
For DLS, solving the normal equations
$X^\top X \hat{\theta}^{\mathrm{DLS}} = X^\top Y$ with a direct method
(e.g.\ Cholesky factorisation) requires forming $X^\top X$ at cost
$O(T m^2)$ and factorising it at cost $O(m^3)$, yielding
\begin{equation}
  \mathrm{Time}_{\mathrm{DLS}}
  \;=\;
  O(T m^2 + m^3).
  \label{eq:time-dls}
\end{equation}
Iterative methods such as conjugate gradients can reduce the factor
from $m^3$ to $O(T m I_{\mathrm{DLS}})$, where $I_{\mathrm{DLS}}$ is
the iteration count, but the dependence on $m$ remains quadratic in
the worst case.

\paragraph{Deterministic CSG stage.}
All schemes considered (BGCP, $\ell_1$, EPC, DLS) ultimately invoke a
deterministic CSG solver on some coalition set:
\begin{itemize}
  \item BGCP: on $\widehat{\mathcal{C}}^{\mathrm{BGCP}}
               = \{C_j : j \in S^{\mathrm{BGCP}}\}$, of size
        $|S^{\mathrm{BGCP}}| \approx K$.
  \item $\ell_1$: potentially on a pruned set
        $\widehat{\mathcal{C}}^{\ell_1}$ of size $\tilde{K}$,
        obtained by thresholding $\hat{\theta}^{\ell_1}$.
  \item EPC, DLS: on the full coalition set $\mathcal{C}$ of size $m$
        (unless additional pruning is applied).
\end{itemize}
Exact CSG algorithms based on dynamic programming or branch-and-bound
\cite{Sandholm1999Coalition,Rahwan2015CSGSurvey} have worst-case
complexity exponential in the number of agents $|N|$; our framework
does not change this worst-case dependence.
However, by reducing the number of \emph{relevant} coalitions from $m$
to approximately $K$ (or $\tilde{K}$), BGCP and the $\ell_1$ pipeline
can substantially reduce the state space and pruning effort of the CSG
solver in practice.

\paragraph{Summary of complexity trade-offs.}
Ignoring the CSG stage, EPC has the lowest estimation cost
($O(T q)$), followed by BGCP and the $\ell_1$ pipeline (both
$O(T m)$ up to iteration factors) and DLS ($O(T m^2 + m^3)$).
When the combinatorial CSG cost is taken into account, the picture
changes: methods that aggressively reduce the coalition set (BGCP and,
to a lesser extent, the $\ell_1$ pipeline) can offset their higher
estimation cost by shrinking the exponential component of the CSG
solver.
EPC and DLS are cheap to fit but pass a large coalition universe to
the CSG stage, potentially increasing total runtime.

\subsection{Summary: When to Use Which Probabilistic CSG Scheme?}
\label{subsec:comp-summary}

The theoretical and computational comparison in this section can be
summarised as follows.

\begin{itemize}
  \item In the \emph{sparse high-dimensional regime}
        $K \ll m$ and $T \asymp K \log m$, our probabilistic CSG
        pipeline is theoretically superior.
        BGCP achieves exact welfare optimality with high probability,
        and the $\ell_1$ pipeline enjoys a non-trivial welfare gap
        bound of order
        $\sigma K \sqrt{(\log m)/T} / \kappa_\Sigma^2$.
        In the same regime, EPC exhibits a welfare gap of order
        $\sigma \sqrt{K/T}$ and cannot exploit sparsity, while DLS is
        not applicable as $T < m$.
        Computationally, BGCP and the $\ell_1$ pipeline have
        estimation cost $O(T m)$ but can reduce the coalition set from
        $m$ to approximately $K$ before invoking the CSG solver.

  \item In the \emph{dense, data-rich regime} where $K$ grows
        proportionally to $m$ and $T \gg m$, dense probabilistic CSG
        schemes such as DLS (or dense Bayesian linear regression
        combined with CSG) achieve the parametric rate
        $\sigma \sqrt{m/T}$ and do not pay a sparsity penalty.
        In this regime, the structural advantages of BGCP and the
        $\ell_1$ pipeline largely disappear, and the latter may suffer
        from unnecessary sparsity bias and conservative tuning.
        Since $T$ is large, the quadratic cost in $m$ of DLS can be
        amortised, and the combinatorial CSG stage sees a coalition
        universe that is, in any case, fully used.

  \item The choice between BGCP and the $\ell_1$ pipeline itself
        reflects a trade-off between combinatorial and convex
        optimisation.
        BGCP yields exact support and welfare recovery under strong
        coherence and minimum signal conditions but relies on a greedy
        non-convex search with cost $O(T m K)$.
        The $\ell_1$ pipeline is convex and robust to modest
        violations of the sparsity assumptions but attains only
        approximate welfare optimality, with estimation cost
        $O(T m I_{\ell_1})$ and a welfare gap that decays at rate
        $K \sqrt{(\log m)/T}$.

  \item EPC and DLS occupy the opposite corner: they are simple and
        fast to fit (EPC) or statistically optimal in dense regimes
        (DLS) but either ignore sparsity or require $T \gtrsim m$.
        They pass a large coalition set to the CSG solver and offer no
        guarantees in the sparse high-dimensional regime.
\end{itemize}

Overall, the theory developed in this paper provides a principled
answer to the question ``why probabilistic CSG via sparse episodic
models?'': when only a few coalitions are truly profitable and data is
limited, exploiting sparsity yields provably better welfare guarantees
and more favourable end-to-end complexity than natural episodic
plug-in or dense least-squares alternatives, while classical dense
schemes remain appropriate when values are genuinely dense and
episodes are plentiful.

\bibliographystyle{IEEEtran}
\bibliography{refs}

\end{document}